\definecolor{myrefcolor}{rgb}{0.067,0.5,0.5}
\definecolor{myurlcolor}{rgb}{0.1,0,0.9}
\newtheorem*{theorem*}{Theorem}
\newtheorem*{corollary*}{Corollary}
\newtheorem*{definition*}{Definition}
\newtheorem{theorem}{Theorem}
\newtheorem{lemma}{Lemma}
\newtheorem{proposition}{Proposition}
\newtheorem{definition}{Definition}
\newtheorem{corollary}{Corollary}
\theoremstyle{remark}
\newtheorem{remark}{Remark}
\DeclareMathOperator{\stab}{STAB}
\DeclareMathOperator{\de}{d\!}
\DeclareMathOperator{\sym}{sym}
\DeclareMathOperator{\cl}{Cl}
\DeclareMathOperator{\diag}{diag}
 \newcommand{\even}{\mathrm{Even}(\mathbb{F}_{2}^{k\times m})}
\newcommand{\symf}{\mathrm{Sym}(\mathbb{F}_2^{m\times m})}
\newcommand{\haar}[0]{\operatorname{Haar}}
\definecolor{airforceblue}{rgb}{0.36, 0.54, 0.66}
\newcommand{\be}{\begin{equation}\begin{aligned}\hspace{0pt}}
\newcommand{\bbb}{\begin{equation*}\begin{aligned}}
\newcommand{\ee}{\end{aligned}\end{equation}}
\newcommand{\eee}{\end{aligned}\end{equation*}}
\renewcommand{\subsubsection}{\@startsection{subsubsection}{3}{0pt}%
  {1.5ex plus 1ex minus .2ex}%
  {1ex plus .2ex}%
  {\bfseries}}
\let\origcontentsline\addcontentsline
\newcommand\stoptoc{\let\addcontentsline\nocontentsline}
\newcommand\resumetoc{\let\addcontentsline\origcontentsline}
\begin{document}

\title{Operational interpretation of the Stabilizer Entropy}

\author{Lennart Bittel}
\thanks{Contributed equally. \{\href{mailto:lorenzo.leone@fu-berlin.de}{lorenzo.leone}, \href{mailto:l.bittel@fu-berlin.de}{l.bittel}\}@fu-berlin.de}
\affiliation{Dahlem Center for Complex Quantum Systems, Freie Universit\"at Berlin, 14195 Berlin, Germany}
\author{Lorenzo Leone}
\thanks{Contributed equally. \{\href{mailto:lorenzo.leone@fu-berlin.de}{lorenzo.leone}, \href{mailto:l.bittel@fu-berlin.de}{l.bittel}\}@fu-berlin.de}
\affiliation{Dipartimento di Ingegneria Industriale, Università degli Studi di Salerno, Via Giovanni Paolo II, 132, 84084 Fisciano (SA), Italy}
\affiliation{Dahlem Center for Complex Quantum Systems, Freie Universit\"at Berlin, 14195 Berlin, Germany}

\begin{abstract}
   Magic-state resource theory is a fundamental framework with far-reaching applications in quantum error correction and the classical simulation of quantum systems. Recent advances have significantly deepened our understanding of magic as a resource across diverse domains, including many-body physics, nuclear and particle physics, and quantum chemistry. Central to this progress is the stabilizer R\'enyi entropy, a computable and experimentally accessible magic monotone. Despite its widespread adoption, a rigorous operational interpretation of the stabilizer entropy has remained an open problem. In this work, we provide such an interpretation in the context of quantum property testing. By showing that the stabilizer entropy is the most robust measurable magic monotone, we demonstrate that the Clifford orbit of a quantum state becomes exponentially indistinguishable from Haar-random states, at a rate governed by the stabilizer entropy $M_{\alpha}(\psi)$ and the number of available copies. This implies that the Clifford orbit forms an approximate state $k$-design, with an approximation error $\exp(-\Theta (M_{\alpha}(\psi)))$ for $\alpha\ge2$. Conversely, we establish that the optimal probability of distinguishing a given quantum state from the set of stabilizer states is also governed by its stabilizer entropy. These results reveal that the stabilizer entropy quantitatively characterizes the transition from stabilizer states to universal quantum states, thereby offering a comprehensive operational perspective of the stabilizer entropy as a quantum resource.
\end{abstract}

\maketitle

\stoptoc
\section{Introduction}

Magic-state resource theory emerged from the groundbreaking work of Bravyi and Kitaev~\cite{bravyi_universal_2005}, who showed that by supplementing stabilizer operations—Clifford gates and measurements—with a nonstabilizer (“magic”) state, one can effectively implement non-Clifford gates. 
The core idea is that while stabilizer operations are relatively easy to make fault-tolerant—often through transversal implementation in error-correcting codes~\cite{campbell_bound_2010}—the same does not hold for non-Clifford gates, which pose significant fault-tolerance challenges~\cite{PhysRevLett.102.110502}. Thus, within the framework of magic-state resource theory, stabilizer operations and states are treated as freely available, whereas nonstabilizer states and non-Clifford operations represent valuable resources. 

Beside fault-tolerant architectures, magic states also play a key role in understanding the classical simulability of quantum circuits. Circuits restricted to stabilizer operations can be simulated efficiently on a classical computer~\cite{gottesman_heisenberg_1998}, but introducing nonstabilizer states dramatically increases simulation complexity~\cite{aaronson_improved_2004}. Therefore, magic-state resource theory also provides a natural way to quantify how “non-classical” a quantum state is in terms of its potential to outperform classical computation.

Given the profound insights that entanglement has brought to the study of quantum systems~\cite{RevModPhys.80.517,Kitaev_2006,PhysRevA.71.022315}, recent years have seen a surge of interest in understanding the role of the resource theory of magic in physical quantum systems. This has yielded a wealth of results across diverse fields, including quantum circuit dynamics~\cite{scocco2025risefallnonstabilizernessrandom,magni2025quantumcomplexitychaosmanyqudit,magni2025anticoncentrationcliffordcircuitsbeyond,mittal2025quantummagicdiscretetimequantum,varikuti2025impactcliffordoperationsnonstabilizing}, condensed matter and many-body physics~\cite{oliviero_magicstate_2022,haug_quantifying_2023,PhysRevA.110.022436,Passarelli2025chaosmagicin,rattacaso_stabilizer_2023,tirrito2024anticoncentrationmagicspreadingergodic,PhysRevB.111.054301,tirrito2025universalspreadingnonstabilizernessquantum,tirrito2025magicphasetransitionsmonitored,russomanno2025efficientevaluationnonstabilizernessunitary,odavić2025stabilizerentropynonintegrablequantum,d7tm-9hkp,sticlet2025nonstabilizernessopenxxzspin,viscardi2025interplayentanglementstructuresstabilizer,jasser2025stabilizerentropyentanglementcomplexity,collura2025quantummagicfermionicgaussian,PhysRevB.111.L081102}, nuclear~\cite{sarkis2025moleculesmagicalnonstabilizernessmolecular,PhysRevC.111.034317,robin2024magicnuclearhypernuclearforces} and particle physics~\cite{busoni2025emergentsymmetrytwohiggsdoubletmodel,aoude2025probingnewphysicssector,illa2025dynamicallocaltadpoleimprovementquantum,liu2025quantummagicquantumelectrodynamics,PhysRevD.110.116016,PhysRevResearch.7.023228}, quantum chemistry~\cite{Gu_2024}, and even conformal field theories~\cite{hoshino2025stabilizerrenyientropyconformal,10.21468/SciPostPhys.18.5.165,hoshino2025stabilizerrenyientropyencodes}. This substantial body of numerical and analytical work has been made possible by the development of a new magic monotone, the \textit{stabilizer Rényi entropy}~\cite{leone_stabilizer_2022}, which is a real-valued function that rigorously quantifies the amount of magic in a quantum state~\cite{Leone_2024} while also being computationally tractable and experimentally accessible~\cite{oliviero2022MeasuringMagicQuantum}. Thanks to its practical advantage, stabilizer entropies have enabled studies of magic in systems with up to $\sim 100$ qubits~\cite{PhysRevLett.131.180401,PhysRevLett.133.010601}, far surpassing the few-body limitations imposed by the intractability of previously known magic measures~\cite{Heinrich2019robustnessofmagic}.

Despite these advances, this growing body of research—while providing valuable insights into, for example, phase transitions in many-body systems~\cite{Tarabunga2024criticalbehaviorsof} and the structure of magic in nuclear states~\cite{robin2024magicnuclearhypernuclearforces}—still faces a fundamental limitation: a magic monotone, by itself, only provides bounds on what is possible with resource conversion under free (stabilizer) operations. However, as is often the case in quantum resource theory, resource measures can be endowed with {\em operational interpretations} that go beyond resource conversion, thereby providing additional significance to the values attained by the resource measure. Notable examples of resource measures with well-understood operational meanings include the stabilizer fidelity, which quantifies the overlap with the closest stabilizer state; the relative entropy of magic, which captures the optimal type II error in asymmetric hypothesis testing~\cite{chitambar_quantum_2019}; and the robustness of magic, which characterizes resilience against mixing with stabilizer states~\cite{liu_manybody_2022}. However, all of these established measures suffer from being either computationally intractable or impractical to measure experimentally. In contrast, the stabilizer entropy, which is both efficiently computable and experimentally accessible, has been extensively studied across a wide range of quantum systems. This raises a pressing question:

{\em What is the operational interpretation of stabilizer entropies?}

Answering this question is essential for giving a precise meaning to the estimation, measurement, and study of stabilizer Rényi entropies in quantum systems. {Although the case of the stabilizer Rényi entropy with index $\alpha = 1/2$ (the so-called \textit{stabilizer norm}) has been addressed in the context of classical simulation via Pauli propagation~\cite{Rall_2019}, stabilizer entropies with $\alpha < 2$ are not monotones in magic-state resource theory~\cite{haug_stabilizer_2023,leone2024stabilizer}. Consequently, it does not provide an operational interpretation of stabilizer entropies that is useful within this framework.}

In this paper, we address this open problem and establish the rigorous operational interpretation of ($\alpha\ge 2$) stabilizer entropies as a quantum resource. In simple terms, we show that higher stabilizer entropy implies that the state is harder to distinguish from a completely random quantum state, yet easier to distinguish from a stabilizer state (see \cref{fig:enter-label}). Stabilizer entropies thus precisely characterize the crossover from simple, free stabilizer states to complex, fully universal quantum states. 

Similarly to the relative entropy of resource~\cite{chitambar_quantum_2019}, this operational meaning is made rigorous in the context of property testing, in which a the task is to distinguish a state drawn uniformly from one of two ensembles of states. The stabilizer R\'enyi entropy is related to the optimal probability of success of the property testing task in two distinct scenarios. The first part of this interpretation—namely, that an increasing stabilizer entropy makes it harder to distinguish the state from a universal state—follows from analyzing the task of distinguishing the \textit{Clifford orbit} of a given state from the Haar random ensemble
; the second part of the interpretation is formalized within the framework of stabilizer testing, where a the task is to distinguish a given state from the set of stabilizer states. The success probability of both tasks is governed by the the stabilizer entropy. To establish this result, we  characterize measurable monotones in the resource theory, and show their relation to the stabilizer entropy.

In the next section, we provide an overview of the main results of the paper.

\subsection{Overview of the results}

The special feature of the stabilizer entropy $M_{\alpha}$ with R\'enyi index $\alpha$ is that it can be expressed in terms of the expectation value of a Hermitian operator $\Omega_{2\alpha}\coloneqq\frac{1}{d}\sum_{P\in\mathbb{P}_n}P^{\otimes 2\alpha}$, {with $\mathbb{P}_n$ being the multi-qubit Pauli group}, as 
\begin{align}
M_{\alpha} = \frac{1}{1-\alpha}\log \tr(\Omega_{2\alpha} \psi^{\otimes 2\alpha}),
\end{align}
where $P_{2\alpha} \coloneqq \tr(\Omega_{2\alpha} \psi^{\otimes 2\alpha})$ is called the \emph{stabilizer purity} and $\alpha$ is a positive integer. As R\'enyi entropies of entanglement, this property makes stabilizer entropies both experimentally measurable and efficiently computable. The operator $\Omega_{2\alpha}$ belongs to the $(k = 2\alpha)$-order commutant of the Clifford group, meaning it is invariant under the adjoint action of $k$-fold tensor powers of Clifford unitaries~\cite{bittel2025completetheorycliffordcommutant,gross_schur_2021}. However, this invariance is not unique to stabilizer purities. Restating the result of Ref.~\cite{bittel2025completetheorycliffordcommutant}, in \cref{lem:measurableinthecommutant}, we show that all measurable magic monotones lie within the Clifford commutant. This insight allows us to generalize stabilizer purities by considering expectation values of operators that form a basis for the commutant and factorize over qubits. As shown in \cref{lemma:additivegeneralizedpurities}, this factorization ensures that the resulting \emph{generalized stabilizer purities} are the only magic monotones that are multiplicative—an important property that enables bounds on catalyst-assisted resource conversion via stabilizer operations~\cite{fang2024surpassingfundamentallimitsdistillation}. 

In \cref{th1}, we prove the main technical result of the paper, which underpins the operational interpretation discussed later: all generalized stabilizer purities are upper bounded by the stabilizer purity with R\'enyi index $\alpha = 2$ and, as a corollary, $P_{4}$ serves as an upper bound for all measurable magic monotones. As we explain below, this result makes stabilizer purities—and therefore stabilizer entropies—arguably the most \textit{robust} among measurable monotones.

After characterizing measurable magic monotones, in \cref{sec:operationalinterpretation} we endow the stabilizer entropy with a rigorous operational meaning in the context of property testing. We begin by considering the Clifford orbit $\mathcal{E}_{\psi}$ of a pure quantum state $\ket{\psi}$, over which the stabilizer entropy remains invariant. We then pose the question of determining the optimal probability of distinguishing a state drawn uniformly from the ensemble $\mathcal{E}_{\psi}$ from one drawn uniformly at random according to the Haar measure. The main result of the paper, stated in \cref{th4}, demonstrates that given $k=O(1)$ copies of a state drawn from either ensemble, the optimal probability $q_{\text{succ}}^{(k)}$ of distinguishing them satisfies
\be\label{eq1}
q_{\text{succ}}^{(k)} = \frac{1}{2} + \exp(-\Theta (M_{2}(\psi))),
\ee
indicating that it is exponentially close (in the stabilizer entropy) to the worst-case random guessing probability of $\frac{1}{2}$. In other words, the Clifford orbit $\mathcal{E}_{\psi}$ constitutes a $\varepsilon$-state $k$-design provided the stabilizer entropy is sufficiently large:
\begin{align}
\Omega(\log\varepsilon^{-1})\le M_{2}(\ket{\psi})\le O(k^2+\log\varepsilon^{-1})\,.
\end{align}
We note that this operational interpretation differs from the usual one in resource theories: while we provide tight matching lower and upper bounds in terms of scaling with the stabilizer entropy, the associated constants may be loose, which justifies and motivates the use of the big-$O$ notation.

Next, we turn to the complementary task: distinguishing a given state $\ket{\psi}$ (or, equivalently, its Clifford orbit) from the set of stabilizer states. We identify $k = 6$ copies as the minimal number sufficient to accomplish this task, with the optimal success probability expressed as a function of the stabilizer entropy:
\begin{align}\label{eq:psuccoverview}
p_{\text{succ}}^{(6)} = \frac{1}{2} + \frac{1}{4}\big(1 - 2^{-2M_3(\psi)}\big),
\end{align}
where the optimal property-testing algorithm—i.e., the one achieving the maximal success probability—corresponds to measuring the POVM element associated with the $\alpha=3$ stabilizer purity, i.e. measuring the stabilizer entropy $M_3$. This success probability can be amplified by using additional copies, with lower and upper bounds determined by the stabilizer entropy:
\be\label{eq2}
 1 - \left(\frac{1 + 2^{-2^{M_3}}}{2}\right)^{\lfloor k/6 \rfloor}\le p_{\text{succ}}^{(k)} \leq \frac{1}{2} + \frac{\sqrt{1 - 2^{-2\frac{k}{C} M_{3}(\psi)}}}{2}\,.
\ee
for some constant $C$~\cite{bao2024toleranttestingstabilizerstates}.

It is important to stress that stabilizer entropies $M_{\alpha}$ with $\alpha\ge 2$ are all closely related, see \cref{hierarchystabilizerpurities}. Consequently, taken together, \cref{eq1,eq2} rigorously establish the two-sided operational meaning of all the stabilizer entropies with $\alpha\ge 2$: the larger $M_{\alpha}(\psi)$ is, the more closely the Clifford orbit of $\ket{\psi}$ approximates Haar-random states, and the more distinguishable it becomes from the set of stabilizer states. However, the tightest bounds for $p_{\text{succ}},q_{\text{succ}}$ are obtained for $M_2$, $M_3$ respectively.

\subsection{Relation to prior work and main contributions}
In this section, we clarify the main contributions of the present manuscript in relation to the existing literature. This comparison is particularly relevant because several of the techniques employed here build directly on methods previously developed by the same authors in an earlier work on the structure of the Clifford commutant~\cite{bittel2025completetheorycliffordcommutant}. In that work, we developed a comprehensive theory of the commutant of the Clifford group in the language of Pauli monomials and outlined several applications that foreshadowed some of the present results. However, those applications were intended primarily as preliminary illustrations. In the current manuscript, we substantially extend, strengthen, and rigorously develop these ideas into a unified and fully elaborated picture.

In particular, Ref.~\cite{bittel2025completetheorycliffordcommutant} established that every measurable magic monotone lies in the Clifford commutant. This result, restated in \cref{lem:measurableinthecommutant}, serves as the starting point for one of the main themes of this manuscript: the intimate relation between magic monotones and the Clifford commutant. Here, we extend, refine, and further develop this idea by showing that imposing the additional requirement of multiplicativity on magic monotones further restricts the class of measurable monotones to (generalized) stabilizer purities alone. Moreover, building on this strengthened connection, and as a corollary of the main technical contribution of this work, \cref{th1}, we prove that all measurable monotones are upper bounded by the stabilizer purity, thereby establishing a hierarchy in which the stabilizer entropy~\cite{leone_stabilizer_2022} plays a distinguished role. These developments are discussed in detail in \cref{sec:measurablemonotones}.

The other main theme of this manuscript, from which it takes its title, is to establish an operational meaning for the stabilizer entropy. As outlined above, we do so by developing a fully two-sided operational interpretation. The origin of this theme can also be traced back to a result presented in Ref.~\cite{bittel2025completetheorycliffordcommutant}. In particular, the “second side’’ of the operational interpretation of stabilizer entropy appearing in \cref{eq:psuccoverview} was first established there (see Section VII C). In the present work, we not only refine and strengthen that result, but also integrate it with the main technical contributions of this manuscript, \cref{th4,th1}, which provide the other side of such operational meaning (\cref{eq1}). Taken together, these developments yield a complete two-sided operational interpretation of stabilizer entropy. These results are discussed in detail in \cref{sec:operationalinterpretation}.


\subsection{Structure of the paper}

The remainder of the paper is structured as follows: in \cref{sec:magicstateresource}, we introduce the preliminaries of magic-state resource theory, stabilizer entropies, and their generalizations. In \cref{sec:measurablemonotones}, we characterize measurable magic monotones and establish their bounds in relation to stabilizer purities. In \cref{sec:operationalinterpretation}, we prove the operational interpretation of stabilizer entropies. In \cref{sec:conslusion}, we conclude with a discussion and present a list of open questions for future research. In \cref{app:preliminaries}, we develop the essential tools and establish the foundational lemmas, which are then applied in \cref{sec:proofs} to rigorously derive the main theorems presented in the body of the paper.

\begin{figure}
    \centering
    \includegraphics[width=1.05
\linewidth]{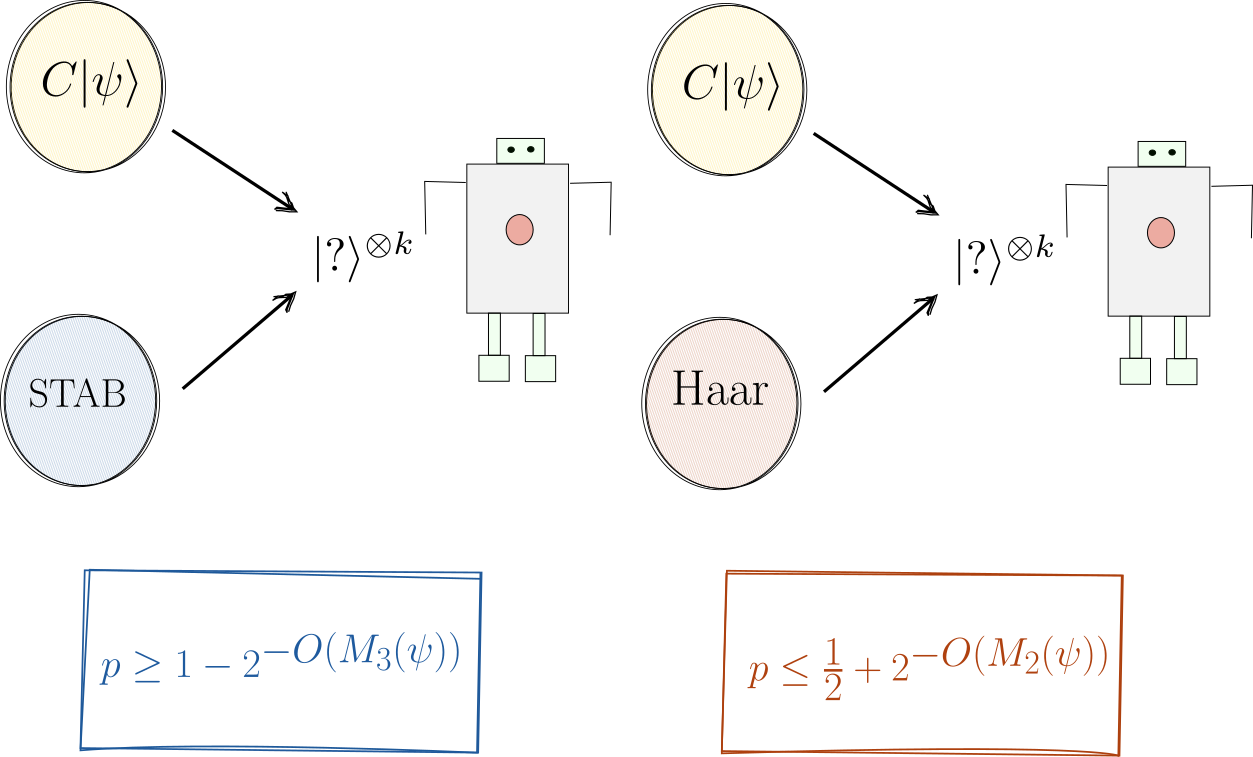}
    \caption{Stabilizer entropies have a double-edge operational interpretation. In a symmetric property testing scenario, the probability of distinguishing the Clifford orbit of the state $\ket{\psi}$ from the set of stabilizer state increases exponentially with $M_3$ (left). Conversely, the probability of distinguishing the Clifford orbit of the state $\ket{\psi}$ from Haar random states converges exponentially to the probability of random guessing with $M_2$ (right). }
    \label{fig:enter-label}
\end{figure}

\section{Magic-state resource theory and the Clifford commutant}\label{sec:magicstateresource}

Throughout the paper we consider the Hilbert space $\mathcal{H}_n$ of $n$ qubits and denote as $d_n=2^n$ its dimension. A natural operator basis is given by Pauli operators $P\in\mathbb{P}_n$, i.e. $n$-fold tensor products of ordinary Pauli matrices $I,X,Y,Z$. The subgroup of unitary matrices that maps Pauli operators to Pauli operators is known as the Clifford group, denoted as $\mathcal{C}_n$. Stabilizer states, denoted as $\ket{\sigma}$ in this work, are pure states obtained from $\ket{0}^{\otimes n}$ with the action of unitary Clifford operators. The set of stabilizer states is denoted as $\stab$, while the convex hull of stabilizer states as $\stab_c \coloneqq \{ \sum_i p_i \ketbra{\sigma_i} : p_i \ge 0,\, \sum_i p_i = 1\,,\,\ket{\sigma_i}\in\stab \}$. Throughout the work, we will use $\psi,\phi$ to denote pure states, while $\rho$ to denote (possibly mixed) general states.

\subsection{Magic state resource theory and magic monotones}\label{sec:magicstateresourcetheory}

Magic-state resource theory is defined by the set of free operations known as stabilizer protocols, denoted by $\mathcal{S}$. These protocols can be expressed as arbitrary combinations of the following elementary operations: (i) Clifford unitaries, (ii) partial trace, (iii) measurements in the computational basis, (iv) composition with auxiliary qubits initialized in the state $\ket{0}$, and conditioning on (v) measurement outcomes and (vi) classical randomness. Stabilizer protocols $\mathcal{S}$ leave invariant the convex hull of stabilizer states $\stab_c$. Given this set of free operations, one then defines monotones for the magic state resource theory.

\begin{definition}[Stabilizer Monotones]\label{def:stabilizermonotone} A stabilizer monotone $\mathcal{M}$ is a real-valued function for all $n$ qubit systems (or collection thereof) such that (i) $\mathcal{M}(\rho)=0$ if and only if $\rho\in\stab_c$; (ii) $\mathcal{M}$ is nonincreasing under free operations $\mathcal{M}(\mathcal{E}(\rho))\le\mathcal{M}(\rho)$ for any $\mathcal{E}\in\mathcal{S}$. A function $\mathcal{M}$ is pure-state stabilizer monotone if condition (i) is obeyed for pure states and (ii) holds for any pair $(\ket{\psi},\mathcal{E})$, obeying $\mathcal{E}(\ketbra{\psi})=\ketbra{\phi}$. 
\end{definition}
Beside the monotonicity condition, stabilizer monotones can be endowed with additional useful properties. One such is \textit{additivity}. Namely, $\mathcal{M}$ is additive iff for any $\rho,\rho'$ then $\mathcal{M}(\rho\otimes\rho')=\mathcal{M}(\rho)+\mathcal{M}(\rho')$. If fact, the additivity is particularly useful to make use of resource monotones in one of the central tasks of quantum resource theory, which is resource distillation~\cite{Wang_2020,Beverland_2020}. In particular, starting from a quantum state $\ket{\psi}$ the task is to perform free operations and transform it in many copies of another, more fundamental, state. While in the case of entanglement resource theory such state is the Bell pair, in magic state resource theory is often the $T$-state, $\ket{T}\propto \ket{0}+e^{i\pi/4}\ket{1}$, as it allows universal quantum computation with stabilizer operations via magic-state injection~\cite{bravyi_magicstate_2012}. One of the main usefulness of resource monotones is to determine the ultimate limitations of such conversions $\ket{\psi}^{\otimes N_\psi}\mapsto \ket{T}^{\otimes N_T}$ performed via stabilizer operations and, possibly catalyst states, i.e. resource states used to enhance the distillation that are returned at the end of the computation~\cite{Beverland_2020}. Provided any additive resource monotone $\mathcal{M}$, one can bound the distillation rate $\frac{N_T}{N_\psi}$ which quantify the number of $T$-state extractable per copy of the initial state $\ket{\psi}$ with any possibly catalyst-enhanced stabilizer protocol. In particular, a necessary condition for such conversion to be performed with free stabilizer operations is that $\mathcal{M}(\ket{\psi}^{\otimes N_\psi})\ge \mathcal{M}(\ket{T}^{\otimes N_T})$, which immediately returns the following bound on the distillation rate
\be\label{eq:additivitybound}
\frac{N_T}{N_\psi}\le \frac{\mathcal{M}(\ket{\psi})}{\mathcal{M}(\ket{T})}\,.
\ee

In the next section, we introduce a particularly useful family of additive magic monotones: stabilizer R\'enyi entropies.

\subsection{Stabilizer R\'enyi entropies}
Stabilizer R\'enyi entropies constitute a family of stabilizer monotones defined through the entropy of the Pauli distribution.
\begin{definition}[Stabilizer entropies~\cite{leone_stabilizer_2022}]\label{def:stabilizerentropy} Let $\ket{\psi}$ be a pure quantum state and $\mathbb{N}\ni\alpha\ge2$. The $\alpha$-stabilizer entropy is defined as
\be
M_{\alpha}(\ket{\psi})&\coloneqq\frac{1}{1-\alpha}\log_2P_{2\alpha}(\ket{\psi}),\\ P_{2\alpha}(\ket{\psi})&\coloneqq\frac{1}{d}\sum_{P\in\mathbb{P}_n}\langle\psi|P|\psi\rangle^{2\alpha}\,.
\ee
where $P_{2\alpha}$ is referred to as stabilizer purities. Stabilizer purities are all tightly related through the following inequalities
\be\label{hierarchystabilizerpurities}
P_{2\alpha}^{\frac{\alpha}{\alpha-1}}\le P_{2(\alpha+1)}\le P_{2\alpha}\,.
\ee
{Throughout, we use stabilizer Rényi entropies and stabilizer entropies interchangeably for $\alpha \ge 2$.}
\end{definition}
Stabilizer entropies are pure-state stabilizer monotones~\cite{Leone_2024}, magic witnesses for general mixed states~\cite{haug2025efficientwitnessingtestingmagic}, and can be extended to general stabilizer monotones (\cref{def:stabilizermonotone}) via the convex roof construction~\cite{Leone_2024}. They exhibit several useful properties, which are summarized in Appendix~A of Ref.~\cite{Leone_2024}. One such property is additivity under tensor product, which is crucial for bounding resource conversion rates, as explained above, see \cref{eq:additivitybound}.

Besides, stabilizer entropies possess a fundamental feature that distinguish this family from other known stabilizer monotones: they are \textit{experimentally measurable}. In particular, stabilizer purities $P_{2\alpha}$ can be recast as a expectation value of the hermitian operator $\Omega_{2\alpha}\coloneqq\frac{1}{d}\sum_{P\in\mathbb{P}_n}P^{\otimes 2\alpha}$ on $k=2\alpha$ copies of the state $\psi$. For odd $\alpha$ the operator $\Omega_{2\alpha}$ is unitary and factorize over qubits, and therefore can be measured up to an additive error $\varepsilon$ with $O(\varepsilon^{-2})$ many local measurements. Conversely, for even $\alpha$ the operator $\Omega_{2\alpha}$ is proportional to a projector with proportionality factor $2^n$, a fact that hits its efficient unbiased measurement: the following lemma shows that measuring even stabilizer purities \textit{unbiasedly} requires exponential sample complexity.
\begin{lemma}\label{lemma:nounbiasedmeasurement1} All unbiased estimators on less than $n-1$ copies aiming at measuring stabilizer purities $P_{2\alpha}$ with $\alpha$ even up to error $\varepsilon$ requires $\Omega(d^2\varepsilon^{-2})$ sample access to $\psi$.
\end{lemma}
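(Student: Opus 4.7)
The plan is to combine a symmetry reduction to the Clifford commutant with an operator-norm lower bound obtained via a stabilizer-versus-Haar distinguishability argument; the projector structure $\Omega_{2\alpha}=d\,\Pi_{2\alpha}$ that holds for even $\alpha$ is the ingredient that powers the exponential bound.

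First I would symmetrize. Since $P_{2\alpha}(\psi)$ is Clifford invariant, any Hermitian unbiased estimator $\hat O$ on $k$ copies can be replaced by its Clifford twirl $\mathbb{E}_{C\in\mathcal C_n}[C^{\otimes k}\hat O(C^{\dagger})^{\otimes k}]$ without changing $\tr(\hat O\,\psi^{\otimes k})$ for any $\psi$, and this can only decrease the variance by convexity. Without loss of generality, $\hat O$ therefore lies in the $k$-th order Clifford commutant, whose basis for $k<n-1$ is described in \cite{bittel2025completetheorycliffordcommutant,gross_schurweyl_2019}.

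Next I would lower bound $\|\hat O\|_\infty$ through the stabilizer-vs-Haar gap. Let $\rho_{\stab}=\mathbb{E}_\sigma\sigma^{\otimes k}$ and $\rho_{\haar}=\mathbb{E}_{\psi\sim\haar}\psi^{\otimes k}$. Unbiasedness gives $\tr(\hat O\,\rho_{\stab})=\mathbb{E}_\sigma P_{2\alpha}(\sigma)=1$ and $\tr(\hat O\,\rho_{\haar})=\mathbb{E}_{\psi\sim\haar}P_{2\alpha}(\psi)=O(1/d)$, so $|\tr(\hat O(\rho_{\stab}-\rho_{\haar}))|\ge 1-O(1/d)$. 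By Hölder's inequality, $\|\hat O\|_\infty\ge \Omega(1)/\|\rho_{\stab}-\rho_{\haar}\|_1$. For $k<n-1$ the stabilizer ensemble deviates from being a $k$-design by at most $\Theta(1/d)$ in trace distance, a statement that follows directly from the commutant description, yielding $\|\hat O\|_\infty=\Omega(d)$. I would then convert this into a variance lower bound on a tensor-power state: the $\Omega(d)$-scaling of $\hat O$ originates from its component along (the Clifford-symmetrized lift of) $\Omega_{2\alpha}=d\,\Pi_{2\alpha}$, and since $\sigma^{\otimes 2\alpha}$ for a stabilizer state $\sigma$ has nontrivial support on $\Pi_{2\alpha}$, this $d$-factor transfers to $\tr(\hat O^{2}\,\sigma^{\otimes k})=\Omega(d^{2})$; combined with $\tr(\hat O\,\sigma^{\otimes k})=1$, one obtains $\mathrm{Var}_\sigma(\hat O)=\Omega(d^{2})$, and Chebyshev's inequality applied to the empirical mean of $N$ independent $k$-copy measurements delivers $N=\Omega(d^{2}\varepsilon^{-2})$.

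The main obstacle will be the last step: making rigorous the transfer of the $d$-factor from $\|\hat O\|_\infty$ to the second moment $\tr(\hat O^{2}\,\sigma^{\otimes k})$ on a tensor-power state. This requires pinpointing the commutant component in which the $d$-factor of $\hat O$ must live for $k<n-1$ — namely the image of $\Omega_{2\alpha}$ padded with identities and resymmetrized — and verifying that stabilizer tensor powers probe that component with $\Omega(1)$ weight, which is precisely where the even-$\alpha$ projector identity $\Omega_{2\alpha}^{2}=d\,\Omega_{2\alpha}$ enters.
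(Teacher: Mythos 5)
Your proposal has a genuine gap, and it sits exactly where you place the weight of the argument. The H\"older step requires $\|\mathbb{E}_\sigma\sigma^{\otimes k}-\mathbb{E}_{\psi\sim\haar}\psi^{\otimes k}\|_1=O(1/d)$ for all $k<n-1$, but this is false for $k\ge 6$: measuring the POVM $\frac{\mathbb{1}\pm\Omega_6}{2}$ distinguishes the stabilizer ensemble from Haar with constant bias (this is precisely \cref{eq21212} with $M_3(\psi)=\Theta(n)$ for Haar-typical $\psi$), so the trace distance is already $1-O(1/d)$ at $k=6$. Your own argument supplies the sanity check that fails: applied verbatim to the unitary purity $P_6$, where likewise $\tr(\hat O\,\rho_{\stab})=1$ and $\tr(\hat O\,\rho_{\haar})=O(1/d)$, it would force $\|\hat O\|_\infty=\Omega(d)$, yet $\hat O=\Omega_6$ is itself a unitary unbiased estimator with $\|\Omega_6\|_\infty=1$ and $O(\varepsilon^{-2})$ sample cost. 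Since even-$\alpha$ purities live on $k=2\alpha\in\{4,8,12,\dots\}$ copies and the lemma must cover estimators on up to $n-2$ copies, your operator-norm bound is only available in the single case $\alpha=2$ with a $4$- or $5$-copy estimator; for $\alpha\ge4$, and for any estimator that pads with extra copies beyond five, the premise collapses.

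Second, even where $\|\hat O\|_\infty=\Omega(d)$ can be salvaged, the transfer to the second moment does not give what you claim. To pin the large eigenvalue to the component along $\Omega_{2\alpha}$ you need the linear-independence of $\{\Pi_{\sym}\Omega\Pi_{\sym}\}$ for $l\le n-1$ (\cref{lem:linearlyindependentinthesymmetricsubspace}), i.e.\ the same machinery the paper uses, so nothing is saved by deferring it; and once the estimator is pinned to $\hat O\propto\Omega_{2\alpha}\Pi_{\sym}$, the identities $\Omega_{2\alpha}^2=d\,\Omega_{2\alpha}$ and $[\Omega_{2\alpha},\Pi_{\sym}]=0$ give $\tr(\hat O^2\sigma^{\otimes k})=d\,\tr(\hat O\,\sigma^{\otimes k})=d$, not $\Omega(d^2)$, so the variance route yields at best $\Omega(d\,\varepsilon^{-2})$ and falls short of the stated bound. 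The paper's proof avoids both issues by never comparing ensembles: it uses linear independence to force any unbiased estimator to be a POVM element proportional to $\Pi_{\sym}\Omega_{2\alpha}\Pi_{\sym}$, and then the normalization $\Pi\le\mathbb{1}$ together with $\|\Pi_{\sym}\Omega_{2\alpha}\Pi_{\sym}\|_\infty=d$ forces the accessible outcome probability to be $\tr(\Omega_{2\alpha}\psi^{\otimes k})/d$, so the target must be resolved at the rescaled additive accuracy $\varepsilon/d$.
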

The above lemma effectively poses a hierarchy within the set of stabilizer purities by identifying those with odd R\'enyi index $\alpha$ to be more fundamental. Indeed, efficient experimental measurability is a desirable feature across resource theories, particularly when aiming to investigate their role in quantum systems.

\subsection{The Clifford commutant and generalized stabilizer purities}
The operators $\Omega_{2\alpha}$, whose expectation value on $k=2\alpha$ copies of given state $\psi$ defines stabilizer purities, are the so-called \textit{primitive Pauli monomials} and constitute the building blocks of the $k$-order commutant of the Clifford group, defined as the set of operators left invariant by the adjoint action $C^{\otimes k}$ for any $C\in\mathcal{C}_n$. In Ref.~\cite{bittel2025completetheorycliffordcommutant}, it has been shown that the algebra generated by primitive Pauli monomials up to order $k$ generate the Clifford commutant. 
In particular, arbitrary products of primitive Pauli monomials generate the set $\mathcal{P}$ of Pauli monomials, which constitute a basis of the Clifford commutant~\cite{bittel2025completetheorycliffordcommutant,gross_schur_2021}. 

Let $\mathbb{F}_2$ the field with addition modulo $2$, $v_{1},\ldots, v_m\in\mathbb{F}_2^{k}$ linearly independent vectors and $M\in\mathbb{F}_{2}^{m\times m}$ a symmetric matrix with null diagonal. Then a Pauli monomial of order $m\le k$ reads: 
\be
\Omega=\frac{1}{d^m}\sum_{P_1,\ldots, P_m\in\mathbb{P}_n}P_1^{\otimes v_1}\cdots P_m^{\otimes v_m}\prod_{i<j}\chi(P_i,P_j)^{M_{ij}}
\ee
where $\chi(P_i,P_j)=1$ if $[P_i,P_j]=0$ and $-1$ otherwise. It can be shown~\cite{bittel2025completetheorycliffordcommutant} that any Pauli monomial factorizes on qubits and that it can be written as a product of two Pauli monomials as $\Omega=\Omega_U\Omega_P$, where $\Omega_U$ is a unitary Pauli monomial and $\Omega_P$ is a projective Pauli monomial (i.e. proportional to a projector). Further, unitary Pauli monomials can be generated by products of $\Omega_6$, while projective Pauli monomials as a product of $\Omega_4$ and $\Omega_k$ (if $k=0\mod4$) as well as permutations~\cite{bittel2025completetheorycliffordcommutant}.

We can therefore define generalized stabilizer purities as expectation values of Pauli monomials on $k$ tensor powers of a given pure state $\ket{\psi}$~\cite{Turkeshi_2025,bittel2025completetheorycliffordcommutant}.
\begin{definition}[Generalized stabilizer purities]\label{def:generalizedstabilizerpurities} Let $\Omega$ be a Pauli monomial of order $m$ defined on $k$ tensor copies; then the corresponding generalized stabilizer purity of order $m$ is defined as
  \be
P_{\Omega}(\ket{\psi})\coloneqq|\langle\psi^{\otimes k}|\Omega|\psi^{\otimes k}\rangle|\,.
\ee  
\end{definition}

It can be shown that generalized stabilizer purities obey three key properties: (i) $P_{\Omega}(\ket{\psi})=1$ iff $\ket{\psi}\in\stab$; (ii) $P_{\Omega}$ is invariant under Clifford unitaries; (iii) $P_{\Omega}$ is multiplicative, i.e. $P_{\Omega}(\ket{\psi}\otimes \ket{\phi})=P_{\Omega}(\ket{\psi})P_{\Omega}( \ket{\phi})$. Moreover, stabilizer purities proportional to hermitian unitary operators can be measured unbiasedly and efficiently up to additive error $\varepsilon$ with $O(\varepsilon^{-2})$ many local measurements.
Conversely, in the next theorem, we show that stabilizer purities arising from projective Pauli monomials do not possess this desirable property, thereby generalizing \cref{lemma:nounbiasedmeasurement1}.
\begin{lemma}\label{lemma:nounbiasedmeasurement} All unbiased estimators on less than $n-1$ copies aiming at measuring a projective stabilizer purity of order $m$ defined on $k$ input state copies up to error $\varepsilon$ requires $\Omega(d^{2m}/k!^2\varepsilon^{-2})$ sample complexity.
\end{lemma}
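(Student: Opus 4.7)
My plan is to adapt the argument underlying Lemma~\ref{lemma:nounbiasedmeasurement1} and combine it with the operator-algebraic structure of projective Pauli monomials developed in Section~\ref{sec:magicstateresource} (following \cite{bittel2025completetheorycliffordcommutant}). The high-level strategy is a variance-to-sample-complexity conversion: any unbiased estimator $\hat{X}$ using $N$ samples that achieves additive error $\varepsilon$ on every input $\psi$ must satisfy $N = \Omega(\mathrm{Var}[\hat{X}|\psi]/\varepsilon^{2})$ by Chebyshev's inequality applied to the average of the per-sample estimators. It therefore suffices to exhibit an input state $\psi^{\star}$ on which any locally-measured unbiased estimator must have variance at least $\Omega(d^{2m}/k!^{2})$.

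The first step would be to compute the operator norm of a projective Pauli monomial $\Omega$ of order $m$ acting on $k$ copies. Using the decomposition $\Omega = \Omega_{U}\,\Omega_{P}$ recalled in Section~\ref{sec:magicstateresource}, together with the generation of the projective part $\Omega_{P}$ by $\Omega_{4}$'s, $\Omega_{k}$'s (when $k\equiv 0\pmod 4$), and permutation operators, one obtains $\|\Omega\|_{\infty}\sim d^{m}/k!$, the combinatorial factor $k!$ originating from the permutation content of the monomial. The second step would be to identify an extremal stabilizer product state $\psi^{\star}$ whose $k$-fold tensor power lies in the top eigenspace of the underlying projector, so that $\tr(\Omega\,\psi^{\star\otimes k}) = \|\Omega\|_{\infty}$. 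Any unbiased estimator must therefore produce a mean of order $d^{m}/k!$ on this input.

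The third and main step would be to show that the locality constraint---each measurement supported on fewer than $n-1$ copies---forces the estimator to have output range, and hence variance, of order $\|\Omega\|_{\infty}^{2} = d^{2m}/k!^{2}$. Concretely, I would phrase unbiasedness as the polynomial identity $\tr(O\,\psi^{\otimes N}) = \tr(\Omega\,\psi^{\otimes k})$ in $\psi$, argue that locality restricts $O$ to the subspace of operators that factorize over blocks of at most $n-2$ copies, and then show that any such decomposition matching $\Omega$ on product states must employ components whose outputs span a range of order $d^{m}/k!$---otherwise unbiasedness would fail on the extremal state $\psi^{\star}$. Applying Chebyshev to this variance lower bound completes the argument.

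The main obstacle is the third step: while for $m=1$ the argument of \cref{lemma:nounbiasedmeasurement1} exploits the simple structure of the primitive Pauli monomial, for arbitrary $m$ the projective part $\Omega_{P}$ carries a richer algebraic content---commutation characters $\chi(P_{i},P_{j})$ and permutations---which must be handled uniformly. I expect the cleanest route is to invoke the complete classification of Pauli monomials in the Clifford commutant from \cite{bittel2025completetheorycliffordcommutant}, showing that no decomposition of $\Omega$ into operators supported on fewer than $n-1$ copies can effectively reduce its operator norm below $d^{m}/k!$; the threshold $n-1$ itself will emerge from the dimension count of the symmetric subspace on which such local decompositions are forced to agree with $\Omega$.
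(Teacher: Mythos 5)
Your high-level instinct (convert a structural constraint on the estimator into a sample-complexity lower bound) is reasonable, but two of your three steps fail as stated, and the failure is not cosmetic. First, the extremal-state step cannot work: for every state $\rho$ and every Pauli monomial $\Omega$ one has $\tr(\Omega\,\rho^{\otimes k})\le 1$ (this is one of the listed properties in \cref{lem:relevantpropertiespaulimonomials}), so no tensor power $\psi^{\star\otimes k}$ lies in the top eigenspace of the projective part, and the estimator's mean is never of order $d^{m}/k!$. The difficulty the lemma captures is the opposite of what you describe: the signal is \emph{small} (bounded by $1$), while the only admissible POVM element is a copy of $\Omega$ rescaled by its large operator norm, so the quantity one actually samples is a probability of order $\tr(\Omega\psi^{\otimes k})/\|\Pi_{\sym}\Omega\Pi_{\sym}\|_{\infty}$ that must be resolved to precision $\varepsilon/\|\Pi_{\sym}\Omega\Pi_{\sym}\|_{\infty}$. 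Second, your norm bookkeeping is off: for a projective monomial of order $m$ one has $\|\Omega\|_{\infty}=d^{m}$ exactly (no $k!$), and the $1/k!$ enters only through the lower bound $\|\Pi_{\sym}\Omega\Pi_{\sym}\|_{\infty}\ge\|\Omega\|_{\infty}/k!$ after projecting onto the symmetric subspace; it does not originate from ``permutation content'' of the monomial.

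The missing idea is the forcing argument that pins the estimator down. The paper's proof observes that an unbiased estimator is a POVM element $\Pi$ whose expectation must agree with $\tr(\Omega\,\cdot\,)$ on the whole symmetric operator subspace, that WLOG $\Pi$ is Clifford-invariant and hence a combination of symmetrized Pauli monomials $\Pi_{\sym}\Omega'\Pi_{\sym}$, and that these are linearly independent precisely when the number of copies is at most $n-1$ (\cref{lem:linearlyindependentinthesymmetricsubspace}) --- this, not a locality constraint on individual measurements, is the role of the $n-1$ threshold, which you misread as a block-support condition. Linear independence forces $\Pi=\Pi_{\sym}\Omega\Pi_{\sym}/\|\Pi_{\sym}\Omega\Pi_{\sym}\|_{\infty}$, the constraint $0\le\Pi\le\mathbb{1}$ supplies the normalization, and the binomial sampling cost of estimating the resulting probability to precision $\varepsilon/\|\Pi_{\sym}\Omega\Pi_{\sym}\|_{\infty}$ gives $\Omega(\|\Pi_{\sym}\Omega\Pi_{\sym}\|_{\infty}^{2}\varepsilon^{-2})=\Omega(d^{2m}/k!^{2}\,\varepsilon^{-2})$. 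Without the uniqueness step you have no handle on arbitrary estimators, and with your (nonexistent) extremal state the variance bound you aim for cannot be established.
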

The above results hence introduce a hierarchy between generalized stabilizer purities: while unitary stabilizer purities can be measured efficiently, projective stabilizer purities require exponential sample complexity to be measured unbiasedly. However -- generalizing the findings of Ref.~\cite{haug_efficient_2023} which are only tailored to stabilizer purities $P_{2\alpha}$-- we can show that having access to the state, as well as its complex conjugate in the computational basis, all the generalized purities can be measured efficiently and unbiasedly.
\begin{theorem}\label{th:efficeintmeasurementwithtrasnpose} For any Pauli monomial $\Omega$ there exists a choice of partial transposition $\Gamma$ on copies such that $\Omega^{\Gamma}$ is a unitary operator. Hence generalized stabilizer purities corresponding to hermitian monomials can be measured up to additive error $\varepsilon$ with $O(\varepsilon^{-2})$ sample access to $\psi$ and $\psi^{*}$. 
\end{theorem}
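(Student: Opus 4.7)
The plan unfolds in three stages. First, I would use the decomposition $\Omega=\Omega_U\Omega_P$ from the structure theorem for the Clifford commutant to reduce the problem to finding a partial transposition $\Gamma$ for each projective factor $\Omega_P$, since $\Omega_U$ factorizes over qubits as a tensor product of unitary Pauli monomials and is therefore unchanged (up to local relabeling) by any partial transpose. A projective Pauli monomial is, up to copy permutations and products, generated by the primitive projective monomials $\Omega_{4k}$ with $k\ge 1$, so the crux of the existence statement lies in a case analysis for these primitives.

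For the base case $\Omega_4=\tfrac{1}{d}\sum_P P^{\otimes 4}$, the partial transposition on a single copy yields
\[
\Omega_4^{T_1} \;=\; \tfrac{1}{d}\sum_{P\in\mathbb{P}_n}(-1)^{w_Y(P)}\,P^{\otimes 4},
\]
using $P^T=(-1)^{w_Y(P)}P$, where $w_Y(P)$ counts the tensor slots occupied by $Y$. I would verify directly that $(\Omega_4^{T_1})^2=\id$ by expanding the square and showing that the off-diagonal contributions $(-1)^{w_Y(P)+w_Y(Q)}\bigl[(PQ)^{\otimes 4}+(QP)^{\otimes 4}\bigr]$ cancel pairwise: for anticommuting $P,Q$, the factor $\chi(P,Q)^4=1$ coexists with a sign flip in $(-1)^{w_Y(PQ)}$, which together with the transposition signs produces exact cancellation. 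Since $\Omega_4$ factorizes over qubits, the single-qubit computation lifts to arbitrary $n$. The same mechanism, with signs $(-1)^{|S|\,w_Y(P)}$, handles each primitive $\Omega_{4k}$: transposing any odd-sized subset of copies produces a hermitian involution.

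For a general Pauli monomial parametrized by $(v_1,\ldots,v_m;M)$, I would prescribe the subset $S\subseteq[k]$ in accordance with the decomposition of $\Omega_P$ into primitive projective factors, choosing one copy per $\Omega_{4k}$-block to transpose. The compatibility across blocks is verified by tracking how the quadratic phase $\prod_{i<j}\chi(P_i,P_j)^{M_{ij}}$ interacts with the signs $(-1)^{s_j w_Y(P_j)}$, where $s_j=|S\cap\mathrm{supp}(v_j)|\bmod 2$: the local cancellation between anticommuting Paulis in the base case extends to the full product, so that $\Omega^\Gamma$ is a hermitian involution and therefore unitary.

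Finally, since $(|\psi\rangle\langle\psi|)^T=|\psi^*\rangle\langle\psi^*|$, the partial transposition moves the complex conjugate onto the state,
\[
\mathrm{tr}(\Omega\,\psi^{\otimes k}) \;=\; \mathrm{tr}\!\bigl(\Omega^\Gamma\,\psi^{\otimes(k-|S|)}\otimes(\psi^*)^{\otimes|S|}\bigr),
\]
which is real for hermitian $\Omega$. Because $\Omega^\Gamma$ is unitary, this expectation can be estimated by a one-ancilla Hadamard test on the mixed tensor product of $\psi$ and $\psi^*$, giving $\pm 1$-valued samples; Hoeffding's inequality then delivers the claimed $O(\varepsilon^{-2})$ sample complexity, and in fact reduces the problem to the efficient measurement of a unitary Pauli monomial already handled in the earlier part of the section. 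The principal obstacle will be the third paragraph: propagating the local cancellation argument through the full product decomposition of a general Pauli monomial without cross-block interference, particularly for higher-order projective primitives $\Omega_{4k}$ with $k\ge 2$, which likely demands a careful canonicalization of the matrix $M$ under the prescribed partial transposition.
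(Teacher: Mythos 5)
Your overall strategy --- find a partial transposition making $\Omega$ unitary, then estimate $\tr(\Omega\,\psi^{\otimes k})=\tr(\Omega^{\Gamma}\,\psi^{\otimes(k-|S|)}\otimes(\psi^*)^{\otimes|S|})$ by a bounded-outcome measurement plus Hoeffding --- matches the paper's, and your base-case computation for $\Omega_4^{T_1}$ is correct (it is a hermitian involution, consistent with \cref{lem:transposeaspaulimonomials}). However, there is a genuine gap exactly where you flag "the principal obstacle," and the reduction you propose to get there is flawed. First, the claim that $\Omega_U$ is "unchanged (up to local relabeling) by any partial transpose" is false: partial transposition acts on \emph{copies}, not qubits, so factorization over qubits is irrelevant, and the paper's \cref{lem:om_can_be_u_and_p} shows explicitly that a partial transpose can convert a unitary monomial into one with a projective component. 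Since $\Omega_U$ and $\Omega_P$ are operators on the \emph{same} $k$ copies and $(\Omega_U\Omega_P)^{t_\alpha}\neq \Omega_U^{t_\alpha}\Omega_P^{t_\alpha}$, you cannot fix the projective factors block-by-block while leaving the unitary factor intact; "one copy per $\Omega_{4k}$-block" is not even well defined when the generating primitives overlap in copies.

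The paper closes this gap with machinery your sketch does not reach: it introduces generalized Pauli monomials $\Omega(V,M,\Gamma)$ carrying the transposition data, derives generalized substitution rules showing $\Lambda(\Omega^{(\Gamma)})=\Lambda(\Omega)+\mathrm{Diag}(\Gamma)$ after putting $V$ in pivot form, and uses the normal-form criterion that $\Omega$ is unitary iff $\Lambda$ has full rank over $\mathbb{F}_2$. The existence of a suitable $\Gamma$ then follows because $\det(\Lambda+\mathrm{Diag}(\tau_1,\dots,\tau_m))$ is a polynomial whose top monomial $\tau_1\cdots\tau_m$ has coefficient $1$, so the Combinatorial Nullstellensatz guarantees an assignment making the determinant nonzero. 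Your "careful canonicalization of $M$" would have to reproduce essentially this argument; without it, the existence claim for general (non-primitive) monomials is unproven. The final measurement step is fine --- a Hadamard test or, more simply for a hermitian unitary $\Omega^{\Gamma}$, the two-outcome POVM $\tfrac{\mathbb{1}\pm\Omega^{\Gamma}}{2}$ (the paper uses the closely related elements $\Pi_r,\Pi_i$) gives the stated $O(\varepsilon^{-2})$ sample complexity.
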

\cref{th:efficeintmeasurementwithtrasnpose}, combined with \cref{lemma:nounbiasedmeasurement1,lemma:nounbiasedmeasurement}, implies that having sample access to both the state $\psi$ and its conjugate $\psi^{*}$ in the computational basis marks the onset of an \textit{exponential separation} compared to having sample access to $\psi$ alone. Such an exponential separation in quantum learning was also observed in Ref.~\cite{King_2024}.

Despite giving a recipe for measuring efficiently all stabilizer purities, \cref{th:efficeintmeasurementwithtrasnpose} is a technical building block for the proof of our main results regarding the hierarchy of generalized stabilizer purities discussed below.

\section{Measurable magic monotones}\label{sec:measurablemonotones}
In the previous section, we introduced stabilizer entropies and their generalizations, which are distinguished from other magic monotones by the fact that they are experimentally measurable. In this section, we aim to characterize all magic monotones that possess this desirable property.

In quantum physics, (unbiasedly) measurable functions are all and only those arising from expectation values of measurement strategies, i.e. quantum algorithms that process $k$ copies of a given quantum state. As such, measurable functions are nothing but polynomials of degree $k$ of quantum states, i.e. POVM elements on $\psi^{\otimes k}$. In this regard, if we want to rigorously characterize measurable magic monotones, we need first restrict the available magic monotones to expectation values.

\begin{definition}[Measurable stabilizer monotones] Let $\mathsf{P}$ be a arbitrary polynomial of degree $k$ of quantum states. $\mathsf{P}$ is a measurable stabilizer monotone if the following two conditions are met: (i) $\mathsf{P}(\ket{\psi})=1$ iff $\ket{\psi}\in\stab$; for any pair $(\ket{\psi},\mathcal{E})$ obeying $\mathcal{E}(\ketbra{\psi})=\ketbra{\phi}$, then $\mathsf{P}(\ket{\psi})\le \mathsf{P}(\ket{\phi})$.  
\end{definition}
We remark that the choice of the monotonicity condition is arbitrary. Also the other direction $\ge$ could have been chosen with $\min_{\psi}\mathsf{P}(\psi)$ being the value achieved from all and only pure stabilizer states. However, this choice has two advantages: first stabilizer purities (obeying the monotonicity condition) without offset classify as measurable stabilizer monotones and second the negative logarithm of any measurable stabilizer monotone classify as stabilizer monotone according to \cref{def:stabilizermonotone}.

The requirement for being a measurable stabilizer monotone significantly narrows the range of possibilities. In fact, it can be shown that any measurable magic monotone must arise from the commutant of the Clifford group. We now restate a result proved by the authors in Ref.~\cite{bittel2025completetheorycliffordcommutant}.

\begin{lemma}[Measurable magic monotones lie in the Clifford group commutant~\cite{bittel2025completetheorycliffordcommutant}]\label{lem:measurableinthecommutant}
Let $\mathsf{P}(\cdot)$ be a measurable magic monotone for which there exists an unbiased estimator using $k$ copies of the state $n$-qubit state $\ket{\psi}$. Then
\be
M(\ket{\psi}) = \sum_{\Omega \in \mathcal{P}} c_{\Omega} \tr(\Omega \psi^{\otimes k})
\ee
where $\Omega$ are Pauli monomials and $c_{\Omega}$ possibly depending on $n$.
\end{lemma}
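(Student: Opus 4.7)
The plan is to combine two observations: (i) an unbiased estimator using $k$ copies realises a polynomial of degree (at most) $k$ in $\psi$, hence a linear functional on $\psi^{\otimes k}$; and (ii) a magic monotone must be invariant (not merely monotone) under Clifford unitaries, because both $C$ and $C^\dagger$ are free operations. Twirling the associated operator over $\mathcal{C}_n$ then forces it into the $k$-th order Clifford commutant, which, by Ref.~\cite{bittel2025completetheorycliffordcommutant}, is spanned by Pauli monomials.

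Concretely, I would first invoke the fact that the expectation value of any quantum measurement strategy on $k$ input copies is affine in each factor, hence a polynomial of degree at most $k$ in $\psi$. Every such polynomial can be expressed as
\[
\mathsf{P}(\ket{\psi}) \;=\; \tr\!\bigl(A\,\psi^{\otimes k}\bigr)
\]
for some Hermitian operator $A$ on $\mathcal{H}_n^{\otimes k}$ (the symmetric part of $A$ is uniquely determined, the anti-symmetric part is immaterial on $\psi^{\otimes k}$).

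Next I would extract Clifford invariance from the monotonicity axiom. For any $C\in\mathcal{C}_n$ and pure $\ket{\psi}$, both $\mathcal{E}(\cdot)=C(\cdot)C^\dagger$ and its inverse are free stabilizer operations, so the monotonicity condition applied in both directions yields $\mathsf{P}(C\ket{\psi})=\mathsf{P}(\ket{\psi})$. Substituting into the polynomial representation gives $\tr\!\bigl((C^\dagger)^{\otimes k} A\, C^{\otimes k}\,\psi^{\otimes k}\bigr) = \tr\!\bigl(A\,\psi^{\otimes k}\bigr)$ for every $C$ and every pure state. Averaging,
\[
A' \;\coloneqq\; \frac{1}{|\mathcal{C}_n|}\sum_{C\in\mathcal{C}_n} C^{\otimes k}\,A\,(C^\dagger)^{\otimes k},
\]
produces an operator that (a) still reproduces $\mathsf{P}$ on all pure states and (b) commutes with $C^{\otimes k}$ for every $C\in\mathcal{C}_n$, i.e., lies in the $k$-th order commutant of the Clifford group.

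To conclude, I would invoke the structural result of Ref.~\cite{bittel2025completetheorycliffordcommutant} that the set $\mathcal{P}$ of Pauli monomials forms a basis of this commutant, so one can expand $A' = \sum_{\Omega\in\mathcal{P}} c_\Omega\,\Omega$, and substitution into $\mathsf{P}(\ket{\psi}) = \tr(A'\psi^{\otimes k})$ yields the claimed formula, with coefficients $c_\Omega$ that may depend on $n$ (as the commutant itself does). The only real obstacle is the ingredient that $\mathcal{P}$ spans the $k$-th order Clifford commutant; given that input, the remainder is a short twirling argument. It is worth noting that the proof actually only uses Clifford invariance of $\mathsf{P}$ and not the ``$=\!1$ iff stabilizer'' condition, so the conclusion holds more generally for any Clifford-invariant measurable polynomial on pure states.
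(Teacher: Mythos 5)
Your proposal is correct and follows exactly the route one would expect for this statement: represent the unbiased estimator as $\tr(A\,\psi^{\otimes k})$, upgrade monotonicity to Clifford invariance via $C$ and $C^{\dagger}$ both being free, twirl $A$ into the $k$-th order commutant, and expand in the Pauli-monomial basis of Ref.~\cite{bittel2025completetheorycliffordcommutant}. The paper itself does not reprove this lemma but defers entirely to that companion reference, and your argument is consistent with the sketch given in the surrounding text (measurable functions as POVM expectation values on $\psi^{\otimes k}$ forced into the commutant), including your correct observation that only Clifford invariance, not the stabilizer-extremality condition, is actually used.
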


Hence, generalized stabilizer purities introduced above constitute a specific example of measurable magic monotones that can be expressed as single elements in the decomposition described above. However, generalized stabilizer purities $P_{\Omega}$ form a particularly nice subclass of measurable monotones, as they are multiplicative. While mere invariance under Clifford unitaries is necessary but not sufficient to classify generalized stabilizer purities as effective stabilizer monotones, the multiplicativity property is as powerful as additivity when it comes to providing bounds for resource conversion rates, including those that may be enhanced by catalysts, as discussed in \cref{sec:magicstateresourcetheory}. Indeed, given a measurable stabilizer monotone $\mathsf{P}$ that is also multiplicative, i.e., $\mathsf{P}(\ket{\psi} \otimes \ket{\phi}) = \mathsf{P}(\ket{\psi}) \mathsf{P}(\ket{\phi})$, one immediately obtains the bound
\be\label{eq:entropies}
\mathsf{P}(\ket{\psi})^{N_\psi} \le \mathsf{P}(\ket{T})^{N_t} \implies \frac{N_T}{N_\psi} \le \frac{-\log_2 \mathsf{P}(\ket{\psi})}{-\log_2 \mathsf{P}(\ket{T})}.
\ee
Two remarks are in order. First, \cref{eq:entropies} reduces exactly to the expression given by the additive magic monotone in \cref{eq:additivitybound}. Second, \cref{eq:entropies} effectively defines the additive version of a measurable stabilizer monotone, namely $-\log \mathsf{P}$, of which stabilizer entropies and stabilizer purities are specific examples (see \cref{def:stabilizerentropy}).

We then conclude that, for measurable magic monotones, the pivotal role that additivity plays for additive monotones is instead played by the multiplicativity property, since additivity naturally follows by taking the negative logarithm of a multiplicative measure.

In the next lemma, our aim is to characterize measurable monotones that possess the multiplicativity property: we prove that the only multiplicative measurable stabilizer monotones are the generalized stabilizer purities.

\begin{lemma}\label{lemma:additivegeneralizedpurities}
Let $\mathsf{P}$ be a measurable stabilizer monotone on $k=O(1)$ copies of the state, which obeys the multiplicativity property. Then 
\be
\mathsf{P}= P_{\Omega}
\ee
for some hermitian Pauli monomial $\Omega$.
\end{lemma}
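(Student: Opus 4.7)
The plan is to apply \cref{lem:measurableinthecommutant} to expand $\mathsf{P}$ over the Pauli-monomial basis of the Clifford commutant, and then to exploit multiplicativity together with the qubit-factorization of Pauli monomials to force the expansion to collapse to a single term. For each $n$ I would write
\[
\mathsf{P}(\ket{\psi_n}) = \sum_\tau c_\tau^{(n)}\, a_\tau^{(n)}(\ket{\psi_n}), \qquad a_\tau^{(n)}(\ket{\psi}) \coloneqq \tr(\Omega_\tau^{(n)}\, \psi^{\otimes k}),
\]
where $\tau = (v_1,\dots,v_m,M)$ ranges over the finitely many types labelling Pauli monomials on $k = O(1)$ copies and the real coefficients $c_\tau^{(n)}$ are allowed to depend on $n$. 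Two structural inputs drive the argument: (i) the functions $a_\tau^{(n)}$ associated with distinct types are linearly independent as polynomials in $\ket{\psi}$, a consequence of Pauli monomials forming a basis of the Clifford commutant~\cite{bittel2025completetheorycliffordcommutant}; and (ii) Pauli monomials factorize across qubits, so that after regrouping the $k$ copies of $\ket{\psi}\otimes\ket{\phi}$ one has $\Omega_\tau^{(n_1+n_2)} = \Omega_\tau^{(n_1)} \otimes \Omega_\tau^{(n_2)}$, and hence $a_\tau^{(n_1+n_2)}(\ket{\psi}\otimes\ket{\phi}) = a_\tau^{(n_1)}(\ket{\psi})\, a_\tau^{(n_2)}(\ket{\phi})$.

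Combining these inputs, the multiplicativity hypothesis $\mathsf{P}(\ket{\psi}\otimes\ket{\phi}) = \mathsf{P}(\ket{\psi})\,\mathsf{P}(\ket{\phi})$ unfolds to
\[
\sum_\tau c_\tau^{(n_1+n_2)}\, a_\tau^{(n_1)}(\ket{\psi})\, a_\tau^{(n_2)}(\ket{\phi}) = \sum_{\tau,\tau'} c_\tau^{(n_1)} c_{\tau'}^{(n_2)}\, a_\tau^{(n_1)}(\ket{\psi})\, a_{\tau'}^{(n_2)}(\ket{\phi}).
\]
Matching coefficients of the independent basis functions, first in $\ket{\psi}$ with $\ket{\phi}$ kept arbitrary and then in $\ket{\phi}$, produces the diagonal relation $c_\tau^{(n_1+n_2)} = c_\tau^{(n_1)}\, c_\tau^{(n_2)}$ together with the off-diagonal constraints $c_\tau^{(n_1)}\, c_{\tau'}^{(n_2)} = 0$ for $\tau \ne \tau'$. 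The off-diagonal vanishing forces the support $\{\tau : c_\tau^{(n)} \ne 0\}$ to be the \emph{same} singleton $\{\tau^*\}$ at every $n$: if $c_{\tau^*}^{(n_0)} \ne 0$ for some $n_0$, then every $c_{\tau'}^{(n)}$ with $\tau' \ne \tau^*$ must vanish for all $n$. Hence the expansion collapses to $\mathsf{P}(\ket{\psi_n}) = c_{\tau^*}^{(n)}\, a_{\tau^*}^{(n)}(\ket{\psi_n})$ with a single surviving type.

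Fixing the remaining scalar is then a short normalisation step. Stabilizer states satisfy $a_{\tau^*}^{(n)}(\ket{\sigma}) = 1$ for every $n$ (a property of Pauli monomials reviewed in \cref{sec:magicstateresource}), so the defining condition $\mathsf{P}(\ket{\sigma}) = 1$ pins down $c_{\tau^*}^{(n)} = 1$ and yields $\mathsf{P}(\ket{\psi}) = \tr(\Omega_{\tau^*}\, \psi^{\otimes k})$. The positivity implicit in demanding that $-\log \mathsf{P}$ be a legitimate stabilizer monotone, together with the freedom to replace $\Omega_{\tau^*}$ by its hermitian part (since $\mathsf{P}$ is real-valued), then lets us identify this with $P_{\Omega_{\tau^*}}$ in the sense of \cref{def:generalizedstabilizerpurities}.

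The main obstacle I foresee is justifying the linear independence of the $a_\tau^{(n)}$ as polynomial functions of $\ket{\psi}$: Pauli monomials are linearly independent as operators in the full Clifford commutant, but their expectation values on tensor powers $\psi^{\otimes k}$ see only their action on the symmetric subspace, so some care is required to avoid redundancies, e.g.\ between monomials related by permutation of the $k$ copies. I would handle this by selecting one representative per permutation orbit of types, leaning on the symmetric-commutant characterization of Ref.~\cite{bittel2025completetheorycliffordcommutant}; once this is in place, the remainder of the argument is a purely algebraic matching of coefficients over a finite index set.
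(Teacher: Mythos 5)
Your proposal is correct and follows essentially the same route as the paper's proof: expand $\mathsf{P}$ in the Pauli-monomial basis via \cref{lem:measurableinthecommutant}, use the qubit-factorization of Pauli monomials to turn multiplicativity into a coefficient-matching condition, and conclude that a single hermitian monomial survives after normalizing on stabilizer states. The linear-independence issue you flag is exactly the one the paper addresses, by passing to a maximal linearly independent subset of the component functions $v_i(\psi)=\tr(\Omega_i\psi^{\otimes k})$ (relying on \cref{lem:linearlyindependentinthesymmetricsubspace} for $k\le n-1$), which is equivalent in effect to your orbit-representative fix.
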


If we require a magic monotone to be both experimentally measurable and multiplicative (and therefore additive through the negative logarithm), then generalized stabilizer purities are the only choice. Hence, generalized stabilizer purities form a fundamental class of measurable monotones within the resource theory of magic.

A natural question arises: can we identify an even more fundamental structure within the class of generalized stabilizer purities? We answer this in the affirmative. In the following theorem—one of the main technical contributions of this paper—we show that all generalized stabilizer purities are upper bounded by the $(\alpha=4)$-stabilizer purity in \cref{def:stabilizerentropy}.

\begin{theorem}[Stabilizer purity upper bounds generalized purities]\label{th1}
Let $\rho$ be a arbitrary mixed state. Any generalized stabilizer purity is upper bounded by 
\be
P_{\Omega}(\rho) \le P_4(\rho)\,.
\ee
\end{theorem}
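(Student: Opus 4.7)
The plan is to exploit the structural theorem of Ref.~\cite{bittel2025completetheorycliffordcommutant}: every Pauli monomial factorizes as $\Omega=\Omega_U\Omega_P$ with $\Omega_U$ a unitary monomial and $\Omega_P$ a projective one, and the projective factor is in turn generated by the elementary blocks $\Omega_4$, $\Omega_k$ (for $k\equiv 0\bmod 4$) together with copy-permutations. The overall strategy is an induction on this generating structure, collapsing $P_\Omega(\psi)$ to a polynomial in the Pauli expectations $\langle\psi|P|\psi\rangle$ that can be bounded against $P_4(\psi)$ term-by-term.

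The first step is to absorb the unitary factor via Cauchy--Schwarz: since $\Omega_U$ is unitary and $|\psi^{\otimes k}\rangle$ is normalized,
\[
P_\Omega(\psi)^2 \;=\; \bigl|\langle\psi^{\otimes k}|\Omega_U\Omega_P|\psi^{\otimes k}\rangle\bigr|^2 \;\le\; \langle\psi^{\otimes k}|\Omega_P^\dagger\Omega_P|\psi^{\otimes k}\rangle,
\]
and the right-hand side reduces to a constant multiple of $P_{\Omega_P}(\psi)$ because $\Omega_P^{2}\propto\Omega_P$. The elementary base cases are then straightforward: for $\Omega_k=\frac{1}{d}\sum_P P^{\otimes k}$ with $k=4\ell$, the bound $P_{\Omega_k}(\psi)=\frac{1}{d}\sum_P\langle\psi|P|\psi\rangle^{k}\le P_4(\psi)$ follows pointwise from $|\langle\psi|P|\psi\rangle|\le 1$; for tensor products of such blocks on disjoint copy-supports, the purity multiplicatively factorizes into a power of $P_4\le 1$.

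The hard case is products of elementary blocks on \emph{overlapping} copy-supports, decorated by the phase factors $\chi(P_i,P_j)^{M_{ij}}$ of the monomial. Here I would use the identity, valid for any pure $\psi$,
\[
\frac{1}{d}\sum_{Q\in\mathbb{P}_n}\chi(P,Q)\,\langle\psi|Q|\psi\rangle^{2}\;=\;\langle\psi|P|\psi\rangle^{2},
\]
which follows from $|\psi\rangle\langle\psi|=\frac{1}{d}\sum_P\langle\psi|P|\psi\rangle\,P$ combined with the SWAP identity $\sum_Q Q\otimes Q=d\cdot\mathrm{SWAP}$ and the conjugation rule $PQP=\chi(P,Q)Q$. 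Iterating this identity (and its higher-order analogues involving products of several $\chi$ factors) collapses the sign sums produced by overlapping supports into ordinary $\ell^{2}$-type expressions in the Pauli coefficients, which are then bounded against $P_4$ using the normalization $\sum_P\langle\psi|P|\psi\rangle^{2}=d$ and the pointwise inequality $\langle\psi|P|\psi\rangle^{w}\le\langle\psi|P|\psi\rangle^{2}$ for $w\ge 2$.

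The principal obstacle is that the Cauchy--Schwarz of the first step, used naively, yields only $P_\Omega\le\sqrt{c\,P_{\Omega_P}}$ with projector normalization $c$ potentially as large as $d$ (for $\Omega_P=\Omega_4$ this gives the loose $P_\Omega\le\sqrt{d\,P_4}$, off by a factor $\sqrt{d}$ from the target). Closing this gap requires the phase identities above not merely to produce $P_{\Omega_P}\le P_4$, but to deliver the sharper structural relation $c\,P_{\Omega_P}\le P_4^{2}$; this is the delicate algebraic heart of the argument, requiring one to fully leverage both the commutant multiplication table and the tensor-power structure of $|\psi^{\otimes k}\rangle$ simultaneously rather than treating them independently.
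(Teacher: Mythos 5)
Your decomposition $\Omega=\Omega_U\Omega_P$ and your base case ($m=1$, pointwise bound $\langle\psi|P|\psi\rangle^{\alpha}\le\langle\psi|P|\psi\rangle^{4}$ for even $\alpha\ge 4$) match the paper, and you have correctly diagnosed where the difficulty sits. But the gap you flag at the end is a genuine gap, and the repair you propose for it cannot work. Your first-step Cauchy--Schwarz gives $P_\Omega^2\le\langle\psi^{\otimes k}|\Omega_P^\dagger\Omega_P|\psi^{\otimes k}\rangle=d^{m_P}P_{\Omega_P}$, and you hope to recover the loss via the ``sharper structural relation'' $c\,P_{\Omega_P}\le P_4^2$. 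For $\Omega_P=\Omega_4$ this reads $d\,P_4\le P_4^2$, i.e.\ $P_4\ge d$, which is false for every state since $P_4\le 1$. No amount of phase-identity manipulation will rescue an inequality that fails already in the simplest instance. Separately, your plan never handles purely unitary monomials: when $\Omega_P=\mathbb{1}$ your Cauchy--Schwarz yields only $P_\Omega\le 1$, and the overlapping-support machinery in your third paragraph is not developed far enough to cover, say, a generic unitary monomial of order $m\ge 2$.

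The paper closes both holes with two moves you are missing. First, it applies Cauchy--Schwarz \emph{after} doubling the projector: using $\Omega_P^2=d^{m_P}\Omega_P$ one writes
\begin{equation*}
\bigl|\tr(\Omega_U\Omega_P\,\psi^{\otimes k})\bigr|
=\frac{1}{d^{m_P}}\bigl|\tr(\Omega_P'\,\Omega_U\Omega_P\,\psi^{\otimes k})\bigr|
\le\frac{1}{d^{m_P}}\bigl\|\Omega_P'\,\psi^{\otimes k}\bigr\|_2\,\bigl\|\Omega_P\,\psi^{\otimes k}\bigr\|_2
=\sqrt{\tr(\Omega_P\psi^{\otimes k})\,\tr(\Omega_P'\psi^{\otimes k})},
\end{equation*}
where $\Omega_P'=\Omega_U\Omega_P\Omega_U^\dagger$ and each $\ell^2$-norm contributes a factor $d^{m_P/2}$ that exactly cancels the prefactor. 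This is lossless --- no stray power of $d$ --- and reduces the claim to the two projective monomials $\Omega_P,\Omega_P'$, on which one recurses. Second, for purely unitary $\Omega$ the paper invokes the partial-transposition lemma (\cref{lem:om_can_be_u_and_p}): there is a choice of copies to transpose that turns $\Omega$ into a generalized monomial with a nontrivial projective component while leaving the value of the generalized purity controlled (and $P_4$ is invariant under $\psi\mapsto\psi^*$), which feeds the unitary case back into the projective one. If you want to salvage your outline, replace your step one with the doubled Cauchy--Schwarz above and add the transposition step; your phase identity $\frac{1}{d}\sum_Q\chi(P,Q)\langle\psi|Q|\psi\rangle^2=\langle\psi|P|\psi\rangle^2$ is correct but is not what carries the argument.
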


This result identifies $\alpha$-stabilizer purities, in particular with $\alpha=2$, as the \textit{most robust magic monotones} within the broader class of generalized stabilizer purities. In particular, the ($\alpha=2$) stabilizer entropy $M_2$, as the smallest additive measurable magic monotone, yields the tightest bound on catalyst-enhanced resource conversion; see \cref{eq:additivitybound,eq:entropies}.

A follow-up question is whether generalized stabilizer purities can be lower bounded by stabilizer purities, thus establishing a similar equivalence to the one holding for stabilizer purities in \cref{hierarchystabilizerpurities}. However, we answer this in the negative with the following proposition.

\begin{proposition}[No lower bounds]\label{prop:nolowerbound}
There exists a generalized stabilizer purity such that
\be
P_{\Omega}(\psi) \not\ge f(P_{\alpha}(\psi))
\ee
for any function $f > 0$ and $\alpha\in[2,\infty)$.
\end{proposition}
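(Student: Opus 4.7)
The strategy is to exhibit a non-stabilizer pure state $\psi$ and a projective Pauli monomial $\Omega$ such that $P_\Omega(\psi)=0$. Since the stabilizer purities satisfy the uniform lower bound
\be
P_{2\alpha}(\psi)=\frac{1}{d}\sum_{P\in\mathbb{P}_n}\langle\psi|P|\psi\rangle^{2\alpha}\ge \frac{1}{d}>0
\ee
(the identity Pauli contributes $1/d$ and every other term is a non-negative even power of a real expectation value), any strictly positive function $f>0$ applied to $P_{2\alpha}(\psi)$ would obey $f(P_{2\alpha}(\psi))\ge \inf_{x\in[1/d,1]} f(x)>0$, in contradiction with $P_\Omega(\psi)=0$. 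Since this would rule out any lower bound of the form $P_\Omega\ge f(P_{2\alpha})$ for every $\alpha\ge 2$ simultaneously, the proposition would follow at once.

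To build the pair $(\psi,\Omega)$, I would exploit the classification of Pauli monomials from~\cite{bittel2025completetheorycliffordcommutant}: every projective Pauli monomial is (up to a positive scalar) the orthogonal projector onto a Clifford-invariant subspace $V_\Omega\subseteq\mathcal{H}_n^{\otimes k}$, so that $P_\Omega(\psi)=0$ iff $\psi^{\otimes k}\perp V_\Omega$. Using the explicit parametrization
\be
\Omega=\frac{1}{d^{m}}\sum_{P_1,\ldots,P_m\in\mathbb{P}_n}P_1^{\otimes v_1}\cdots P_m^{\otimes v_m}\prod_{i<j}\chi(P_i,P_j)^{M_{ij}},
\ee
the quantity $\langle\psi^{\otimes k}|\Omega|\psi^{\otimes k}\rangle$ becomes a signed polynomial in the Pauli expectation values $\{\langle\psi|P|\psi\rangle\}_{P\in\mathbb{P}_n}$, whose sign pattern is controlled by the symplectic characters $\chi(\cdot,\cdot)$ and the symmetric matrix $M$. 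The plan is then to choose a candidate $\psi$ with a structured Pauli spectrum---for example, a real single-qubit state (so $\langle\psi|Y|\psi\rangle=0$ identically), or a multi-qubit magic state with many vanishing Pauli expectations---and to tune the combinatorial data $(v_i,M)$ so that the cancellations encoded by $\Omega$ annihilate this polynomial on $\psi$.

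Operationally, I would enumerate projective monomials at small copy order ($k=4,6,8$) using the explicit classification and evaluate the resulting Clifford-invariant polynomial on a one-parameter family such as $\ket{\psi_\theta}=\cos\theta\ket{0}+\sin\theta\ket{1}$ or on two-qubit magic states, searching for a zero at a point away from the stabilizer locus. The main obstacle is purely combinatorial: Clifford invariance forces $\langle\psi^{\otimes k}|\Omega|\psi^{\otimes k}\rangle$ to be a symmetric polynomial in a small set of Pauli-orbit invariants (e.g.\ $x^2+y^2+z^2$, $x^2y^2+x^2z^2+y^2z^2$, and $x^2y^2z^2$ for a qubit), which restricts but does not preclude such zeros. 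Identifying an $(\Omega,\psi)$ pair that realizes a zero inside the feasible region---possibly requiring $n\ge 2$ or a moderately large $k$ to unlock enough sign structure---is the crux; once a candidate is in hand, $P_\Omega(\psi)=0$ is verified by a finite polynomial identity and the proposition is immediate.
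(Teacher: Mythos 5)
Your reduction is exactly the one the paper uses: exhibit a single pure state $\psi$ and a Pauli monomial $\Omega$ with $P_{\Omega}(\psi)=0$ while every $P_{2\alpha}(\psi)>0$, so that $f(P_{2\alpha}(\psi))>0$ for any pointwise-positive $f$ and the claimed lower bound fails. That framing is correct. The problem is that you stop precisely at what you yourself call ``the crux'': you never produce the pair $(\Omega,\psi)$, and the paper's proof consists of essentially nothing \emph{but} this explicit pair. The paper takes $n=1$, $k=8$, the order-$4$ monomial
\be
\Omega_{4,4,4,4}=\frac{1}{d^4}\sum_{P,Q,K,L\in\mathbb{P}_n}(PL)\otimes P\otimes (PQKL)\otimes (PQK)\otimes (QL)\otimes Q\otimes (KL)\otimes K\,,
\ee
and the single-qubit Golden state $G=\tfrac{\mathbb{1}}{2}+\tfrac{1}{2\sqrt{3}}(X+Y+Z)$, for which a direct computation gives $P_{\Omega_{4,4,4,4}}(G)=0$ while $P_{\alpha}(G)\neq 0$ for all $\alpha$; multiplicativity of generalized purities then extends the counterexample to every $n$. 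Without such a concrete witness your argument is a search plan, not a proof — and note that your heuristic of starting from real single-qubit states (so $\langle Y\rangle=0$) points away from the actual witness, which has all three Pauli expectations nonzero.

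Two smaller issues. First, your step $f(P_{2\alpha}(\psi))\ge \inf_{x\in[1/d,1]}f(x)>0$ is both unnecessary and not valid for an arbitrary positive function: a pointwise-positive $f$ can have zero infimum on an interval. You only need to evaluate $f$ at the single number $P_{2\alpha}(\psi)>0$, which is positive because $f>0$ pointwise. Second, your claim that every projective Pauli monomial is, up to a positive scalar, an orthogonal projector onto a Clifford-invariant subspace is stronger than what is needed and is not how the paper argues; the counterexample monomial is simply evaluated directly as a polynomial in the Pauli expectations of $G$, with no appeal to its spectral structure.
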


As a corollary of \cref{lem:measurableinthecommutant,th1}, we can show that every measurable magic monotone is upper bounded by the stabilizer purity $P_4$.
\begin{corollary}[Upper bounds to every measurable monotones] Let $\mathsf{P}$ a measurable magic monotone for which there exists a unbiased estimator on $k$ copies. Then, it holds that
\be
\mathsf{P}(\psi)=O(P_{4}(\psi))\,;
\ee
additionally, if $k=O(1)$, the constant does not depend on $n$. 
\end{corollary}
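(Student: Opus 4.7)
The plan is to chain the two inputs together: the decomposition of a measurable monotone into Pauli monomials provided by \cref{lem:measurableinthecommutant}, and the uniform upper bound on generalized stabilizer purities given by \cref{th1}. First, since $\mathsf{P}$ admits an unbiased estimator on $k$ copies, \cref{lem:measurableinthecommutant} yields a finite expansion
\begin{equation}
\mathsf{P}(\psi)=\sum_{\Omega\in\mathcal{P}}c_{\Omega}\tr(\Omega\,\psi^{\otimes k}),
\end{equation}
where the sum runs over Pauli monomials of order at most $k$. Taking absolute values and using $P_{\Omega}(\psi)=|\tr(\Omega\psi^{\otimes k})|$, together with the fact that $\mathsf{P}(\psi)\ge 0$ (it is the expectation of a POVM element), I would get
\begin{equation}
\mathsf{P}(\psi)\le\sum_{\Omega}|c_{\Omega}|\,P_{\Omega}(\psi).
\end{equation}

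Next, I would apply \cref{th1} termwise, so that each $P_{\Omega}(\psi)\le P_{4}(\psi)$. Pulling the common factor out gives
\begin{equation}
\mathsf{P}(\psi)\le\Bigl(\sum_{\Omega}|c_{\Omega}|\Bigr)\,P_{4}(\psi),
\end{equation}
which is the desired $O(P_{4}(\psi))$ bound, with constant $C=\sum_{\Omega}|c_{\Omega}|$.

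The remaining task, and the only genuinely delicate step, is to argue that when $k=O(1)$ this constant can be taken independent of $n$. The number of Pauli monomials of a fixed order $m\le k$ is governed by the count of tuples $(v_{1},\dots,v_{m})$ of linearly independent vectors in $\mathbb{F}_{2}^{k}$ together with a symmetric $\mathbb{F}_{2}$-valued matrix with null diagonal, a quantity that depends only on $k$ and not on $n$. So the sum has at most $N_{k}$ terms with $N_{k}$ a function of $k$ alone. To bound the individual $|c_{\Omega}|$ I would exploit that $\mathsf{P}(\psi)=\tr(E\psi^{\otimes k})$ for some POVM element $0\le E\le \id$, and that by Clifford invariance of $\mathsf{P}$ we may replace $E$ by its Clifford twirl $\bar E=\mathbb{E}_{C}\,C^{\otimes k}E(C^{\dagger})^{\otimes k}$, which still satisfies $0\le\bar E\le\id$ and lies in the Clifford commutant. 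Expanding $\bar E$ in the Pauli-monomial basis then recovers precisely the coefficients $c_{\Omega}$, and the bound $\|\bar E\|\le 1$ combined with the (finite, $n$-independent) Gram matrix of the Pauli-monomial basis in the fixed-$k$ commutant gives $|c_{\Omega}|\le c(k)$ uniformly in $n$.

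The main obstacle is precisely this last step: controlling the coefficients $c_{\Omega}$ in an $n$-independent way requires inverting (or at least bounding the inverse of) the Gram matrix of Pauli monomials, whose entries $\tr(\Omega\Omega')$ are computed in \cite{bittel2025completetheorycliffordcommutant}. I expect that for fixed $k$ this Gram matrix is a finite object depending only on $k$ (its entries are polynomials in $d$ that, after the $d^{-m}$ normalizations built into the definition of the $\Omega$'s, stabilize), so its condition number is bounded by a function of $k$ only, giving the claimed $n$-independent constant. The first part of the corollary, without the $n$-independence clause, does not require this refinement and follows immediately from steps one through three.
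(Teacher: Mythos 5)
Your strategy is exactly the one the paper intends: the corollary is stated as an immediate consequence of \cref{lem:measurableinthecommutant} and \cref{th1}, no separate proof appears in the appendix, and chaining the commutant decomposition with the termwise bound $P_{\Omega}\le P_{4}$ and then controlling $\sum_{\Omega}|c_{\Omega}|$ is the right skeleton. Your handling of the $n$-independence is also the correct completion: $|\mathcal{P}|=\prod_{i=0}^{k-2}(2^{i}+1)$ depends only on $k$, and the coefficients obtained by twirling the POVM element and inverting the Gram matrix are controlled by \cref{lem:asymptoticweingarten}, which gives $|(\mathcal{W}^{-1})_{\Omega,\Omega'}|=O(|\mathcal{P}|^{2}d^{-k})$ uniformly for $n\ge k^{2}-3k+13$ (the finitely many remaining $n$ at fixed $k$ are absorbed into the constant), while $|\tr(\Omega^{\dag}\bar{E})|\le\|\Omega\|_{1}\|\bar E\|_{\infty}\le d^{k}$, so indeed $|c_{\Omega}|\le c(k)$.

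The one genuine gap is the termwise application of \cref{th1} to \emph{every} $\Omega$ in the expansion. The set $\mathcal{P}$ of \cref{def:paulimonomials} contains the permutation operators: a column $v$ of Hamming weight $2$ gives $\Omega(v,0)=\frac{1}{d}\sum_{P}P\otimes P=T_{(ij)}$ on the two supported copies, and products of such columns generate all of $S_{k}$. For these, $P_{\Omega}(\psi)=|\tr(T_{\pi}\psi^{\otimes k})|=1$ for every pure state, which is \emph{not} bounded by $P_{4}(\psi)$ (the latter is of order $1/d$ for generic states). The proof of \cref{th1} explicitly strips permutation components off before running its iteration, so the inequality $P_{\Omega}\le P_{4}$ only holds for monomials that are not purely permutations. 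Consequently your bound $\mathsf{P}\le\left(\sum_{\Omega}|c_{\Omega}|\right)P_{4}$ is valid only if the permutation part of the expansion vanishes; otherwise it contributes a state-independent offset and the conclusion $\mathsf{P}=O(P_{4})$ fails --- for instance $\frac{1}{2}(1+P_{4})$ satisfies all the stated axioms of a measurable stabilizer monotone but is not $O(P_{4})$ with an $n$-independent constant. You should either add the hypothesis that $\mathsf{P}$ has no constant/permutation component (consistent with the paper's earlier remark that stabilizer purities ``without offset'' are the intended class) or show that the monotone axioms force that component to vanish, which they do not. This issue is arguably inherited from the paper's own unwritten proof rather than introduced by you, but it is a real hole in the termwise step as written.
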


So far, we have shown that stabilizer purities represent the most robust elements within the class of measurable stabilizer monotones. In the next section, we leverage the hierarchy established among measurable magic monotones in \cref{th1} to provide a clear operational interpretation of stabilizer entropies.

\section{Operational interpretation of stabilizer entropies}\label{sec:operationalinterpretation}
A particularly desirable aspect of a resource monotone is its operational interpretation. Indeed, as discussed in \cref{sec:magicstateresourcetheory}, from a technical perspective, resource monotones are used to provide tight bounds on resource conversion between states. However, as we discussed in the introduction, determining or measuring a quantum resource in quantum states lacks precise meaning if it does not come with a clear operational interpretation. Below, we endow stabilizer entropies with a clear operational interpretation in the context of property testing, in which a learner is required to distinguish a state drawn uniformly from one of two ensembles of states. A common figure of merit in property testing is the optimal success probability that the learner achieves in distinguishing the two ensembles. We analyze two scenarios: A) the task of distinguishing the Clifford orbit of a given state from the Haar random ensemble, equivalent to study the condition for which the Clifford orbit forms an approximate \textit{state $k$-design}~\cite{leone2025noncliffordcostrandomunitaries}. B) the task of \textit{stabilizer testing}, where a learner must distinguish a given state from the set of stabilizer states. The optimal probability of success of both tasks are governed by the value of the stabilizer entropy, giving it clear operational interpretation.

\subsection{State $k$-designs}
The task of distinguishing an ensemble of states from Haar random states has a rich history in quantum information theory~\cite{Gross_2007}: an ensemble of states $\mathcal{E}$ which cannot be distinguished having access to $k$ copies and with probability greater than $\frac{1}{2}+\varepsilon$ from the full set of states endowed with the Haar measure is known as \textit{state $k$-designs}.

State $k$-design can be rigorously defined as follows:
\begin{definition}[State $k$-designs] Let $\mathcal{E}\coloneqq\{\ket{\psi}\}$ an ensemble of pure states. Then $\mathcal{E}$ is a $\varepsilon$-approximate state $k$-design iff
\be
\frac{1}{2}\left\|\frac{1}{|\mathcal{E}|}\sum_{\ket{\psi}\in\mathcal{E}}\ketbra{\psi}^{\otimes k}-\int\de\psi\ketbra{\psi}^{\otimes k}\right\|_{1}\le \varepsilon
\ee
where $\int\de\psi$ is the Haar measure over statevectors.
\end{definition}

In the context of magic-state resource theory a natural ensemble to consider is the Clifford orbit of a fiducial quantum state $\ket{\psi}$, i.e. $\mathcal{E}_{\psi}\coloneqq\{C\ket{\psi}\,:\, C\in\mathcal{C}_n\}$. Indeed, any magic monotone constitutes an invariant within the ensemble  $\mathcal{E}_{\psi}$. 

We therefore ask what is the optimal success probability for the task of distinguishing $k$ copies of a state coming from the Clifford orbit of a fiducial state $\ket{\psi}$ from $k$ copies of a Haar random state, which is equivalent in asking the following question: when the Clifford orbit of a state gives rise to a $\varepsilon$-approximate state $k$-design? The next theorem bounds this success probability with the stabilizer entropy, thus establishing its operational meaning.

\begin{theorem}\label{th4} Let $\ket{\psi}$ be a pure quantum state, and let $\mathcal{E}_{\psi}$ denote its Clifford orbit. The probability $q_{\text{succ}}^{(k)}$ that any quantum algorithm, given access to $k$ copies, can distinguish $\mathcal{E}_{\psi}$ from the Haar-random ensemble is bounded as

\be
\frac{1}{2}+\frac{1}{2}2^{-2M_{3}(\psi)}-\frac{1}{d}\le q_{\text{succ}}^{(k)}\le \frac{1}{2}+\frac{B}{2}2^{-M_{2}(\psi)}+\frac{2^{2k^2}}{2d}
\ee
where $B\le 2^{k^2/2}$. 
\end{theorem}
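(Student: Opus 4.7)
The plan is to rewrite the optimal distinguishing probability as a trace distance: by the Holevo--Helstrom bound, $q_{\text{succ}}^{(k)} = \tfrac{1}{2} + \tfrac{1}{4}\|\rho_{\text{Cliff}}^{(k)} - \rho_{\text{Haar}}^{(k)}\|_1$, where $\rho_{\text{Cliff}}^{(k)} \coloneqq \mathbb{E}_{C\in\mathcal{C}_n}(C\ketbra{\psi}C^{\dagger})^{\otimes k}$ and $\rho_{\text{Haar}}^{(k)} \coloneqq \int \de\phi\,\ketbra{\phi}^{\otimes k}$. Both states lie in the Clifford commutant on $k$ copies (the Haar average is proportional to $\Pi_{\text{sym}}$, which is spanned by permutations and hence Clifford-invariant), so the difference $X \coloneqq \rho_{\text{Cliff}}^{(k)} - \rho_{\text{Haar}}^{(k)}$ is a Hermitian, traceless operator lying in this commutant.

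For the upper bound, I would control $\|X\|_1$ via the Pauli-monomial decomposition of the Clifford commutant from Ref.~\cite{bittel2025completetheorycliffordcommutant}. A short calculation using the fact that each $(C\psi C^\dagger)^{\otimes k}$ is a normalized symmetric state gives $\tr(\rho_{\text{Cliff}}\rho_{\text{Haar}}) = \tr(\rho_{\text{Haar}}^2) = \binom{d+k-1}{k}^{-1}$, so $\|X\|_2^2 = \tr(\rho_{\text{Cliff}}^2) - \tr(\rho_{\text{Haar}}^2)$. I would then expand $\tr(\rho_{\text{Cliff}}^2) = \mathbb{E}_C|\bra{\psi}C\ket{\psi}|^{2k}$ in the Pauli-monomial basis: the permutation-type contributions reproduce the Haar value up to $O(1/d)$ corrections (since the Clifford group is not an exact Haar $2k$-design), while each remaining magic-type contribution is bounded by its generalized stabilizer purity, and hence by $P_4(\psi) = 2^{-M_2(\psi)}$ via \cref{th1}. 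Converting this into a trace-norm bound through Cauchy--Schwarz, together with the combinatorial count of Pauli monomials up to order $k$ (at most $2^{O(k^2)}$, parameterized by a linearly-independent collection in $\mathbb{F}_2^k$ and a symmetric matrix with vanishing diagonal), yields the main term $B\cdot 2^{-M_2(\psi)}$ with $B\le 2^{k^2/2}$; the Clifford-versus-Haar defect on the permutation part furnishes the additive $2^{2k^2}/d$ remainder.

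For the lower bound I would exhibit an explicit measurement on $6$ of the $k$ copies using the unitary, Clifford-invariant operator $\Omega_6 = \tfrac{1}{d}\sum_{P\in\mathbb{P}_n} P^{\otimes 6}$. Since $\Omega_6$ commutes with $C^{\otimes 6}$, its expectation on the Clifford orbit is exactly $\tr(\Omega_6\psi^{\otimes 6}) = P_6(\psi) = 2^{-2M_3(\psi)}$, while the Haar expectation $\mathbb{E}_\phi P_6(\phi)$ is $O(1/d)$: the identity Pauli contributes $1/d$, and the $\sim d^2$ traceless Paulis each contribute $O(1/d^3)$ via the standard Haar sixth-moment formula. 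Using as POVM element the $+1$-projector $(I+\Omega_6)/2$ or, more sharply, a Clifford-commutant projector whose expectation on $\psi^{\otimes 6}$ reproduces $P_6$ exactly, the Holevo--Helstrom formula then yields the advertised lower bound $\tfrac{1}{2} + \tfrac{1}{2}\cdot 2^{-2M_3(\psi)} - \tfrac{1}{d}$.

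The main obstacle is the upper-bound conversion from $\|X\|_2$ to $\|X\|_1$: one must exploit the full Pauli-monomial structure of the commutant to avoid losing factors of $d^{k/2}$ to a naive symmetric-subspace Cauchy--Schwarz, carefully separate the permutation monomials (whose contribution cancels against the Haar purity modulo the $O(1/d)$ design defect) from the magic monomials, and then uniformly bound each of the $2^{O(k^2)}$ magic contributions by $P_4$ via \cref{th1}. The lower bound is conceptually simpler but depends crucially on recognising $\Omega_6$ as the correct measurable unitary whose expectation directly captures $P_6$ and whose Haar average decays as $1/d$.
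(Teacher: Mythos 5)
Your lower bound coincides with the paper's: one measures the unitary Clifford-invariant observable $\Omega_6$, whose expectation on the Clifford orbit is exactly $P_6(\psi)=2^{-2M_3(\psi)}$ and whose Haar average is $(d+23)/((d+3)(d+5))\le \tfrac1d+\tfrac{16}{d^2}$; nothing is missing there. For the upper bound, however, you take a genuinely different route from the paper. The paper never introduces $\|X\|_2$: it expands $\Phi_{\cl}(\psi^{\otimes k})-\Phi_{\haar}(\psi^{\otimes k})$ directly in the Pauli-monomial basis via the Weingarten formula (\cref{sec:cliffordweingartencalculus}), applies the triangle inequality in trace norm, and controls each term using (i) the Weingarten asymptotics of \cref{lem:asymptoticweingarten}, (ii) the exact trace norms $\|\Omega\|_1=d^{k-\rank(V)}$, which drop to at most $d^{k-1}$ for every nontrivial projective monomial so that only \emph{unitary} monomials survive at leading order, and (iii) \cref{th1} to bound each surviving $P_\Omega$ by $P_4=2^{-M_2}$. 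The count $|\mathcal{P}_U|\le|\mathcal{P}|\le 2^{(k^2-3k+12)/2}$ then gives $B\le 2^{k^2/2}$ and the off-diagonal/projective terms give the $O(2^{2k^2}/d)$ remainder. The advantage of the paper's route is that the trace norm is handled exactly, term by term, with no norm conversion.

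The gap in your version is exactly the step you flag as the ``main obstacle,'' and it does not fully close. The good news is that the leading term survives the naive symmetric-subspace Cauchy--Schwarz: the magic monomials enter $\tr(\rho_{\text{Cliff}}^2)$ as roughly $d^{-k}\sum_{\Omega}P_\Omega^2$, so the factor $\sqrt{\dim\sym(\mathcal{H}^{\otimes k})}\approx\sqrt{d^k/k!}$ cancels against $d^{-k/2}$ and you recover $\sqrt{|\mathcal{P}|/k!}\,P_4$ --- your worry about losing $d^{k/2}$ is unfounded for that piece, and no extra cleverness is needed. What does \emph{not} survive is the additive remainder: the permutation-block design defect and the off-diagonal Weingarten contributions enter $\|X\|_2^2$ at order $2^{O(k^2)}/d^{k+1}$, and after multiplying by $\sqrt{\dim\sym}$ and taking the square root they contribute $2^{O(k^2)}/\sqrt{d}$ to $\|X\|_1$, not the $2^{2k^2}/(2d)$ stated in the theorem. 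So your strategy, once the omitted calculation is carried out, yields a correct bound of the form $\tfrac12+\tfrac{B'}{2}2^{-M_2(\psi)}+2^{O(k^2)}/\sqrt{d}$, which is qualitatively sufficient for the $\Theta(2^{-M_2})$ interpretation at $k=O(1)$ but quantitatively weaker in $d$ than the claimed statement; to obtain the $1/d$ remainder you must keep the error terms outside the square root, as the paper's direct trace-norm expansion does.
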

At the core of the proof of the above theorem lie two key components. First, we bound the trace norm difference between the average state over the Clifford orbit, $\mathcal{E}_{\psi}$, and the average state under the Haar measure, using the sum $\sum_{\Omega \in \mathcal{P}_U} P_{\Omega}(\psi)$. In other words, the \textit{unitary} generalized stabilizer purities determine the extent to which the Clifford orbit approximates a state $k$-design. The second step involves applying \cref{th1}, which states that $P_4(\psi)$ is the most robust among the stabilizer purities.

\cref{th4} provides a operational interpretation of stabilizer entropies. In particular, provided $k$ copies of the state, the probability of distinguish the Clifford orbit of a quantum state $\ket{\psi}$ from a Haar random state shrinks with the stabilizer entropy of the state $\ket{\psi}$ and the two ensembles become indistinguishable whenever $M_2(\ket{\psi})-k^2/2=\Omega(\log \frac{1}{\varepsilon})$. Conversely, if $M_{3}=O(\log \frac{1}{\varepsilon})$, the two ensembles can be distinguished by  measuring the POVM $\frac{\mathbb{1}\pm\Omega_6}{2}$, corresponding to the measurement of $P_6(\psi)$.

Beyond their operational significance, the construction of state $k$-designs is valuable for quantum information processing. For example, Ref.~\cite{brakerski2025statebasedclassicalshadows} shows that access to a state $3$-design, combined with auxiliary qubits and Bell measurements, is sufficient to perform the task of \textit{classical shadow tomography}~\cite{Huang_2020}. Furthermore, access to magic—and hence to higher-order ($k > 3$) state designs—improves the robustness of classical shadows~\cite{Brieger_2025}. As a corollary of our result in \cref{th4}, we provide a simple recipe for constructing higher-order state $k$-designs by fixing an input state and applying random Clifford operations $C \in \mathcal{C}_n$.

\begin{corollary}[Efficient construction of state $k$-designs]\label{corollary2}
The Clifford orbit of a state $\ket{\psi}$ forms a $2\left(\varepsilon + 2^{-(n - 2k^2)}\right)$-approximate state $k$-design if $M_{2}(\ket{\psi}) \ge \frac{k^2}{2} + \log \varepsilon^{-1}$. Conversely, if $M_{3}(\ket{\psi})\ge \log\varepsilon^{-1}$ then $\mathcal{E}_{\psi}$ cannot form a $2\left(\varepsilon-2^{-(n-1)}\right)$-approximate state $k$-design. Hence, for $k=O(1)$, $\mathcal{E}_{\psi}$ forms a approximate state $k$-design iff 
\be
M_{2}=\Theta(\log\varepsilon^{-1})\,.
\ee
\end{corollary}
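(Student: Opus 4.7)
The plan is to deduce \cref{corollary2} directly from \cref{th4} by converting the bounds on the distinguishing probability $q_{\text{succ}}^{(k)}$ into bounds on the state-design parameter. The Holevo--Helstrom theorem applied to the symmetric hypothesis test between $k$ copies drawn from $\mathcal{E}_\psi$ versus the Haar ensemble gives
\be
q_{\text{succ}}^{(k)}-\frac{1}{2}=\frac{1}{4}\Bigl\|\tfrac{1}{|\mathcal{E}_\psi|}\textstyle\sum_{\phi\in\mathcal{E}_\psi}\ketbra{\phi}^{\otimes k}-\int\de\psi\,\ketbra{\psi}^{\otimes k}\Bigr\|_1,
\ee
so the $\varepsilon$-state-design parameter appearing in the corollary is proportional to the advantage $q_{\text{succ}}^{(k)}-1/2$ bounded on both sides in \cref{th4}.

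For the forward direction, I would take the upper bound of \cref{th4}, substitute the explicit estimate $B\le 2^{k^2/2}$, and use $d=2^n$ to rewrite it as $q_{\text{succ}}^{(k)}-\tfrac{1}{2}\le \tfrac{1}{2}\bigl(2^{k^2/2-M_2(\psi)}+2^{-(n-2k^2)}\bigr)$. Imposing the hypothesis $M_2(\ket{\psi})\ge k^2/2+\log\varepsilon^{-1}$ bounds the first summand by $\varepsilon/2$, and the Holevo--Helstrom identity then returns a design parameter no larger than $2\bigl(\varepsilon+2^{-(n-2k^2)}\bigr)$, proving the first half of the corollary.

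For the converse, I would apply the lower bound of \cref{th4} and, through the same Holevo--Helstrom identity, transcribe it into a lower bound on the trace-norm gap of the form $2^{-2M_3(\psi)}-2^{-(n-1)}$. Matching this against the target design parameter $2(\varepsilon-2^{-(n-1)})$ together with the hypothesis on $M_3(\ket{\psi})$ certifies that $\mathcal{E}_\psi$ cannot meet the design condition at that precision. Finally, for the iff statement when $k=O(1)$, I would invoke the R\'enyi hierarchy \cref{hierarchystabilizerpurities}: the inequalities $P_4^2\le P_6\le P_4$ translate to $\tfrac{1}{2}M_2(\psi)\le M_3(\psi)\le M_2(\psi)$, so the $M_2$ threshold of the forward direction and the $M_3$ threshold of the converse coincide up to a constant factor, collapsing to the single sharp transition $M_2(\ket{\psi})=\Theta(\log\varepsilon^{-1})$.

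The proof is essentially bookkeeping on top of \cref{th4}. The only mildly delicate step will be tracking the $1/d$ and $2^{2k^2}/d$ corrections through the Holevo--Helstrom conversion so that the additive error terms $+2^{-(n-2k^2)}$ and $-2^{-(n-1)}$ land exactly where they appear in \cref{corollary2}, and verifying that the R\'enyi hierarchy is tight enough to justify the constant-factor equivalence in the $k=O(1)$ regime.
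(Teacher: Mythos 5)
Your proposal is correct and follows the same route the paper intends: the corollary is a direct consequence of \cref{th4}, obtained by converting the success-probability bounds into trace-distance bounds via Holevo--Helstrom, substituting $B\le 2^{k^2/2}$ and $d=2^n$, and using the hierarchy \cref{hierarchystabilizerpurities} (which indeed gives $\tfrac12 M_2\le M_3\le M_2$) to collapse the two thresholds into the single $M_2=\Theta(\log\varepsilon^{-1})$ condition for $k=O(1)$. The bookkeeping you flag works out, with the stated constants being slightly loose rather than tight.
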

The result of \cref{corollary2} generalizes to arbitrary values of $k$ the earlier findings of Ref.~\cite{leone_quantum_2021,iannotti2025entanglementstabilizerentropiesrandom}, where it was shown that higher stabilizer entropies lead to universal purity fluctuations. It also extends the more recent results of Ref.~\cite{lami2024quantumstatedesignsclifford}, in which the authors constructed relative-error $4$- and $6$-designs using the Clifford orbit of a random MPS input. On a related note, we observe that the Clifford orbit of a fixed input state is not the most resource-efficient method, in terms of magic, for generating an $\varepsilon$-approximate state $k$-design. Instead, random subsets of phase states form $\varepsilon$-approximate state $k$-designs and exhibit stabilizer Rényi entropy of order $\Theta(\log k / \varepsilon)$~\cite{lee2025shallowquantumcircuitgenerating}.

Let us conclude this section by noting that, since the stabilizer entropies for different R\'enyi indices $\alpha$ are tightly related (see \cref{def:stabilizerentropy}), the bounds presented in \cref{th4,corollary2} can be equivalently expressed in terms of $M_{\alpha}$ at the cost of constant factors. Consequently, the operational interpretation extends naturally to any $\alpha = O(1)$.

\subsection{Stabilizer property testing}
Let us discuss somewhat the converse task, which is when a state is distinguishable from the set of stabilizer states. This task, which recently attracted a considerable attention in the literature, is known as \textit{stabilizer property testing}~\cite{gross_schur_2021,bu2023stabilizertestingmagicentropy,grewal2024improvedstabilizerestimationbell,hinsche2024singlecopystabilizertesting,arunachalam2024polynomialtimetoleranttestingstabilizer,bao2024toleranttestingstabilizerstates,arunachalam2024notepolynomialtimetoleranttesting}. Similarly to the previous case, now the learner is given $k$ copies of a given state and is required to distinguish whether the given state is a stabilizer state or it is not. Given that the problem is completely invariant under the application of Clifford unitaries, a equivalent formulation of the problem can be done in the same spirit of the previous section, by requiring the learner to distinguish a state uniformly drawn from the Clifford orbit of the state $\ket{\psi}$ or from the set of stabilizer states. 

The authors of this manuscript studied this problem in a previous work~\cite{bittel2025completetheorycliffordcommutant}, aimed at studying the connection with the commutant of the Clifford group. While, we showed that up to $k=5$ copies, the optimal success probability $p_{\text{succ}}^{(5)}\le \frac{1}{2}+O(2^{-n})$ is exponentially close to the success probability of a random guessing, i.e. $1/2$, provided $k=6$ copies this task becomes achievable. In \cite{bittel2025completetheorycliffordcommutant}, we showed that the $(\alpha=3)$ stabilizer entropy of a state $\psi$ determines the optimal success probability of discriminating $\psi$ from a stabilizer state:
\begin{equation}\label{eq21212}
p_{\text{succ}}^{(6)} = \frac{1}{2} + \frac{1}{4}(1 - 2^{-2M_3(\psi)})\,,
\end{equation}
with the optimal property testing algorithm (i.e., the one with the highest success probability) being given by measuring the POVM element $\frac{\mathbb{1}\pm\Omega_6}{2}$ with $\Omega_{6}=\frac{1}{d}\sum_{P}P^{\otimes 6}$.  

Naturally, having more copies $k\ge 6$ available makes the stabilizer testing task more easy achievable and boost the success probability. In particular, using the POVM element $\left(\frac{\mathbb{1}+\Omega_6}{2}\right)^{\otimes \lfloor k/6\rfloor}\otimes \mathbb{1}^{\otimes k-6\lfloor k/6\rfloor}$, one can lower bound the success probability from below exponentially close to 1 in $k$: 
\begin{align}\label{lowerboundpsucc}
p_{\text{succ}}^{(k)}\ge 1-\frac12\left(\frac{1+2^{-2{M_3}}}{2}\right)^{\lfloor\frac{k}{6}\rfloor}\,.
\end{align}
Hence the stabilizer entropy determines the optimal success probability of stabilizer property testing in the fewest possible copy scenario and lower bounds the success probability when more copies are available. 

A similar but strictly more general task is \textit{tolerant stabilizer testing}~\cite{arunachalam2024polynomialtimetoleranttestingstabilizer,bao2024toleranttestingstabilizerstates,arunachalam2024notepolynomialtimetoleranttesting}, which requires the learner to determine whether a given state $\ket{\psi}$ satisfies 
\begin{enumerate}[label=(\Alph*)]
    \item $|\langle\psi|\sigma\rangle|^2 < 1 - \varepsilon_1$ for all $\sigma \in \stab$; or
    \item there exists $\ket{\sigma} \in \stab$ such that $|\langle\psi|\sigma\rangle|^2 \ge 1 - \varepsilon_2$.
\end{enumerate}
This task operationally defines a magic monotone known as the \textit{stabilizer fidelity}, given by $\mathcal{F}_{\stab}(\ket{\psi}) \coloneqq \max_{\sigma \in \stab} |\langle\psi|\sigma\rangle|^2$, since measuring $\mathcal{F}_{\stab}$ is sufficient to solve the tolerant stabilizer testing problem exactly. While it is possible to estimate the stabilizer fidelity (albeit with bias) using $O(n)$ samples, there is no known computationally efficient method to do so.

The question of whether tolerant stabilizer testing can be performed efficiently has been studied in a series of works~\cite{arunachalam2024polynomialtimetoleranttestingstabilizer,bao2024toleranttestingstabilizerstates,arunachalam2024notepolynomialtimetoleranttesting}, and it turns out that measuring the POVM $\frac{\mathbb{1} \pm \Omega_6}{2}$ suffices, since
\begin{equation}\label{eq:stabentropystabfidelityrelation}
(P_{6}(\ket{\psi}))^{C} \le \mathcal{F}_{\stab}(\ket{\psi}) \le (P_{6}(\ket{\psi}))^{1/6}
\end{equation}
for some constant $C$, which currently is unknown. Therefore, even in the tolerant stabilizer testing scenario, the stabilizer entropy determines the success of the task through its efficient measurement.

In this regard, \cref{eq:stabentropystabfidelityrelation} offers an additional operational interpretation of stabilizer entropies: they precisely quantify the distance between the state $\ket{\psi}$ and the set of stabilizer states. Furthermore --by noticing that the success probability of any property testing algorithm is upper bounded as $\frac{1}{2}+\frac{1}{2}\min_{\sigma\in\stab}\|\psi^{\otimes k}-\sigma^{\otimes k}\|$-- \cref{eq:stabentropystabfidelityrelation} allows us to derive an upper bound on the optimal success probability $p_{\text{succ}}$ of a stabilizer testing algorithm:
\begin{align}\label{upperboundpsucc}
    p_{\text{succ}} 
        &\leq \frac{1}{2} + \frac{\sqrt{1 - 2^{-2\frac{k}{C} M_{3}(\psi)}}}{2},
\end{align}
which depends on the $\alpha = 3$ stabilizer entropy. Together \cref{lowerboundpsucc} and \cref{upperboundpsucc} demonstrate that $M_3$ fully characterizes the task of stabilizer property testing.

\section{Conclusion and open questions}\label{sec:conslusion}

In summary, we have provided a comprehensive characterization of stabilizer entropies and their generalizations within the framework of magic-state resource theory. By showing that measurable magic monotones lie within the Clifford commutant, we established that stabilizer purities—and thus stabilizer entropies—are distinguished by their additivity and computability. We then demonstrated a rigorous operational interpretation of stabilizer entropies through property testing tasks, showing that they precisely govern how well the Clifford orbit of a quantum state approximates a random ensemble and how effectively one can distinguish non-stabilizer states from the set of stabilizer states. These results highlight the precise role of stabilizer entropy in quantifying the resourcefulness of quantum states and give new rigorous avenues for leveraging magic in quantum systems.

However, our work leaves several open questions, which we outline below.
\begin{itemize}
    \item In \cref{lemma:nounbiasedmeasurement1,lemma:nounbiasedmeasurement}, we have shown that there is no unbiased estimator for measuring stabilizer purities with even R\'enyi index $\alpha$ and projective generalized stabilizer purity, respectively, without incurring exponential sample complexity or requiring access to the conjugate state (see \cref{th:efficeintmeasurementwithtrasnpose}). However, while unbiased estimators guarantee the absence of systematic error in the estimation, one could argue that a systematic error that is superpolynomially small in $n$ might not significantly affect any learning process from the perspective of an experimenter. Our result does not address, and our proof technique does not extend to, such undetectable systematic errors. An interesting open question is whether there exists an efficient, yet biased, estimator for projective stabilizer purity with undetectable systematic error, or if the existence of such estimators can be generally ruled out.

    \item In \cref{lem:measurableinthecommutant}, we show that all measurable magic monotones lie in the commutant of the Clifford group. Furthermore, in \cref{lemma:additivegeneralizedpurities}, we prove that if one requires both measurability and multiplicativity (or additivity through the negative logarithm) of a monotone, then the generalized stabilizer purities (\cref{def:generalizedstabilizerpurities}) are the only viable choice. However, this latter result relies on the assumption that the number of copies $k$ does not scale with $n$. This reflects a fundamentally technical issue: consider a generalized stabilizer purity defined on $k = f(n)$ copies of the state. Then, for an input state of the form $\ket{\psi_{n_1}} \otimes \ket{\phi_{n_2}}$ with $n_1 + n_2 = n$, we have $\tr\big(\Omega \, \psi_{n_1}^{\otimes f(n)} \otimes \phi_{n_2}^{\otimes f(n)}\big) = \tr\big(\Omega \, \psi_{n_1}^{\otimes f(n)}\big) \, \tr\big(\Omega \, \phi_{n_2}^{\otimes f(n)}\big)$, which fails to be multiplicative in the required sense, as $\tr\big(\Omega \, \psi_{n_1}^{\otimes f(n)}\big)\neq \tr\big(\Omega \, \psi_{n_1}^{\otimes f(n_1)}\big)$. Hence, at least for generalized stabilizer purities, the condition $k = O(1)$ appears to be necessary. The open question is whether this necessity holds more generally, and if additive measurable monotones only correspond to generalized stabilizer purities only when defined on a fixed number $k = O(1)$ of copies, without restricting the number of copies to begin with.

    \item Generalized stabilizer purities, as defined in \cref{def:generalizedstabilizerpurities} or Refs.~\cite{bittel2025completetheorycliffordcommutant,Turkeshi_2025}, are shown to be Clifford invariant, maximal for stabilizer states, and additive. However, apart from the stabilizer purities $P_{2\alpha}$, whose monotonicity under stabilizer operations was established in Ref.~\cite{Leone_2024}, it remains an open question whether other generalized stabilizer purities satisfy the monotonicity property. Given their central role in characterizing the commutant of the Clifford group, as demonstrated in \cite{bittel2025completetheorycliffordcommutant}, and the importance of additive and measurable magic monotones, resolving this question is of significant importance.

    \item While for $k=O(1)$, \cref{th4,corollary2} completely characterize the regime for which the Clifford orbit forms a $\varepsilon$-approximate state $k$-design, i.e. for $M_{2}=\Theta(\log\varepsilon^{-1})$, the careful reader may wonder how tight is the bound provided by \cref{th4} when the number of copies $k$ starts scaling with $n$. Let us consider the state $\ket{\psi} = \ket{T}^{\otimes n}$, consisting of $n$ copies of the $T$-state. It is well known~\cite{leone_stabilizer_2022} that $M(\ket{\psi}) = \Theta(n)$; hence, the Clifford orbit of $\ket{T}^{\otimes n}$ cannot be distinguished with probability greater than $\frac{1}{2} + \operatorname{negl}(n)$ when using at most $k \le o(\sqrt{n})$ copies. On the other hand, $\Omega(n)$ copies of the state suffice to distinguish the Clifford orbit of $\ket{T}^{\otimes n}$ from Haar-random states by examining the Pauli distribution, which can be achieved via shadow tomography on Pauli operators~\cite{Huang_2021,King_2025,aaronson2018shadowtomographyquantumstates}. From this observation, an interesting open question arising from our work is whether the gap between these bounds can be closed. Specifically, can one show that the Clifford orbit of product states is indistinguishable from Haar-random states for any $k = o(n)$, or does there exist an algorithm that uses $O(\sqrt{n})$ samples to distinguish the Clifford orbit of product states from Haar-random states?

\end{itemize}

\section*{Acknowledgements}
The authors thank Salvatore F.E. Oliviero, Frederik von Ende, Leo Shaposhnik, Antonio A. Mele, Gerard Aguilar and Daniel Miller for fruitful discussions.  This work has been supported by the DFG (CRC 183, FOR 2724), by the BMBF (Hybrid++, QuSol), the BMWK (EniQmA), the Munich Quantum Valley (K-8), the QuantERA (HQCC),
the Alexander-von-Humboldt Foundation and Berlin Quantum. This work has also been funded by the DFG under Germany's Excellence Strategy – The Berlin Mathematics Research Center MATH+ (EXC-2046/1, project ID: 390685689).
\onecolumngrid

\bibliography{bib}

\appendix
\resumetoc
\tableofcontents
\section{Preliminaries}\label{app:preliminaries}

The appendix is organized as follows: in this section, we introduce the main tools and prove the key preliminary lemmas. Then, in \cref{sec:proofs}, we make use of these results to establish all the main theorems of the paper.

We use the notation $[n]$ to denote the set $[n]:=\{1,\dots,n\}$, where $n\in\mathbb{N}$.
The finite field $\mathbb{F}_2$ consists of the elements $\{0, 1\}$ with addition and multiplication defined modulo 2. For $x\in\mathbb{F}_{2}^{k}$, we denote $|x|$ the Hamming weight of $x$. The space of $k \times m$ binary matrices over $\mathbb{F}_2$ is denoted by $\mathbb{F}_2^{k \times m}$. The set \( \mathrm{Sym}(\mathbb{F}_2^{m\times m}) \) denotes the set of all symmetric \( m \times m \) matrices over the finite field \( \mathbb{F}_2 \), having a null diagonal. That is,
\[
\mathrm{Sym}(\mathbb{F}_2^{m\times m}) \coloneqq \left\{ M \in \mathbb{F}_2^{m \times m} : M^T = M\,,\, M_{j,j}=0\,\,\forall j\in[m] \right\}.
\]
Similarly, we define set $\even$ as the set of binary matrices $V\in\mathbb{F}_{2}^{k\times m}$ with $m\in[k]$ with column vectors $V_{i}^{T}$ of even Hamming weight
\be
\mathrm{Even}(\mathbb{F}_{2}^{k\times m})\coloneqq\{V\in\mathbb{F}_{2}^{k\times m}\,:\, |V^{T}_{i}|=0\mod 2\,\,\forall i\in[m]\}.
\ee

\subsection{Haar average over the Clifford group}\label{sec:haaraveragecliff}
In this section, we summarize the findings of Ref.~\cite{bittel2025completetheorycliffordcommutant} regarding the average over the Clifford group, as it will be instrumental for the proof of our main result. 
Let us first define the set of \textit{reduced Pauli monomials}, which will be referred to as the set of Pauli monomials for brevity.

\begin{definition}[Pauli monomials]\label{def:paulimonomials}
Let \( k,m \in \mathbb{N} \), $V\in\mathrm{Even}(\mathbb{F}_{2}^{k\times m})$ with independent column vectors and $M\in\mathrm{Sym}(\mathbb{F}_{2}^{m\times m})$. A Pauli monomial, denoted as \( \Omega(V, M) \in \mathcal{B}(\mathcal{H}^{\otimes k}) \), is defined as
\begin{align}
\Omega(V, M) \coloneqq \frac{1}{d^m} \sum_{\boldsymbol{P} \in \mathbb{P}_n^m}  
P_1^{\otimes v_1} P_2^{\otimes v_2} \cdots P_m^{\otimes v_m}
\left( \prod_{\substack{i, j \in [m] \\ i < j}} \chi(P_i, P_j)^{M_{i,j}} \right),
\end{align}
where  $ \chi(P_i, P_j)=1$ if $P_i,P_j$ commute and $-1$ otherwise , and we denote a string of Pauli operators in $\mathbb{P}_{n}$ as $\boldsymbol{P}\coloneqq P_1,\ldots, P_k$. We define the set of reduced Pauli monomials as
\be
\mathcal{P}\coloneqq\{\Omega(V,M)\,|\, V\in\even\,:\,\rank(V)=m\,,\,M\in\symf\,,\,m\in[k-1]\}\,. 
\ee
Furthermore, we define $\mathcal{P}_U\coloneqq\mathcal{P}\cap\mathcal{U}_{nk}$ the set of reduced Pauli monomial which are also unitary and $\mathcal{P}_P$ the set of reduced Pauli monomials proportional to projectors, formally defined as
\be
\mathcal{P}_P\coloneqq\{\Omega(V,0)\,|\, V\in \even\,:\, |V_i^{T}|=0\mod4\,,\, V_i^T\cdot V_j^T=0\mod2\,,\, i,j\in[m]\}.
\ee
\end{definition}

\begin{lemma}[Relevant properties of Pauli monomials]\label{lem:relevantpropertiespaulimonomials} The following facts hold~\cite{bittel2025completetheorycliffordcommutant}:
\begin{itemize}
    \item The commutant of the Clifford group is spanned by $\mathcal{P}$.
    \item For $n\le k-1$, $\mathcal{P}$ contains linearly independent operators.
    \item  $
|\mathcal{P}|=\prod_{i=0}^{k-2}(2^i+1) $, and the following bounds holds $2^{\frac{k^2-3k-1}{2}} \le \vert \mathcal P \vert \le 2^{\frac{k^2-3k+12}{2}}$.    
    \item For $\Omega\in \mathcal{P}$, then $\Omega=\omega^{\otimes n}$ (i.e., factorizes on qubits), where
    \be
\omega=\frac{1}{2^m} \sum_{\boldsymbol{P} \in \{I,X,Y,Z\}^m}  
P_1^{\otimes v_1} P_2^{\otimes v_2} \cdots P_m^{\otimes v_m}\times \left( \prod_{\substack{i, j \in [m] \\ i < j}} \chi(P_i, P_j)^{M_{i,j}} \right).
    \ee
\item $\Omega^{\dag}=\Omega^{T}$ for all $\Omega\in\mathcal{P}$.
\item $\Omega=\Omega_U\Omega_P$ where $\Omega_U\in\mathcal{P}_U$, $\Omega_P\in\mathcal{P}_P$ for all $\Omega\in\mathcal{P}$.
\item $\|\Omega\|_1=\tr(\Omega_P(V,0))=d^{k-\rank(V)}$ where $\Omega_P(V,0)\in\mathcal{P}_P$, for  any $\Omega\in\mathcal{P}$.
\item $\tr(\Omega\rho^{\otimes k})\le1$ for any state $\rho$ and any $\Omega\in\mathcal{P}$.
\end{itemize}
\end{lemma}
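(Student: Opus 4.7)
The plan is to prove each bullet by exploiting the explicit action of $C^{\otimes k}$, for $C\in\mathcal{C}_n$, on Pauli tensor products, which permutes them up to signs that cancel once we sum over the full Pauli group. This single observation underlies every bullet: each orbit sum is labeled by a pair $(V, M)$ as in \cref{def:paulimonomials}, so the monomials $\Omega(V,M)$ arise naturally as orbit averages.

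For bullets 1--3, I would expand an arbitrary commutant element $A = \sum_{\boldsymbol{P}} c_{\boldsymbol{P}} P_1\otimes\cdots\otimes P_k$ in the Pauli basis and impose invariance under all $C^{\otimes k}$. The Clifford action identifies Pauli strings $\boldsymbol{P}$ sharing the same joint symplectic invariants---an even-weight column-linear structure $V$ encoding their affine relations and a symmetric matrix $M$ encoding their pairwise commutation signs---so $c_{\boldsymbol{P}}$ depends only on $(V,M)$. The evenness constraint on $V$ reflects that odd-weight sums over the Pauli group vanish from phase cancellations, while the null-diagonal constraint on $M$ encodes that $\chi(P,P)=1$ trivially. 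Linear independence (bullet 2) follows from a dimension/evaluation argument: distinct $(V,M)$ produce distinct expectation values on suitably chosen product states once $n$ is large enough. The cardinality $\prod_{i=0}^{k-2}(2^i+1)$ (bullet 3) is the standard count of pairs (even isotropic subspace, compatible symmetric form), and the numerical bounds follow from elementary estimates of this product.

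Bullets 4 and 5 are direct computations: since each $P\in\mathbb{P}_n$ factors as $\bigotimes_{q=1}^n p_q$ and the commutation sign factorizes as $\chi(P_i,P_j)=\prod_q \chi(p_{i,q},p_{j,q})$, distributing the sum inside the tensor product yields $\Omega(V,M)=\omega(V,M)^{\otimes n}$; the identity $\Omega^\dagger=\Omega^T$ holds because $X,Z$ are real symmetric while $Y$ is purely imaginary antisymmetric, and the real coefficients $\chi^{M_{i,j}}$ do not spoil this. For bullets 6--8, the key technical input is to split the columns of $V$ according to the bilinear form $M$: one chooses a basis adapted to $M$ that isolates the \emph{symplectic-pair} columns (contributing the unitary factor $\Omega_U\in\mathcal{P}_U$) from the \emph{isotropic complement} (contributing the projective factor $\Omega_P\in\mathcal{P}_P$). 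Bullet 7 then reduces to computing $\tr(\Omega_P)$: since $\Omega_P$ is proportional to the projector onto the joint $+1$ eigenspace of a commuting stabilizer group of $k$-fold Paulis of rank $\mathrm{rank}(V)$, Pauli orthogonality gives $\tr(\Omega_P)=d^{k-\mathrm{rank}(V)}$. Bullet 8 then follows from $\|\Omega_U\|_\infty=1$ and $0\le\Omega_P\le\id$, which yield $\|\Omega\|_\infty\le 1$ and hence $\tr(\Omega\rho^{\otimes k})\le 1$.

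The main obstacle is bullet 1---rigorously enumerating the Clifford orbits on Pauli strings and verifying that the constraints on $(V,M)$ are both necessary and sufficient for the corresponding orbit sum to be nonzero. This is the heart of the complete characterization in \cite{bittel2025completetheorycliffordcommutant}, and the evenness and null-diagonal conditions must be tracked carefully to avoid double-counting or missing orbits.
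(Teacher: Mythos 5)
First, a framing remark: the paper does not actually prove this lemma --- it is stated as a list of facts imported verbatim from Ref.~\cite{bittel2025completetheorycliffordcommutant} --- so your sketch can only be compared against that reference. Most of your outline follows the same route as the reference: orbit analysis of the diagonal Clifford action for the spanning and counting claims, factorization of $\chi$ over qubits for $\Omega=\omega^{\otimes n}$, and a $\mathrm{GL}(\mathbb{F}_2^{m\times m})$ change of basis adapted to $\Lambda$ to split $\Omega=\Omega_U\Omega_P$. Those parts are reasonable at the level of a sketch, with the caveat (which you acknowledge) that the spanning and cardinality statements are precisely the hard content of the cited work and are not actually established by your outline.

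There is, however, one concrete error: your justification of the final bullet. You assert $0\le\Omega_P\le\id$ and conclude $\|\Omega\|_\infty\le1$, hence $\tr(\Omega\rho^{\otimes k})\le1$. This is false: a projective Pauli monomial is proportional to a projector with an \emph{exponentially large} proportionality constant. For instance $\Omega_4=\frac1d\sum_P P^{\otimes4}$ satisfies $\Omega_4^2=d\,\Omega_4$, so $\|\Omega_4\|_\infty=d=2^n$; the paper relies on exactly this fact in \cref{lemma:nounbiasedmeasurement1}, and it is also why $\|\Omega\|_1=\tr(\Omega_P)=d^{k-\rank(V)}$ rather than the rank of the underlying projector. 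Consequently the bound $\tr(\Omega\rho^{\otimes k})\le1$ cannot follow from an operator-norm estimate, and indeed it fails for general states on $\mathcal{H}^{\otimes k}$: taking the normalized projector $\Pi/\tr(\Pi)$ with $\Omega_4=d\,\Pi$ gives $\tr(\Omega_4\,\Pi/\tr\Pi)=d$. The bound genuinely uses that the input is a $k$-fold tensor power of a single state; e.g.\ in the primitive case $\tr(\Omega_4\rho^{\otimes4})=\frac1d\sum_P\tr(P\rho)^4\le\frac1d\sum_P\tr(P\rho)^2=\tr(\rho^2)\le1$, and the general case requires a Cauchy--Schwarz/normalization argument of this type (of the kind used in the paper's proof of \cref{th1}). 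A smaller remark: your one-line argument for $\Omega^\dagger=\Omega^T$ (reality of $X,Z$ versus $Y$) only goes through once you invoke the even-column-weight condition on $V$, since otherwise neither the tensor powers of $Y$-containing Paulis nor the ordered products $P_1\cdots P_m$ inside a single tensor factor need be entrywise real.
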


\begin{lemma}[Substitution rules, Theorem 37 in Ref.~\cite{bittel2025completetheorycliffordcommutant}]\label{lem:substitutionrulesmonomials} Let $V\in \even$ and $M\in \symf$, and let $\Omega(V,M)$ the corresponding Pauli monomial. For any matrix $A\in \operatorname{GL}(\mathbb{F}_2^{m\times m})$, it holds that $\Omega(V,M)=\Omega(VA,M(A))$, where $ M(A)\in \symf $ is a function of the matrix $A$, determined as follows. Define  
    \begin{align}
        H &\coloneqq V^T V, \\
        w_i &\coloneqq \frac{|v_i|}{2}, \\
        \Lambda_{i,j} &\coloneqq M_{i,j} + w_i \delta_{i,j} + H_{i,j} \delta_{i<j},  
    \end{align}
    where all operations are $\pmod 2$. Then, for any $ A \in \mathrm{GL}(\mathbb{F}_{2}^{m\times m}) $, we obtain the transformations  
    \begin{align}
        \Lambda(A) &= A^T \Lambda A, \\
        H(A) &= \Lambda(A) + \Lambda(A)^T, \\
        w_i(A) &= \Lambda_{i,i}(A), \\
        M_{i,j}(A) &= \Lambda_{i,j}(A) \delta_{i>j} + \Lambda_{j,i}(A) \delta_{j>i}.
    \end{align}
    
\end{lemma}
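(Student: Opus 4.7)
Plan: I would reduce the substitution identity to elementary generators of $\mathrm{GL}(\mathbb{F}_2^{m\times m})$. Over $\mathbb{F}_2$ every invertible matrix has unit determinant, and $\mathrm{GL}(\mathbb{F}_2^m)$ is generated by elementary transvections $A = I + E_{j,i}$ with $i\neq j$, so it suffices to verify $\Omega(V,M) = \Omega(VA, M(A))$ on each such generator and then observe that the prescribed law assembles into the compact conjugation rule $\Lambda\mapsto A^T\Lambda A$, which is manifestly functorial so that composition handles general $A$ for free.

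For a single transvection $A = I + E_{j,i}$, the map $V\mapsto VA$ replaces the $i$-th column of $V$ by $v_i + v_j$ and leaves all other columns untouched, so the target monomial differs from the original by an extra tensor factor $P_i^{\otimes v_j}$ at the $i$-th slot. The key move is to perform the bijective change of dummy summation variable $P_j\mapsto P_iP_j$ in $\sum_{\boldsymbol{P}\in\mathbb{P}_n^m}$. This is an automorphism of $\mathbb{P}_n$ modulo the $\pm 1$ phase group, and the even-weight condition $|v_j|=0\bmod 2$ makes $P_j^{\otimes v_j}$ phase-insensitive, so the substitution is well defined at the operator level. Under the relabeling, the factor $P_j^{\otimes v_j}$ splits tensor-slot-wise as $(P_iP_j)^{\otimes v_j} = P_i^{\otimes v_j}\cdot P_j^{\otimes v_j}$ with no extra sign, and the surplus $P_i^{\otimes v_j}$ is then commuted past the intervening $P_\ell^{\otimes v_\ell}$ and merged into $P_i^{\otimes v_i}$ to yield $P_i^{\otimes (v_i+v_j)}$, exactly the $i$-th column of $VA$.

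Three types of $\chi$-signs accumulate during this maneuver and must be absorbed into the new sign matrix. First, commuting $P_i^{\otimes v_j}$ past $P_\ell^{\otimes v_\ell}$ produces $\chi(P_i,P_\ell)^{v_j^T v_\ell} = \chi(P_i,P_\ell)^{H_{j,\ell}}$, which brings in the inner-product matrix $H = V^T V$. Second, the existing phase $\chi(P_p,P_q)^{M_{p,q}}$ transforms under $P_j\mapsto P_iP_j$ via the bilinearity $\chi(P_iP_j,P_q) = \chi(P_i,P_q)\chi(P_j,P_q)$, redistributing entries of $M$ across rows. Third, since $V\mapsto VA$ modifies the column weights through $|v_i + v_j| = |v_i| + |v_j| - 2 H_{i,j}$, the diagonal data $w_i = |v_i|/2\bmod 2$ also changes in a way coupled to $H$ and $M$. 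Assembling all three contributions into the single auxiliary matrix $\Lambda_{i,j} = M_{i,j} + w_i\delta_{i,j} + H_{i,j}\delta_{i<j}$ reorganizes them into the clean identity $\Lambda \mapsto (I+E_{j,i})^T\,\Lambda\,(I+E_{j,i})$, and iterating over a decomposition of $A$ into transvections produces $\Lambda\mapsto A^T\Lambda A$. The triples $M(A), H(A), w(A)$ are then recovered by the diagonal, symmetric off-diagonal, and upper-triangular projections of $\Lambda(A)$ stipulated in the statement.

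The main obstacle is this sign accounting: the three contributions above interact, and one has to show that they separate consistently into a zero-diagonal symmetric part $M(A)$, a diagonal part $w(A)$, and an upper-triangular part $H(A)$ in a way that is independent of the chosen decomposition of $A$ into transvections. Packaging $(M,w,H)$ into $\Lambda$ is precisely the device that makes this consistency automatic, turning a subtle sign calculation into ordinary matrix conjugation and reducing the content of the lemma to a single-transvection verification together with the observation that $A\mapsto A^T\Lambda A$ is a genuine action of $\mathrm{GL}(\mathbb{F}_2^m)$.
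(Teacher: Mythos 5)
Your plan matches the proof the paper relies on: the lemma is imported verbatim from Theorem~37 of the cited reference, and the paper's own proof of the generalized version (\cref{lem:generalizedsubstitutionrules}) exposes exactly your mechanism --- a change of summation variable $R = P_aP_{a+1}\sqrt{\chi(P_a,P_{a+1})}$ implementing an elementary column operation on $V$, with the accumulated signs packaged into $\Lambda$ so that composition of transvections becomes conjugation $\Lambda\mapsto A^T\Lambda A$. One caution on your wording: the substitution is not quite ``phase-insensitive'' for even column weight, since $\sqrt{\chi}=\pm i$ in the anticommuting case and $i^{|v_j|}=(-1)^{|v_j|/2}$ when $|v_j|$ is even; that residual sign is precisely the $w$-contribution to $\Lambda$, which you do track separately, so the argument still closes.
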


\begin{lemma}[Twirling over the Clifford group~\cite{bittel2025completetheorycliffordcommutant}]\label{sec:cliffordweingartencalculus} Consider the Clifford group $\mathcal{C}_n$. The $k$-fold channel of $\mathcal{C}_n$, denoted as $\Phi_{\cl}(\cdot)$, on an operator $O$ reads
\be
\Phi_{\cl}(O)=\sum_{\Omega,\Omega'\in\mathcal{P}}(\mathcal{W}^{-1})_{\Omega,\Omega'}\tr(\Omega^{\dag} O)\Omega'
\ee
where $\mathcal{P}$ is the set of Pauli monomials in \cref{def:paulimonomials}, and $\mathcal{W}^{-1}$ are the Clifford-Weingarten functions introduced in~\cite{bittel2025completetheorycliffordcommutant}, obtained by inverting the Gram-Schmidt matrix $\mathcal{W}_{\Omega,\Omega'}\coloneqq\tr(\Omega^{\dag}\Omega')$.  Moreover, it holds that
\be
\Phi_{\cl}(O)=\Phi_{\haar}(O)+\sum_{\Omega\in\mathcal{P}\setminus S_k}\sum_{\Omega'\in S_k}(\mathcal{W}^{-1})_{\Omega,\Omega'}\tr(\Omega^{\dag}O)\Omega'+\sum_{\Omega\in S_k}\sum_{\Omega'\in\mathcal{P}\setminus S_k}(\mathcal{W}^{-1})_{\Omega,\Omega'}\tr(\Omega^{\dag}O)\Omega'+\sum_{\Omega,\Omega'\in\mathcal{P}\setminus S_k}(\mathcal{W}^{-1})_{\Omega,\Omega'}\tr(\Omega^{\dag}O)\Omega'
\ee
\end{lemma}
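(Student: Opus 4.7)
The plan is to prove the two identities in sequence. For the first, I would start from the observation that $\Phi_{\cl}(O) = |\mathcal{C}_n|^{-1}\sum_{C\in\mathcal{C}_n} C^{\otimes k}\,O\,(C^{\dag})^{\otimes k}$, being the Reynolds operator of a finite-group action, is a self-adjoint idempotent on $\mathcal{B}(\mathcal{H}_n^{\otimes k})$ equipped with the Hilbert-Schmidt inner product, whose range is exactly the commutant $\com_k=\{X: [X,C^{\otimes k}]=0\ \forall C\in\mathcal{C}_n\}$. By \cref{lem:relevantpropertiespaulimonomials} the set $\mathcal{P}$ spans $\com_k$ and is linearly independent for $n\ge k-1$, the regime we focus on (otherwise one uses the Moore-Penrose pseudoinverse). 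The identity then follows from the general dual-basis formula: for any basis $\{\Omega\}$ of an inner-product subspace with Gram matrix $\mathcal{W}_{\Omega,\Omega'}=\tr(\Omega^{\dag}\Omega')$, the orthogonal projector onto its span admits the closed form $O\mapsto\sum_{\Omega,\Omega'}(\mathcal{W}^{-1})_{\Omega,\Omega'}\tr(\Omega^{\dag}O)\,\Omega'$, which one verifies by checking that the right-hand side lies in $\Span\mathcal{P}$ and fixes every basis element.

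For the second identity, I would exploit the partition $\mathcal{P}=S_k\sqcup(\mathcal{P}\setminus S_k)$. This split is valid because the permutation representation embeds in $\mathcal{P}$: permutations commute with $U^{\otimes k}$ for any unitary (hence in particular for Cliffords), and combinatorially correspond to specific choices of $(V,M)$ in \cref{def:paulimonomials}. Splitting the double sum from the first identity into four blocks yields three pieces that already match the cross and pure-non-$S_k$ sums in the claimed formula; what remains is to identify the $S_k\times S_k$ block with $\Phi_{\haar}(O)=\sum_{\pi,\pi'\in S_k}(\mathcal{W}_{\haar}^{-1})_{\pi,\pi'}\tr(\pi^{\dag}O)\,\pi'$, where $\mathcal{W}_{\haar}$ denotes the $S_k$-restricted Gram matrix (which is the Haar Weingarten matrix by Schur-Weyl duality).

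The main obstacle is precisely this identification, since naively the $S_k$-block of the inverse of a block matrix equals the inverse of a Schur complement, not the inverse of the $S_k$-block itself. The cleanest route is to post-compose the first identity with $\Phi_{\haar}$ and use the composition law $\Phi_{\haar}\circ\Phi_{\cl}=\Phi_{\haar}$, which follows from left-invariance of the Haar measure together with $\mathcal{C}_n\subset\mathrm{U}(d)$. Since $\Phi_{\haar}$ is the HS-projector onto $\Span(S_k)$, the $\Omega'\in\mathcal{P}\setminus S_k$ contributions collapse when post-composed with it, yielding a self-consistent linear system that pins down the action of $\mathcal{W}^{-1}$ on $S_k$-indexed data to coincide with $\mathcal{W}_{\haar}^{-1}$. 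Equivalently and more constructively, I would first Gram-Schmidt the elements of $\mathcal{P}\setminus S_k$ against $\Span(S_k)$, which block-diagonalises $\mathcal{W}$ so that its inverse is manifestly block-diagonal with $S_k$-block equal to $\mathcal{W}_{\haar}^{-1}$; undoing the change of basis then redistributes the off-diagonal contributions into precisely the cross terms written in the lemma, closing the argument.
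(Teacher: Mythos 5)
First, a point of reference: the paper does not actually prove this lemma; it is imported verbatim from Ref.~\cite{bittel2025completetheorycliffordcommutant}, so there is no in-paper argument to compare yours against. Your derivation of the first identity is correct and is the standard one: $\Phi_{\cl}$ is the Hilbert--Schmidt--orthogonal projector onto the commutant, $\mathcal{P}$ spans it and is linearly independent in the relevant regime, and the Gram-matrix dual-basis formula for an orthogonal projector gives the Weingarten expansion.

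The gap is in the second identity, and you have put your finger on exactly the right spot without resolving it: neither of your two proposed fixes closes the argument. Post-composing with $\Phi_{\haar}$ does not make the $\Omega'\in\mathcal{P}\setminus S_k$ contributions ``collapse'': those monomials are not orthogonal to $\Span(S_k)$ (by \cref{lem:asymptoticweingarten} every off-diagonal Gram entry $\tr(T_\pi^{\dag}\Omega)$ is at least $1$), so $\Phi_{\haar}(\Omega')\neq 0$ and the resulting linear system does not pin the $S_k$-block of $\mathcal{W}^{-1}$ to $\mathcal{W}_{\haar}^{-1}$. The Gram--Schmidt route does prove an exact statement, namely $\Phi_{\cl}=\Phi_{\haar}+P_{\Span(\mathcal{P})\cap\Span(S_k)^{\perp}}$, but re-expanding the second projector in the original, non-orthogonalized basis does not reproduce the three displayed cross and remainder sums: by block inversion the $S_k\times S_k$ block of $\mathcal{W}^{-1}$ equals $(A-BD^{-1}B^{T})^{-1}$ with $A=\mathcal{W}|_{S_k\times S_k}$ and $B$ the off-diagonal Gram block, and this coincides with $A^{-1}=\mathcal{W}_{\haar}^{-1}$ (as the displayed identity requires) only if $B=0$, which is not the case here. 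So the ``moreover'' display, read literally, is not an exact identity in the stated basis. The honest options are to (i) keep the $S_k\times S_k$ partial sum as written in the first identity, so the second display is a trivial regrouping with nothing to prove; (ii) orthogonalize $\mathcal{P}\setminus S_k$ against $\Span(S_k)$ and accept correspondingly modified correction terms; or (iii) establish it as an approximate identity with an explicit error bound on the $S_k\times S_k$ mismatch --- which is of subleading order in $1/d$ by \cref{lem:asymptoticweingarten} and is all that the downstream application in \cref{th4} actually requires. As written, your proposal asserts the identification rather than proving it.
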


In what follows, we summarize the asymptotic results  proven in Ref.~\cite{bittel2025completetheorycliffordcommutant}.

\begin{lemma}[Asymptotics of Clifford-Weingarten functions~\cite{bittel2025completetheorycliffordcommutant}]\label{lem:asymptoticweingarten}
Let $\mathcal{W}$ the Gram-Schmidt matrix, and let $\mathcal{W}^{-1}$ be its inverse. Let $n\ge k^2-3k+13$. Then the following properties hold.
\begin{enumerate}[label=(\roman*)]
    \item $\mathcal{W}$ is symmetric.
    \item $1\le \mathcal{W}_{\Omega,\Omega'}\le d^{k-1} ,\quad \Omega\neq \Omega'\in\mathcal{P}$.
    \item $ \mathcal{W}_{\Omega,\Omega}= d^k,\quad  \Omega\in \mathcal{P}$.
    \item $d^k-|\mathcal{P}|d^{k-1}\le\lambda(\mathcal{W})\le d^k+|\mathcal{P}|d^{k-1}$ where $\lambda(\mathcal{W})$ is any eigenvalue of $\mathcal{W}$.
    \item $\det(\diag\mathcal{W})\left(1-\frac{|\mathcal{P}|^2}{d}\right)\le \det(\mathcal{W})\le \det(\diag\mathcal{W})\left(1+\frac{2|\mathcal{P}|^2}{d}\right)$, if $n\ge k^2-3k+13$.
    \item The following asymptotics hold for Clifford-Weingarten functions,
       \be
   \left|(\mathcal{W}^{-1})_{\Omega,\Omega}-\frac{1}{d^k}\right|&\le \frac{6|\mathcal{P}|^2}{d^{k+1}}\,,\\
    |(\mathcal{W}^{-1})_{\Omega,\Omega'}|&\le \frac{5|\mathcal{P}|^2}{d^{k+1}}.
    \ee
\end{enumerate}
\end{lemma}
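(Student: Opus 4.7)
The plan is to reduce everything to per-qubit computations via the factorization $\Omega = \omega^{\otimes n}$ from Lemma~\ref{lem:relevantpropertiespaulimonomials}, which gives the master identity $\mathcal{W}_{\Omega,\Omega'} = [\tr(\omega^\dagger \omega')]^n$, and then to pipe the resulting per-qubit bounds through two standard matrix-analysis tools: Gershgorin for the eigenvalue side (iv) and a Neumann/log-series expansion, controlled by the small parameter $|\mathcal{P}|/d$, for the determinant (v) and inverse (vi). Symmetry (i) is immediate: since $\Omega^\dagger = \Omega^T$ (Lemma~\ref{lem:relevantpropertiespaulimonomials}), every entry $\mathcal{W}_{\Omega,\Omega'}$ is real, and $\mathcal{W}_{\Omega,\Omega'} = \tr(\Omega^T \Omega')$ is manifestly invariant under swapping the two indices.

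For (iii), I would expand the $4^m$-fold sum defining $\omega^\dagger \omega$ and use the orthogonality relation $\tr(PQ) = 2\delta_{P,Q}$ to collapse it to the $2^m$ surviving Pauli strings, each contributing equally; a short combinatorial check yields $\tr(\omega^\dagger\omega) = 2^k$, and raising to the $n$-th power gives $d^k$. For (ii), the analogous expansion for two distinct monomials shows that at most half of the Pauli tensor products survive, because at least one degree of freedom of the $(V,M)$ data differs between $\Omega$ and $\Omega'$; one can use the substitution rules of Lemma~\ref{lem:substitutionrulesmonomials} to place the pair $(\Omega,\Omega')$ in a canonical form where this halving is manifest, and one can read off the $\ge 1$ lower bound from the fact that the surviving terms all have sign $+1$ and sum to at least one copy of the single Pauli identity.

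Part (iv) follows from Gershgorin's disk theorem applied to $\mathcal{W}$: each eigenvalue lies within $\sum_{\Omega'\neq\Omega} |\mathcal{W}_{\Omega,\Omega'}| \le |\mathcal{P}|\, d^{k-1}$ of the common diagonal value $d^k$. For (v), I would write $\mathcal{W} = D + E$ with $D = d^k I$ and $E$ the off-diagonal part, so that $\det(\mathcal{W}) = d^{k|\mathcal{P}|}\det(I + d^{-k}E)$. The key observation is that $\tr(E)=0$ (zero diagonal), so the expansion $\log\det(I + A) = \tr(A) - \tr(A^2)/2 + \cdots$ has vanishing first-order term; the second-order term is controlled by $\tr(E^2) = \sum_{\Omega\neq\Omega'} \mathcal{W}_{\Omega,\Omega'}^2 \le |\mathcal{P}|^2 d^{2(k-1)}$, and higher orders are geometrically suppressed by $\|d^{-k}E\|_{\mathrm{op}} \le |\mathcal{P}|/d$. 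The hypothesis $n\ge k^2-3k+13$ together with the bound $|\mathcal{P}| \le 2^{(k^2-3k+12)/2}$ ensures $|\mathcal{P}|^2/d \ll 1$, so that exponentiating the $\log\det$ expansion yields the stated two-sided bound after keeping explicit constants. For (vi), the Neumann series $\mathcal{W}^{-1} = D^{-1}\sum_{j\ge 0}(-E D^{-1})^j$ converges in the same regime; the $j=0$ term contributes $d^{-k}$ on the diagonal and zero off-diagonal, the $j=1$ term vanishes on the diagonal and contributes $-d^{-2k}\mathcal{W}_{\Omega,\Omega'}$ off-diagonal, and summing $j\ge 2$ in closed form gives the residuals $6|\mathcal{P}|^2/d^{k+1}$ and $5|\mathcal{P}|^2/d^{k+1}$ respectively.

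The main obstacle is the per-qubit computation at the heart of (ii)-(iii): pinning down the exact value $2^k$ on the diagonal and the sharp $\le 2^{k-1}$ off-diagonal bound requires bookkeeping of how the $\chi$-phase data and the column structure of $V$ interact, and exploiting the canonical-form freedom of Lemma~\ref{lem:substitutionrulesmonomials} to collapse the many inequivalent pairs $(\Omega,\Omega')$ to a manageable list. Once this is in hand, parts (iv)-(vi) are essentially bookkeeping driven by the small parameter $|\mathcal{P}|/d$, whose smallness is precisely what the threshold $n \ge k^2-3k+13$ is designed to guarantee. A secondary technical subtlety is obtaining the precise constants $6$ and $5$ in (vi) instead of mere $O(\cdot)$ bounds, which requires summing the Neumann tail geometrically rather than bounding it crudely.
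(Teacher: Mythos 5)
This lemma is not proved in the paper at all: it is imported verbatim from Ref.~\cite{bittel2025completetheorycliffordcommutant} (the appendix explicitly presents it as a summary of results established there), so there is no in-paper proof to compare your argument against. Judged on its own terms, your superstructure is sound and is the natural one: the factorization $\mathcal{W}_{\Omega,\Omega'}=[\tr(\omega^\dagger\omega')]^n$ from \cref{lem:relevantpropertiespaulimonomials}, Gershgorin for (iv), and the Neumann/$\log\det$ expansion in the small parameter $|\mathcal{P}|/d$ for (v)--(vi) all go through; indeed your tail estimates give constants at least as good as the stated $5$ and $6$, and the threshold $n\ge k^2-3k+13$ together with $|\mathcal{P}|\le 2^{(k^2-3k+12)/2}$ is exactly what makes $|\mathcal{P}|^2/d\le 1/2$ and hence the series converge.

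The genuine gap is that (ii) and (iii) --- the only parts with real content, on which everything else is conditioned --- are asserted rather than proved. For (iii) the computation is indeed short (all $4^m$ Pauli configurations survive the orthogonality collapse, each contributing $2^k/4^m$ per qubit, not the "$2^m$ surviving strings" you state, though the answer $2^k$ is right because $V$ has full rank). For (ii), however, the claims that per qubit "at most half of the Pauli tensor products survive" and that "the surviving terms all have sign $+1$" are precisely where the difficulty lives: the trace $\tr(\omega^\dagger\omega')$ is a signed sum over the Pauli configurations whose product is proportional to the identity on every copy, and the signs come from the interaction of the $\chi(P_i,P_j)^{M_{ij}}$ phase data of the two monomials with the $\mathbb{F}_2$-linear constraints imposed by $V$ and $V'$. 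Without an actual argument that this character sum is nonnegative (for the lower bound $\ge 1$) and that the solution space loses at least one $\mathbb{F}_2$ dimension whenever $(V,M)\neq(V',M')$ (for the upper bound $2^{k-1}$), the lemma is not established; invoking the substitution rules of \cref{lem:substitutionrulesmonomials} to reach a canonical form is the right instinct, but the case analysis it enables is the proof, and it is missing here.
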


\begin{lemma}\label{lem:linearlyindependentinthesymmetricsubspace}
    Let $\mathcal{P}$ be the set of Pauli monomials. Define $\mathcal{P}_{\sym}\coloneqq\{\Pi_{\sym}\Omega\Pi_{\sym}\,:\, \Omega\in\mathcal{P}\}$ with $\Pi_{\sym}\coloneqq\sum_{\pi\in S_k} T_\pi$ the projector onto the symmetric subspace. Then provided that $k\le n-1$ the set $\mathcal{P}_{\sym}$ contain linearly independent operators. 
    \begin{proof}
Let us start by $\sum_{\Omega\in \mathcal{P}_{\sym}}c_{\Omega}\Pi_{\sym}\Omega\Pi_{\sym}=0$. Our objective is to show that this implies $c_{\Omega}=0$ necessarily. Defining the set $S_{\Omega}\coloneqq\{T_{\pi}\Omega T_{\sigma}\,:\,\pi,\sigma\in S_k\}$, we expand the above as
\be
\sum_{\Omega\in \mathcal{P}_{\sym}}c_{\Omega}\sum_{\Omega'\in S_{\Omega}}\frac{m_{\Omega'}}{k!^2}\Omega'=0
\ee
where $m_{\Omega'}$ is the multiplicity of $\Omega'$, which is different from $0$ if $\Omega'\in S_{\Omega}$. Finally, we can rewrite it as $\sum_{\Omega\in\mathcal{P}}c'_{\Omega}\Omega=0$. However, since the set of Pauli monomials $\mathcal{P}$ contains linearly independent operators, this implies that $c'_{\Omega}=0$ which, in turn, implies that $c_{\Omega}=0$ for every $\Omega\in\mathcal{P}_{\sym}$. This concludes the proof. 
    \end{proof}
\end{lemma}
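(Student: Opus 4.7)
The plan is to reduce the linear independence of $\mathcal{P}_{\sym}$ to the linear independence of $\mathcal{P}$ itself, which by \cref{lem:relevantpropertiespaulimonomials} is guaranteed in the regime of interest (where $n$ is large relative to $k$, hence certainly when $k\le n-1$). Concretely, I would start from a vanishing linear combination $\sum_{\Omega\in\mathcal{P}} c_{\Omega}\, \Pi_{\sym}\Omega\Pi_{\sym}=0$, expand the symmetrizer as $\Pi_{\sym}=\frac{1}{k!}\sum_{\pi\in S_k} T_{\pi}$, and rewrite the identity as
\[
\frac{1}{k!^{2}}\sum_{\Omega\in\mathcal{P}}\sum_{\pi,\sigma\in S_k} c_{\Omega}\, T_{\pi}\Omega T_{\sigma}=0.
\]

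The key technical step would be to observe that each product $T_{\pi}\Omega T_{\sigma}$ is again a Pauli monomial in $\mathcal{P}$ (possibly up to scalar). The intuition is that left multiplication by $T_{\pi}$ permutes the rows of the defining matrix $V\in\even$, and an analogous effect—together with the appropriate updates to the sign matrix $M$—comes from right multiplication by $T_{\sigma}$. One should verify this either by direct computation on the explicit parameterization $\Omega(V,M)$ in \cref{def:paulimonomials}, or by combining the substitution rules of \cref{lem:substitutionrulesmonomials} with the fact that permutation operators lie in the Clifford commutant, so that the product $T_{\pi}\Omega T_{\sigma}$ stays in $\mathrm{span}\,\mathcal{P}$ and, being of the same ``monomial'' form, equals a single element of $\mathcal{P}$.

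Once this is in hand, the identity can be regrouped as $\sum_{\Omega'\in\mathcal{P}} c'_{\Omega'}\,\Omega'=0$, with $c'_{\Omega'}$ given by summing $c_{\Omega}/k!^{2}$ over all triples $(\Omega,\pi,\sigma)$ for which $T_{\pi}\Omega T_{\sigma}=\Omega'$. By linear independence of $\mathcal{P}$, every $c'_{\Omega'}$ must vanish. To conclude $c_{\Omega}=0$ for all $\Omega$, I would rely on the orbit structure: the double action $\Omega\mapsto T_{\pi}\Omega T_{\sigma}$ partitions $\mathcal{P}$ into $S_k\times S_k$-orbits, and each original $c_{\Omega}$ appears, multiplied by a strictly positive integer multiplicity, in exactly one of the equations $c'_{\Omega'}=0$, so the linear map $(c_{\Omega})\mapsto(c'_{\Omega'})$ is injective.

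The hardest part of this plan will be the middle step, namely proving rigorously that $T_{\pi}\Omega T_{\sigma}\in\mathcal{P}$ (up to scalar). Unlike conjugation $T_{\pi}\Omega T_{\pi}^{-1}$, a one-sided product by different permutations is not directly covered by \cref{lem:substitutionrulesmonomials}, so one must redo the bookkeeping of the weight vector $w$, the Gram matrix $H=V^{T}V$, and the sign matrix $M$ under this asymmetric action. The remaining steps, once this closure property of $\mathcal{P}$ under the $S_k\times S_k$ action is established, are essentially combinatorial reindexing combined with the known linear independence of $\mathcal{P}$.
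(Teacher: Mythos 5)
Your proposal is correct and follows essentially the same route as the paper's proof: expand the symmetrizers, use that the products $T_{\pi}\Omega T_{\sigma}$ regroup into a linear combination of Pauli monomials, and invoke the linear independence of $\mathcal{P}$ together with the positivity of the orbit multiplicities to force all coefficients to vanish. The closure of $\mathcal{P}$ under the two-sided $S_k\times S_k$ action, which you rightly flag as the main thing to verify, is also taken for granted in the paper's own (terser) argument and rests on the multiplication rules for Pauli monomials from Ref.~\cite{bittel2025completetheorycliffordcommutant}.
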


\subsection{Elements of the real Clifford commutant}
In this section, we generalize the concept of Pauli monomials to a broader class of operators, which form a basis for the \textit{real Clifford commutant}. While we briefly discuss some of their properties relevant to the proofs of the main results in \cref{th:efficeintmeasurementwithtrasnpose,th1}, a more comprehensive treatment will appear in the forthcoming manuscript Ref.~\cite{bittel2025realclifford}, see also Ref.~\cite{magni2025anticoncentrationstatedesigndoped}. We anticipate that the construction of the real Clifford commutant is also of independent interest. For example, in Ref.~\cite{carrasco2025}, it has been employed to study efficient cross-device verification.

We refer the reader to Ref.~\cite{bittel2025completetheorycliffordcommutant}, as the following section relies heavily on the language and techniques developed therein.

Let $C\in\mathcal{C}_n$, then $C$ is said to be real iff $C^{T}C=\mathbb{1}$. Denoting $\mathcal{C}_n^{\mathbb{R}}$ the subgroup of real Clifford operations, then the commutant  is the vector space (and algebra) defined as those operators left invariant under the adjoint action of operators $C^{\otimes k}$ with $C\in\mathcal{C}_n^{\mathbb{R}}$. We generalize Pauli monomial to include also partial transposition as follows.

\begin{definition}[Generalized Pauli monomials]\label{def:generalizedpaulimonomials}
    Let $V\in \even$ be a maximal rank matrix, $M\in\symf$ and $\Gamma\in\{0,1\}^{m}\in\mathbb{F}_2^m$ be a vector. Then a generalized monomial is defined by
    \begin{align}
        \Omega(V, M,\Gamma) &= \frac{1}{d^m} \sum_{P_1, \ldots, P_m \in \mathbb{P}_n}  
    P_1^{\otimes v_1} P_2^{\otimes v_2} \cdots P_m^{\otimes v_m} \times \left( \prod_{\substack{i, j \in [m] \\ i < j}} \chi(P_i, P_j)^{M_{i,j}} \right)\times \prod_{i\in [m]}\xi(P_i)^{\Gamma_i}
    \end{align}
where $\chi(P_i,P_j)\coloneqq\frac{1}{d}\tr(P_iP_jP_i^{\dag}P_j^{\dag})$, and $\xi(P_i)\coloneqq\frac{1}{d}\tr(P_iP_i^T)$.
\end{definition}

\begin{lemma}[Multiplicativity property of $\xi(P)$]\label{lem:multiplicativity} Let $P,Q$ be two arbitrary Pauli operators. Then, it holds that
\be
\xi(P)\xi(Q)=\xi(PQ)=\xi(QP)
\ee
\begin{proof}
    From the definition, it follows that $\xi(P)\xi(Q)=\frac{1}{d}\tr(P^TPQQ^T)=\frac{1}{d}\tr(QP(QP)^T)=\frac{1}{d}(PQ(PQ)^T)$. 
\end{proof}
    
\end{lemma}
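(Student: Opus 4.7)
The plan is to reduce the entire lemma to a single scalar identity: for every $P \in \mathbb{P}_n$ one has $P P^T = \xi(P)\,\id$, with $\xi(P) \in \{+1,-1\}$. I would first establish this identity by factoring over qubits. On a single qubit the matrices $I$, $X$, $Z$ are symmetric, so $P P^T = P^2 = I$, while $Y^T = -Y$ gives $Y Y^T = -I$. Since any $P \in \mathbb{P}_n$ is a tensor product $P_1 \otimes \cdots \otimes P_n$, one has
\be
P P^T \;=\; \bigotimes_{i=1}^n P_i P_i^T \;=\; (-1)^{w_Y(P)}\,\id,
\ee
where $w_Y(P)$ counts the number of $Y$-factors. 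Tracing both sides and dividing by $d$ identifies the scalar as exactly the $\xi(P)$ of \cref{def:generalizedpaulimonomials}.

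Given this scalar identity, the multiplicativity follows in a single cyclic-trace step. The plan is to write
\be
\xi(PQ) \;=\; \frac{1}{d}\tr\!\bigl(PQ(PQ)^T\bigr) \;=\; \frac{1}{d}\tr\!\bigl(P\,QQ^T\,P^T\bigr) \;=\; \xi(Q)\cdot\frac{1}{d}\tr(P P^T) \;=\; \xi(P)\,\xi(Q),
\ee
where the central equality uses $QQ^T = \xi(Q)\,\id$ to extract a scalar from inside the trace. The equality $\xi(QP) = \xi(P)\,\xi(Q)$ then follows from the mirror computation $\frac{1}{d}\tr(QPP^TQ^T) = \xi(P)\cdot\frac{1}{d}\tr(QQ^T)$, closing the chain of identities asserted in the lemma.

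There is essentially no substantive obstacle here; the only point requiring care is isolating the right scalar identity before doing any algebra, after which cyclicity of the trace finishes everything. A more conceptual reformulation would be to observe that $\xi$ is the character of the one-dimensional representation of $\mathbb{P}_n$ induced by $P \mapsto P^T P$, which makes multiplicativity automatic, but the direct computation above matches the style of the paper and is self-contained.
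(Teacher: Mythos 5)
Your proposal is correct and follows essentially the same route as the paper's proof: both rest on the observation that $PP^T$ is a scalar multiple of the identity together with cyclicity of the trace, with the paper compressing this into a single chain of trace identities while you make the intermediate scalar identity $PP^T=(-1)^{w_Y(P)}\id$ explicit. No gap to report.
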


We now show that the partial transposition of a primitive Pauli monomial effectively behaves in the opposite manner to its non-transposed counterpart.

\begin{lemma}[Transpose of a primitive Pauli monomial]\label{lem:transposeaspaulimonomials} Let $\Omega_{k}$ be a primitive Pauli monomial with $k=0\mod 2$, and let $t$ be the transposition of the first copy of $k$. Then it holds that
    \begin{align}
        \Omega_{k}^{t}=\frac{1}{d}\sum_{P\in \mathbb P_n} P^{\otimes k} \times \xi(P_i)
    \end{align}
Moreover, $\Omega_{k}^{t}$ is a unitary if $k=0 \mod 4$; otherwise $\Omega_{k}^{t}$ is proportional to a projector.  
\end{lemma}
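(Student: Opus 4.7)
The plan is to first derive the announced explicit form of $\Omega_k^t$, and then to classify it by computing $(\Omega_k^t)^2$ while exploiting the qubit-wise factorization of primitive monomials.

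First, I would apply partial transposition on the first copy to $\Omega_k=\tfrac{1}{d}\sum_{P\in\mathbb{P}_n}P^{\otimes k}$, obtaining $\Omega_k^{t}=\tfrac{1}{d}\sum_{P\in\mathbb{P}_n}P^{T}\otimes P^{\otimes(k-1)}$. For any $P\in\mathbb{P}_n$, the product $PP^T$ is a tensor product of single-qubit quantities $P_i P_i^T\in\{\pm\mathbb{1}\}$, so $PP^T=\xi(P)\mathbb{1}$ by definition of $\xi$; combined with $P^{2}=\mathbb{1}$ this gives $P^{T}=\xi(P)P$. Substituting yields the first claim
\[
\Omega_k^{t}\;=\;\tfrac{1}{d}\sum_{P\in\mathbb{P}_n}\xi(P)\,P^{\otimes k},
\]
which is Hermitian because $\xi(P)\in\{\pm 1\}$ and each $P$ is Hermitian.

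Next, I would use the qubit-factorization of primitive monomials from Lemma~\ref{lem:relevantpropertiespaulimonomials}: writing $P=P^{(1)}\otimes\cdots\otimes P^{(n)}$ and using $\xi(P)=\prod_i\xi(P^{(i)})$, one obtains $\Omega_k^t=(\omega_k^t)^{\otimes n}$ with $\omega_k^{t}=\tfrac{1}{2}\sum_{P\in\{I,X,Y,Z\}}\xi(P)\,P^{\otimes k}$. Squaring and applying the multiplicativity of $\xi$ from Lemma~\ref{lem:multiplicativity},
\[
(\omega_k^t)^{2}\;=\;\tfrac{1}{4}\!\!\sum_{P,Q\in\{I,X,Y,Z\}}\!\xi(PQ)\,(PQ)^{\otimes k}.
\]
Every single-qubit Pauli product can be written as $PQ=\alpha R$ with $R\in\{I,X,Y,Z\}$ and $\alpha\in\{\pm 1,\pm i\}$, which gives $\xi(PQ)=\alpha^{2}\xi(R)$ and $(PQ)^{\otimes k}=\alpha^{k}R^{\otimes k}$; hence each pair $(P,Q)$ contributes $\alpha^{k+2}\xi(R)\,R^{\otimes k}$.

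The last step is a finite enumeration of the $16$ pairs. For $R=I$ only the four diagonal pairs $P=Q$ survive, all with $\alpha=1$; for each of $R\in\{X,Y,Z\}$ there are two pairs with $\alpha=1$, one with $\alpha=i$, and one with $\alpha=-i$. Since $i^{k+2}+(-i)^{k+2}$ equals $-2$ for $k\equiv 0\pmod 4$ and $+2$ for $k\equiv 2\pmod 4$, the nontrivial-$R$ terms cancel in the first case and combine to $4\xi(R)$ in the second. This produces $(\omega_k^t)^{2}=\mathbb{1}$ when $k\equiv 0\pmod 4$, and $(\omega_k^t)^{2}=I^{\otimes k}+X^{\otimes k}-Y^{\otimes k}+Z^{\otimes k}=2\,\omega_k^t$ when $k\equiv 2\pmod 4$. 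Tensoring over the $n$ qubits then gives $(\Omega_k^t)^{2}=\mathbb{1}$ for $k\equiv 0\pmod 4$ (so $\Omega_k^t$ is unitary) and $(\Omega_k^t)^{2}=2^{n}\,\Omega_k^t=d\,\Omega_k^t$ for $k\equiv 2\pmod 4$ (so $\Omega_k^t/d$ is a Hermitian idempotent, and $\Omega_k^t$ is proportional to a projector). The main bookkeeping obstacle is the correct handling of the phases $\alpha\in\{\pm 1,\pm i\}$ arising from Pauli products and the verification of the $(k+2)$-th power identity that drives the $k\bmod 4$ dichotomy; once that step is in place, the factorization over qubits delivers both conclusions mechanically.
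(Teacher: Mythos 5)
Your proof is correct and follows essentially the same route as the paper's: both derive $\Omega_k^t=\frac{1}{d}\sum_{P}\xi(P)P^{\otimes k}$ from $P^T=\xi(P)P$, square the operator, and use the multiplicativity of $\xi$ together with phase bookkeeping to obtain the $k\bmod 4$ dichotomy, your only departure being that you evaluate the resulting phase sum via qubit-wise factorization and a $16$-pair enumeration rather than the paper's global sum $\frac{1}{d}\sum_P\chi(K,P)^{(k+2)/2}$. Note that your case assignment (unitary iff $k\equiv 0\bmod 4$) is the one consistent with the lemma statement, whereas the final case distinction displayed in the paper's proof appears to have the two conditions inadvertently swapped.
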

\begin{proof}
    First notice that $P^T=P\xi(P)$. To show its spectral properties, we square it:
    \begin{align}
        (\Omega_k^T)^2&=\frac{1}{d^2}\sum_{P,Q\in \mathbb P_n} P^{\otimes k}Q^{\otimes k} \times \xi(P)\xi(Q)\\
        &=\frac{1}{d^2}\sum_{P,Q\in \mathbb P_n} (PQ)^{\otimes k} \times \xi(PQ)\\
        &=\frac{1}{d^2}\sum_{P,K\in \mathbb P_n} (\sqrt{\chi(K,P)}K)^{\otimes k} \times \xi(\sqrt{\chi(K,P)}K)\\
        &=\frac{1}{d}\sum_{K\in \mathbb P_n} K^{\otimes k}  \xi(K)\times \frac{1}{d}\sum_{P\in \mathbb P_n}\chi(K,P)^{\frac{k+2}{2}}\\
        &=\frac{1}{d}\sum_{K\in \mathbb P_n} K^{\otimes k}\times \xi(K)\times \begin{cases}
        \delta_{K,\mathbb 1} &\, k+2=0\mod 4\\
            d&\text{else}
            \end{cases}\\
            &=\begin{cases}
        \mathbb{1} &\, k+2=0\mod 4\\
            d\Omega_k^t&\text{else}
            \end{cases}\\
    \end{align}
    from which the unitarity or proportionality to a projector follows.
\end{proof}
\begin{lemma}[Generalized Pauli monomials closed under transposition]
    The set of generalized Pauli monomials $\Omega(V, M,\Gamma)$ are closed under transposition. In particular if one transposes the $\alpha$-th copy and denote $t_{\alpha}$ as the corresponding operation, then
    \begin{align}
        \Omega(V, M,\Gamma)^{t_\alpha}=\Omega(V, M',\Gamma')
    \end{align}
    where $M_{ij}'=M_{ij}+v_{i}^\alpha v_{j}^\alpha$ and $\Gamma_i'=\Gamma_i+v_{i}^\alpha$
\end{lemma}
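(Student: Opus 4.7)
The strategy is to compute the action of the partial transposition $t_\alpha$ on the $\alpha$-th tensor factor of each summand in $\Omega(V,M,\Gamma)$ and then repackage the resulting scalar factors into an updated symmetric matrix $M'$ and transposition vector $\Gamma'$. The key elementary identities I would use are, for any Pauli $P\in\mathbb{P}_n$, $P^T=\xi(P)P$ (since single-qubit Paulis satisfy $Y^T=-Y$ while $I^T=I$, $X^T=X$, $Z^T=Z$), together with $P_iP_j=\chi(P_i,P_j)P_jP_i$. The first identity lets me replace $P^T$ by a scalar times $P$, and the second lets me reorder products of Paulis at the cost of commutation signs.

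First, I would isolate the operator appearing in the $\alpha$-th slot of a fixed summand. For a fixed tuple $(P_1,\dots,P_m)$, define
\begin{align}
Q_\alpha := \prod_{i=1}^{m} P_i^{v_i^\alpha},
\end{align}
where $P_i^0=I$, $P_i^1=P_i$, and the product is taken in the order $i=1,2,\dots,m$. Then
\begin{align}
\Omega(V,M,\Gamma) = \frac{1}{d^m}\sum_{\boldsymbol{P}\in\mathbb{P}_n^m}\bigl(Q_1\otimes\cdots\otimes Q_k\bigr)\,\prod_{i<j}\chi(P_i,P_j)^{M_{ij}}\prod_i\xi(P_i)^{\Gamma_i},
\end{align}
and $t_\alpha$ simply replaces $Q_\alpha$ by $Q_\alpha^T$ while leaving the other slots unchanged.

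Next, I compute $Q_\alpha^T$. Using $(AB)^T=B^TA^T$ and $P_i^T=\xi(P_i)P_i$,
\begin{align}
Q_\alpha^T = \prod_{i=m}^{1}(P_i^T)^{v_i^\alpha} = \Bigl(\prod_{i=1}^{m}\xi(P_i)^{v_i^\alpha}\Bigr)\cdot\prod_{i=m}^{1}P_i^{v_i^\alpha}.
\end{align}
To bring the remaining reversed product back to the standard order $i=1,2,\dots,m$, I would reorder adjacent factors using $P_iP_j=\chi(P_i,P_j)P_jP_i$; since every pair $i<j$ with $v_i^\alpha=v_j^\alpha=1$ is swapped exactly once and $\chi$ is symmetric, the cumulative sign is $\prod_{i<j}\chi(P_i,P_j)^{v_i^\alpha v_j^\alpha}$. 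Hence
\begin{align}
Q_\alpha^T = \prod_{i}\xi(P_i)^{v_i^\alpha}\cdot\prod_{i<j}\chi(P_i,P_j)^{v_i^\alpha v_j^\alpha}\cdot Q_\alpha.
\end{align}

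Finally, I would substitute this expression back into the sum defining $\Omega(V,M,\Gamma)^{t_\alpha}$. The factor $\prod_i\xi(P_i)^{v_i^\alpha}$ merges with $\prod_i\xi(P_i)^{\Gamma_i}$ to give $\prod_i\xi(P_i)^{\Gamma_i+v_i^\alpha}$, and the factor $\prod_{i<j}\chi(P_i,P_j)^{v_i^\alpha v_j^\alpha}$ merges with $\prod_{i<j}\chi(P_i,P_j)^{M_{ij}}$ to give $\prod_{i<j}\chi(P_i,P_j)^{M_{ij}+v_i^\alpha v_j^\alpha}$. All exponents are computed mod $2$ since $\chi,\xi\in\{\pm1\}$. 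Comparing with \cref{def:generalizedpaulimonomials} yields $\Omega(V,M,\Gamma)^{t_\alpha}=\Omega(V,M',\Gamma')$ with $M'_{ij}=M_{ij}+v_i^\alpha v_j^\alpha$ and $\Gamma'_i=\Gamma_i+v_i^\alpha$. A brief check shows $M'\in\symf$: symmetry of $v_i^\alpha v_j^\alpha$ is manifest, and the null-diagonal condition is preserved because we only modify off-diagonal entries. Closure under transposition then follows by iterating over any subset of copies. The only slightly delicate point is the bookkeeping of the commutation signs when reversing the product in the $\alpha$-th slot; once that is handled the rest is direct substitution.
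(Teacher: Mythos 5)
Your proposal is correct and follows essentially the same route as the paper's proof: both reduce the claim to the identity that the transpose of the Pauli product sitting in the $\alpha$-th slot equals the untransposed product times the scalar $\prod_{i}\xi(P_i)^{v_i^\alpha}\prod_{i<j}\chi(P_i,P_j)^{v_i^\alpha v_j^\alpha}$, which is then absorbed into $\Gamma$ and $M$. The only cosmetic difference is that you derive this identity by explicitly reversing and reordering the product, while the paper states it directly via the trace expression $\frac{1}{d}\tr\bigl((P_1\cdots P_m)(P_1^T\cdots P_m^T)\bigr)$; your sign bookkeeping is accurate.
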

\begin{proof}
    To see this, note that for products of Pauli operators 
    \begin{align}
        (P_1\cdots P_m)^T=P_1\cdots P_m\times \frac{\tr((P_1\cdots P_m)(P_1^T\cdots P_m^T))}{d}=\prod_{i\in[m]}\xi(P_i)\prod_{i<j}\chi(P_i,P_j)
    \end{align} holds. 
    We can use this to rewrite
    \begin{align}
        \Omega(V, M,\Gamma)^{t_\alpha} &= \frac{1}{d^m} \sum_{P_1, \ldots, P_m \in \mathbb{P}_n}  
    P_1^{\otimes v_1} P_2^{\otimes v_2} \cdots P_m^{\otimes v_m} \times \left( \prod_{\substack{i, j \in [m] \\ i < j}} \chi(P_i, P_j)^{M_{i,j}} \right)\times \prod_{i\in [m]}\xi(P_i)^{\Gamma_i}\\&\times \tr(P_1^{v_1^{\alpha}}\cdots P_m^{v_m^{\alpha}} P_1^{v_1^{\alpha}T}\cdots  P_m^{v_m^{\alpha}T})\\
    &= \frac{1}{d^m} \sum_{P_1, \ldots, P_m \in \mathbb{P}_n}  
    P_1^{\otimes v_1} P_2^{\otimes v_2} \cdots P_m^{\otimes v_m} \times \left( \prod_{\substack{i, j \in [m] \\ i < j}} \chi(P_i, P_j)^{M_{i,j}+v_i^{\alpha}v_j^{\alpha}} \right)\times \prod_{i\in [m]}\xi(P_i)^{\Gamma_i+v_i^{\alpha}}
    \end{align}
    from which the result follows.
\end{proof}

\begin{lemma}[Generalized substitution rules]\label{lem:generalizedsubstitutionrules} Let $V\in \even$ and $M\in \symf$ and $\Gamma\in\mathbb{F}_2^m$, and let $\Omega(V,M,\Gamma)$ the corresponding generalized Pauli monomial. For any matrix $A\in \operatorname{GL}(\mathbb{F}_2^{m\times m})$, it holds that $\Omega(V,M)=\Omega(VA,M(A),\Gamma(A))$, where $ M(A)\in \symf $ and $\Gamma\in\mathbb{F}_2^m$ are functions of the matrix $A$, determined as follows. Define  
    \begin{align}
        H &\coloneqq V^T V, \\
        w_i &\coloneqq \frac{|v_i|}{2}+\Gamma_i, \\
        \Lambda_{i,j} &\coloneqq M_{i,j} + w_i \delta_{i,j} + H_{i,j} \delta_{i<j},  
    \end{align}
    where all operations are $\pmod 2$. Then, for any $ A \in \mathrm{GL}(\mathbb{F}_{2}^{m\times m}) $, we obtain the transformations  
    \begin{align}
        \Lambda(A) &= A^T \Lambda A, \\
        H(A) &= \Lambda(A) + \Lambda(A)^T, \\
        w_i(A) &= \Lambda_{i,i}(A), \\
        M_{i,j}(A) &= \Lambda_{i,j}(A) \delta_{i>j} + \Lambda_{j,i}(A) \delta_{j>i}.
    \end{align}
\begin{proof}
To prove the statement it is sufficient to extend the proofs from \cite{bittel2025completetheorycliffordcommutant} to also encompass the partial transposition. The proof of Theorem 37 in Ref.~\cite{bittel2025completetheorycliffordcommutant} entirely follows from the substitution rules (Theorem 34 in Ref.~\cite{bittel2025completetheorycliffordcommutant}). Let us see how one can simply modify the proof of Theorem 34 in in Ref.~\cite{bittel2025completetheorycliffordcommutant}: in proof of Theorem 34, one substitutes $R=P_aP_{a+1}\sqrt{\chi(P_a,P_{a+1})}$ to replace $P_a$. As such, the only additional term for the transformation is 
    \begin{align}
\xi(P_a)^{\Gamma_a}\mapsto\chi(R,P_{a+1})^{\Gamma_a}\xi(R)^{\Gamma_a}\xi(P_{a+1})^{\Gamma_a}
    \end{align}
because \be\tr(P_aP_a^T)\mapsto\chi(R,P_{a+1})\frac{1}{d}\tr(P_{a+1}RR^{T}P_{a+1}^{T})=\chi(R,P_{a+1})\frac{1}{d^2}\tr(P_{a+1}P_{a+1}^T)\tr(RR^{T})=\chi(R,P_{a+1})\xi(R)\xi(P_{a+1})\,.
\ee
As such, if $\Gamma_{a}=1$, then $M_{a,a+1}\mapsto M_{a,a+1}+\Gamma_a$ and $P_{a+1}$ gets partially transposed aswell. This precisely describes that the respective transposed primitive behaves exactly as the opposite type of the original primitive, whose trivial case reduces to \cref{lem:transposeaspaulimonomials}. This proves that it is sufficient to replace $w_{i}=\mathrm{hw}(v_i)/2$ from \cref{lem:substitutionrulesmonomials} with $w_i=\mathrm{hw}(v_i)/2+\Gamma_i$. This allows to define
    \begin{align}
    \Lambda_{ij}(\Omega(V, M,\Gamma)\coloneqq M_{ij}+({\mathrm{hw}(v_i)/2+\Gamma_i})\delta_{ij}+H_{ij}\delta_{i<j}\,.
    \end{align}   
which encodes all the possible linear transformation of generalized Pauli monomials.
\end{proof}
\end{lemma}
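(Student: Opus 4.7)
My plan is to prove \cref{lem:generalizedsubstitutionrules} by reducing it to the already-established ungeneralized substitution rules (\cref{lem:substitutionrulesmonomials}) and carefully tracking the extra bookkeeping introduced by the partial-transposition vector $\Gamma$. Since $\mathrm{GL}(\mathbb{F}_2^{m\times m})$ is generated by elementary row-addition matrices, it suffices to verify the formula for a single generator, the substitution $v_a \mapsto v_a + v_{a+1}$. In the Pauli sum this corresponds exactly to the replacement $R = P_a P_{a+1}\sqrt{\chi(P_a,P_{a+1})}$ used in the proof of Theorem 34 of \cite{bittel2025completetheorycliffordcommutant}; everything follows from understanding how this replacement affects the new factor $\prod_i \xi(P_i)^{\Gamma_i}$.

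The key new computation is: by the multiplicativity of $\xi$ on Pauli products (\cref{lem:multiplicativity}), the substitution sends $\xi(P_a)^{\Gamma_a}$ to $\xi(R)^{\Gamma_a}\xi(P_{a+1})^{\Gamma_a}$ up to a $\chi(R,P_{a+1})^{\Gamma_a}$ factor. The first piece preserves the transposition bit on the new variable $R$, the second induces the shift $\Gamma_{a+1}\mapsto \Gamma_{a+1}+\Gamma_a$, and the third merges into $M_{a,a+1}$ exactly as the other $\chi$-factors do in the ungeneralized proof. Combining this with \cref{lem:transposeaspaulimonomials}, which shows that partial transposition of a primitive monomial $\Omega_k$ toggles it between unitary and projective type (i.e. effectively flips the diagonal bit $|v_i|/2 \bmod 2$ that controls how $\Lambda$ is read off), we see that all of these new shifts are precisely absorbed by the modified definition $w_i = |v_i|/2 + \Gamma_i \pmod 2$ on the diagonal of $\Lambda$.

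Once the diagonal of $\Lambda$ is redefined in this way, the rest of the argument is verbatim that of \cref{lem:substitutionrulesmonomials}: the upper-triangular matrix $\Lambda$ with $\Lambda_{i,j} = M_{i,j} + w_i\delta_{i,j} + H_{i,j}\delta_{i<j}$ transforms under $A \in \mathrm{GL}(\mathbb{F}_2^{m\times m})$ by congruence $\Lambda(A) = A^T\Lambda A$, and from this one reads off $H(A)=\Lambda(A)+\Lambda(A)^T$, $w_i(A)=\Lambda_{i,i}(A)$, and $M_{i,j}(A)=\Lambda_{i,j}(A)\delta_{i>j}+\Lambda_{j,i}(A)\delta_{j>i}$. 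The transposition vector is recovered as $\Gamma_i(A) = w_i(A) - |v_i(A)|/2 \pmod 2$, where $|v_i(A)|$ is the Hamming weight of the $i$-th column of $VA$.

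The main obstacle will be the careful sign-tracking during the elementary substitution when $\Gamma\neq 0$: one must verify that each $\xi$-factor that ends up on the ``wrong'' index is exactly compensated by the $+\Gamma_i$ shift in $w_i$, and that no cross terms appear outside the existing $\Lambda$-framework. Because $\xi$ behaves multiplicatively just like $\chi$, this is a finite algebraic check that mirrors the original proof almost line for line; the substantive new input is just \cref{lem:multiplicativity} together with \cref{lem:transposeaspaulimonomials}.
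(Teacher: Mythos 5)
Your proposal matches the paper's proof essentially step for step: both reduce to the elementary substitution $R = P_aP_{a+1}\sqrt{\chi(P_a,P_{a+1})}$ from the ungeneralized rules, compute that $\xi(P_a)^{\Gamma_a}$ picks up exactly the factors $\chi(R,P_{a+1})^{\Gamma_a}\xi(R)^{\Gamma_a}\xi(P_{a+1})^{\Gamma_a}$ (shifting $M_{a,a+1}$ and propagating the transposition bit to $P_{a+1}$), and conclude that the only change needed is replacing $w_i = |v_i|/2$ by $w_i = |v_i|/2 + \Gamma_i$ in the definition of $\Lambda$. The approach and the key computation are the same; no gaps.
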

\begin{corollary}[Normal form and number of projective primitives]\label{cor:normalformandunitaritycondition} Any generalized Pauli monomial $\Omega(V,M,\Gamma)$ can be decomposed in normal form as 
\be
\Omega(V,M,\Gamma)=\Omega_P\Omega_U
\ee
where $\Omega_P$ and $\Omega_U$ are generalized unitary and projective monomials, i.e. those including also partial transpositions of primitive Pauli monomials explored in \cref{lem:transposeaspaulimonomials}. Moreover, let $V\in\even$ of maximal rank. Let $m(\Omega)$ be the order of the monomial $\Omega$ and $m(\Omega_P)$ the order of the projective part in normal form. Then it holds that
\be
m(\Omega_P)=\dim(\operatorname{ker}(\Lambda))=m(\Omega)-\operatorname{rank}(\Lambda)
\ee
\begin{proof}
The proof directly follows from the proof in Lemma 46 and Proposition 47 of Ref.~\cite{bittel2025completetheorycliffordcommutant} adapted in using the generalized substitution rules in \cref{lem:generalizedsubstitutionrules}.
\end{proof} 
\end{corollary}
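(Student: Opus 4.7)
The plan is to leverage the generalized substitution rules from the preceding lemma together with the normal-form argument for quadratic/bilinear data over $\mathbb{F}_2$ that is already deployed in the non-transposed setting. First I would exhibit an invertible matrix $A \in \mathrm{GL}(\mathbb{F}_{2}^{m\times m})$ for which $\Lambda(A) = A^T \Lambda A$ assumes a canonical block form in which a nondegenerate block and the radical are cleanly separated. Because $\Lambda$ is built from $M$, $H$, and $w$ via the same formulas as in Ref.~\cite{bittel2025completetheorycliffordcommutant}, the only modification being that the diagonal now reads $w_i = |v_i|/2 + \Gamma_i$, the reduction used in Lemma~46 and Proposition~47 of that reference applies verbatim once $w$ is replaced by its corrected value. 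In particular, the congruence transformation on $\Lambda$ does not interact with $\Gamma$ beyond this diagonal shift.

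Next I would interpret the resulting normal form primitive by primitive. Each column of $VA$ indexes one primitive factor of the transformed monomial, and whether that factor is unitary or projective is dictated by the diagonal entry $\Lambda_{i,i}(A)$ together with the surviving off-diagonal data $\Lambda_{i,j}(A)$. Columns lying inside $\ker(\Lambda(A))$ correspond to factors on which all chi- and xi-phases vanish; by \cref{lem:transposeaspaulimonomials} these are precisely the (possibly partially transposed) projective primitives, while the columns on which $\Lambda(A)$ is nondegenerate give unitary primitives. Grouping them yields the claimed factorization $\Omega(V,M,\Gamma) = \Omega_P \Omega_U$. For the dimension count, I would use that $A$ is invertible, so $\mathrm{rank}(\Lambda) = \mathrm{rank}(A^T \Lambda A)$ is a substitution invariant; hence the projective block of the normal form occupies exactly $\dim \ker(\Lambda)$ of the $m$ primitives, and rank-nullity closes the identity $m(\Omega_P) = m(\Omega) - \mathrm{rank}(\Lambda)$.

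The main obstacle, and the reason this does not reduce to citing Lemma~46 and Proposition~47 literally, is to verify that the reduction procedure commutes with the presence of partial transpositions. Concretely, when a primitive acquires an xi-phase in addition to the usual chi-phase, one must check that the congruence step sending $w_i \mapsto \Lambda_{i,i}(A)$ still correctly classifies the transformed primitive as unitary or projective. This is exactly what \cref{lem:transposeaspaulimonomials} guarantees: a transposed primitive is unitary precisely when its order is $0 \bmod 4$ and projective otherwise, matching the dichotomy of the non-transposed case. Hence the combinatorial bookkeeping via $w_i = |v_i|/2 + \Gamma_i$ correctly tracks the unitary/projective status after reduction, and the structural argument proceeds in parallel with the cited results, yielding both the normal-form decomposition and the rank-nullity identity.
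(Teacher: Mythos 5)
Your proposal is correct and takes essentially the same route as the paper: the paper's proof is a one-line citation of Lemma~46 and Proposition~47 of Ref.~\cite{bittel2025completetheorycliffordcommutant} ``adapted using the generalized substitution rules,'' and your congruence reduction of $\Lambda$ to a block normal form plus the rank--nullity count is precisely that adaptation spelled out. The only loose phrasing is that the transposed primitive's dichotomy is the \emph{opposite} of the untransposed one (unitary for order $0 \bmod 4$ rather than $2 \bmod 4$, cf.\ \cref{lem:transposeaspaulimonomials}), but this is exactly what the diagonal shift $w_i = |v_i|/2 + \Gamma_i$ in \cref{lem:generalizedsubstitutionrules} encodes, so your bookkeeping is right.
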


\begin{lemma}\label{lem:om_can_be_u_and_p}
Let $V\in\even$ with maximal rank, and $M\in\symf$. Let $\Omega(V,M)$ the corresponding Pauli monomial. There exists $t_u$ and $t_p\in \{0,1\}^{k}$ , such that $\Omega^{(t_u)}$ is a unitary and $\Omega^{(t_p)}$ contains at least one projective component.
\end{lemma}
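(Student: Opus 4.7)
My plan is to reduce the claim to a rank condition on $\Lambda$ and then exploit the freedom of substitution to bring $V$ into a convenient canonical form. By \cref{cor:normalformandunitaritycondition}, a generalized monomial $\Omega(V,M,\Gamma)$ is unitary iff $\Lambda$ has full rank $m$, and contains at least one projective component iff $\Lambda$ is singular. Partial transposition with pattern $t\in\mathbb{F}_2^k$ (applied to $\Omega(V,M,0)$) replaces the parameters by $\Gamma'_i=\sum_\alpha t_\alpha v_i^\alpha$ and $M'_{ij}=M_{ij}+\sum_\alpha t_\alpha v_i^\alpha v_j^\alpha$ for $i\neq j$. Substituting this into the definition of $\Lambda$ and using $(v_i^\alpha)^2=v_i^\alpha$ in $\mathbb{F}_2$ yields the clean identity
\[
\Lambda(t)=\Lambda + V^T D_t V,\qquad D_t\coloneqq\mathrm{diag}(t_1,\ldots,t_k).
\]
The task thus reduces to exhibiting $t_u,t_p\in\mathbb{F}_2^k$ that make $\Lambda+V^T D_t V$ invertible and singular, respectively.

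Since $V$ has full column rank $m$, after a permutation of copies (which only relabels tensor factors) one may assume that the first $m$ rows of $V$ form an invertible block $V_1$. The column substitution $V\to V V_1^{-1}$ of \cref{lem:generalizedsubstitutionrules} preserves the operator, the maximal-rank condition, and the even-column constraint (sums of even-weight vectors in $\mathbb{F}_2$ are again even), so without loss of generality
\[
V=\begin{pmatrix} I_m \\ V_2\end{pmatrix},\qquad V_2\in\mathbb{F}_2^{(k-m)\times m}.
\]
For transpositions of the form $t=(t',0)\in\mathbb{F}_2^m\oplus\mathbb{F}_2^{k-m}$, the rows $r_\alpha=e_\alpha$ for $\alpha\le m$ contribute $e_\alpha e_\alpha^T$, whence $V^T D_t V=\mathrm{diag}(t')$. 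Thus $\Lambda(t)$ equals $\Lambda$ with an arbitrary vector $t'\in\mathbb{F}_2^m$ added to its diagonal while the off-diagonal entries are held fixed.

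The existence of $t_u$ and $t_p$ now follows from a short polynomial argument. Define
\[
f(t')\coloneqq\det\!\bigl(\Lambda+\mathrm{diag}(t')\bigr)\in\mathbb{F}_2[t'_1,\ldots,t'_m].
\]
In the Leibniz expansion, the top-degree monomial $t'_1 t'_2\cdots t'_m$ can only be produced by the identity permutation (this is the only way to pick $t'_i$ from the $i$-th diagonal factor), with coefficient exactly $1$. Consequently, neither $f$ nor $f+1$ is the zero polynomial, so viewed as a function $\mathbb{F}_2^m\to\mathbb{F}_2$ the map $f$ takes both values $0$ and $1$. Any $t'$ with $f(t')=1$ yields $t_u=(t',0)$ making $\Omega^{(t_u)}$ unitary, and any $t'$ with $f(t')=0$ yields $t_p=(t',0)$ making $\Omega^{(t_p)}$ contain a projective component. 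The main obstacle I anticipate is the bookkeeping needed to check that the canonical-form reduction---permutation of copies combined with column substitution of $V$---is genuinely compatible with the partial-transposition action on tensor factors, so that the existence of $(t_u,t_p)$ in the reduced parametrisation pulls back to the original monomial.
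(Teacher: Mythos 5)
Your proposal is correct and follows essentially the same route as the paper's proof: reduce $V$ to the form $\bigl(\begin{smallmatrix}I_m\\ V_2\end{smallmatrix}\bigr)$ via the substitution rules, observe that transposing a subset of the first $m$ copies shifts $\Lambda$ by an arbitrary diagonal, and conclude from the coefficient~$1$ of the top monomial $t'_1\cdots t'_m$ in $\det(\Lambda+\mathrm{diag}(t'))$ (the paper invokes the combinatorial Nullstellensatz here) that both a unitary and a non-unitary choice exist. Your explicit identity $\Lambda(t)=\Lambda+V^TD_tV$ is a slightly cleaner packaging of the same computation, and the bookkeeping concern you flag is benign since the permutation of copies merely relabels which tensor factors are transposed while the column substitution acts only on the monomial's parameters.
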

\begin{proof}
The two statements can be established using the same proof technique.

 Let us begin by rewriting $\Omega
    =\Omega\left(\Pi \begin{pmatrix}
        \mathbb 1 \\ V^*
    \end{pmatrix},M\right)$. This is always possible as $V$ has maximal rank and therefore the Gaussian elimination can produce a full pivot structure, up to a permutation $\Pi$. As the permutation only corresponds to a permutation of the $k$  copies, we can restrict ourselves to only look at the case $\Pi=\mathbb 1$ without loss of generality. Next, we note that choosing $t=(\Gamma,0,\dots,0)$ with $\Gamma\in\{0,1\}^m$, leads to the following relation between the transposition of the Pauli monomial $\Omega$ and a generalized monomial in \cref{def:generalizedpaulimonomials}: 
    \begin{align}
        \Omega\left(\begin{pmatrix}
        \mathbb 1 \\ V^*
    \end{pmatrix},M\right)^{(t)}=
        \Omega\left(\begin{pmatrix}
        \mathbb 1 \\ V^*
    \end{pmatrix},M,\Gamma\right)\,.
    \end{align}
    As such, it follows from \cref{lem:generalizedsubstitutionrules} that the following relation between $\Lambda$ matrices hold:
    \begin{align}
        \Lambda\left(\Omega^{(t)}\right)=\Lambda\left(\Omega\right)+\mathrm{Diag}(\Gamma)
    \end{align}
\cref{cor:normalformandunitaritycondition} (and Proposition 47 in \cite{bittel2025completetheorycliffordcommutant}) establishes that if $\Lambda(\Omega)$ is full rank, then the corresponding generalized monomial $\Omega$ is unitary operator. Hence, to prove the statement, it is sufficient to find a partial transposition (encoded in $\Gamma$) for which $\det(\Lambda(\Omega(V,M,\Gamma)))\neq 0$. Let $\Gamma=(\tau_1,\ldots,\tau_m)$. Then the determinant of $\Lambda(\Omega(V,M,\Gamma))$ is a polynomial in $\tau_1,\ldots,\tau_m$
    \begin{align}
        p(\tau_1,\ldots,\tau_m)=\mathrm{det}(\Lambda+\mathrm{Diag}((\tau_1,\ldots,\tau_m)))
    \end{align}
which can take two values in $\mathbb{F}_2$, i.e.  $p(\Gamma_u)=1$ and $p(\Gamma_p)=0$.  
We can make use of the combinatorial Nullstellensatz which says that a polynomial of degree $m$ is different from zero if the coefficient of the monomial $\tau_1,\dots ,\tau_m$ is different from zero. 

For the determinant $p(\tau_1,\ldots,\tau_m)$ this is always the case:
\be
p(\tau_1,\ldots,\tau_m)&=\sum_{\sigma\in S_m}\operatorname{sign}(\sigma)\prod_{i=1}^{m}(\Lambda_{i,\sigma(i)}+\tau_{i}\delta_{i\sigma(i)})\\
&=\prod_{i=1}^m(\Lambda_{i,i}+\tau_i)+\sum_{S_m\ni\sigma\neq e}\operatorname{sign}(\sigma)\prod_{i=1}^{m}(\Lambda_{i,\sigma(i)}+\tau_{i}\delta_{i\sigma(i)})\\
&=\tau_1\cdots\tau_m+\cdots
\ee
As a consequence of the combinatorial Nullstellensatz~\cite{alon1999combinatorial}(Thm. 1.2), if the polynomial is defined on $\mathbb{F}_2$, we get that since the coefficient of the largest degree monomial is nonzero, there always exist a choice of the variables $\tau_1,\ldots,\tau_m$ to ensure either $p=0$ or $p=1$.

Since there exists a choice of the coefficient $\tau_1,\ldots,\tau_m$ to ensure $p=1$, the first part of the statement is proven.

Conversely, to show that there exists a partial transpose which transforms a unitary into a monomial containing at least one projective component, according to \cref{cor:normalformandunitaritycondition} we need to prove that $\dim(\operatorname{ker}(\Lambda(\Omega^{t_p})))\ge 1$. Since there exists a choice of $\Gamma$ such that $p=0$, the second statement is proven. This concludes the proof.
\end{proof}

\section{Proofs}\label{sec:proofs}

\subsection{Proof of \cref{lemma:nounbiasedmeasurement1,lemma:nounbiasedmeasurement}}\label{prooflemma:nounbiasedmeasurement}
\begin{proof}
We are interested in measuring $\tr(\Omega\psi^{\otimes k})$ with an unbiased estimator on $l\ge k$ copies.
First, any unbiased estimator on $l$ copies can be written as the expectation value of a positive operator, i.e. a POVM element, $\tr(\Pi\psi^{\otimes l})$. 
Second, defining the symmetric subspace of linear operators as $\operatorname{Sym}(\mathcal{B}(\mathcal{H}^{\otimes k}))\coloneqq\{\Pi_{\sym} O\Pi_{\sym}=O\,,O\in\mathcal{B}(\mathcal{H}^{\otimes k})\}$, we have that~\cite{zhu_clifford_2016}:
\be
\operatorname{Sym}(\mathcal{B}(\mathcal{H}^{\otimes k}))=\operatorname{span}(\ketbra{\psi}^{\otimes k}\,,\ket{\psi}\in\mathcal{H})
\ee
where $\Pi_{\sym}$ is the projector onto the symmetric subspace $\sym(\mathcal{H}^{\otimes k})\coloneqq\operatorname{span}\{\ket{\psi}^{\otimes k}\,,\ket{\psi}\in\mathcal{H}\}$. Therefore, any unbiased estimator for $\tr(\Omega\psi^{\otimes k})$ would work for unbiasedly estimate $\tr(\Omega\rho)$ for any $\rho\in\operatorname{Sym}(\mathcal{B}(\mathcal{H}^{\otimes k}))$.

Since $\Omega$ is invariant under the action of $C^{\otimes k}$ for $C\in\mathcal{C}_n$, without loss of generality, we can consider $\Pi$ to be Clifford-invariant under the action of $C^{\otimes l}$ and living in the symmetric subspace, meaning that we can express $\Pi$ as
\be
\Pi=\sum_{\Omega\in \mathcal{P}_{\sym}}c_{\Omega}\Pi_{\sym}\Omega\Pi_{\sym}
\ee
and impose that
\be
\tr(\Pi\rho)=N\tr(\Omega\otimes \mathbb{1}^{\otimes (l-k)}\rho),\quad\forall \rho\in\operatorname{Sym}(\mathcal{B}(\mathcal{H}^{\otimes l}))
\ee
Restricting $l\le n-1$, by \cref{lem:linearlyindependentinthesymmetricsubspace} $\Pi_{\sym}\Omega\Pi_{\sym}$ are linearly independent we have that $l=k$ and that $\Pi=N\Pi_{\sym}\Omega\Pi_{\sym}$. Since $0\le\Pi\le 1$ is a POVM element, the normalization is nothing but the operator norm of $\Pi_{\sym}\Omega\Pi_{\sym}$. Hence $\Pi=\frac{\Pi_{\sym}\Omega\Pi_{\sym}}{\|\Pi_{\sym}\Omega\Pi_{\sym}\|_{\infty}}$. However, for $k$ copies of a pure statevector, we have
\be
\tr(\Pi \psi^{\otimes k})=\frac{1}{\|\Pi_{\sym}\Omega\Pi_{\sym}\|_{\infty}}\tr(\Omega\psi^{\otimes k})\le \frac{1}{\|\Pi_{\sym}\Omega\Pi_{\sym}\|_{\infty}}
\ee
where we used that $\tr(\Omega\psi^{\otimes k})\le 1$ (see \cite{bittel2025completetheorycliffordcommutant}). This means to measure $\tr(\Omega\psi^{\otimes k})$ with additive error $\varepsilon$, we need to measure $\tr(\Pi\psi^{\otimes k})$ with an additive error $\frac{\epsilon}{\|\Pi_{\sym}\Omega\Pi_{\sym}\|_{\infty}}$, which requires $\Omega(\|\Pi_{\sym}\Omega\Pi_{\sym}\|_{\infty}^2\varepsilon^{-2})$ sample complexity. We are just left to lower bound $\|\Pi_{\sym}\Omega\Pi_{\sym}\|_{\infty}$.
Since $\Omega$ is proportional to a projector, we have that $\frac{\Pi_{\sym}\Omega\Pi_{\sym}}{\tr(\Omega\Pi_{\sym})}$ is a normalized state.  Using $\tr(O\rho)\le \|O\|_{\infty}$ for any state $\rho$ and operator $O$, we can lower bound the operator norm of $\Pi_{\sym}\Omega\Pi_{\sym}$ as
\be
\|\Pi_{\sym}\Omega\Pi_{\sym}\|_{\infty}\ge \frac{\tr(\Pi_{\sym}\Omega\Pi_{\sym}\Omega)}{\tr(\Omega\Pi_{\sym})}=\frac{\sum_{\pi,\sigma\in S_k}\tr(T_{\pi}\Omega T_{\sigma}\Omega)}{k!^2\tr(\Omega\Pi_{\sym})}
\ee
using that $\tr(\Omega T_\pi \Omega T_{\sigma})\ge 0$ for any $\pi,\sigma\in S_k$, we can select $\pi=e$ (the identity permutation) and lower bound
\be
\|\Pi_{\sym}\Omega\Pi_{\sym}\|_{\infty}\ge\frac{\sum_{\sigma\in S_k}\tr( T_{\sigma}\Omega^2)}{k!^2\tr(\Omega\Pi_{\sym})}\ge\frac{1}{k!}\frac{\tr(\Pi_{\sym}\Omega^2)}{\tr(\Omega\Pi_{\sym})}=\frac{\|\Omega\|_{\infty}}{k!}
\ee
where in the last step we used the fact that $\Omega^2=\|\Omega\|_{\infty}\Omega$ for projective monomials.

 Noting that for any projective monomials we have $\|\Omega\|_{\infty}= d^m$ completes the proof. Moreover, for Pauli monomials obeying $[\Pi_{\sym},\Omega]=0$ as, for example, primitive Pauli monomials, we can prove a tighter lower bound by choosing the normalized state $\frac{\Pi_{\sym}\Omega}{\tr(\Pi_{\sym}\Omega)}$ and lower bounding
\be
\|\Pi_{\sym}\Omega\Pi_{\sym}\|_{\infty}\ge \frac{\tr(\Pi_{\sym}\Omega\Pi_{\sym}\Omega)}{\tr(\Pi_{\sym}\Omega)}=\frac{\tr(\Pi_{\sym}\Omega^2)}{\tr(\Pi_{\sym}\Omega)}=\|\Omega\|_{\infty}\,.
\ee
For primitive Pauli monomials, i.e. stabilizer purities, we have $\|\Omega\|_{\infty}=d$. This concludes the proof.
\end{proof}

\subsection{Proof of \cref{th:efficeintmeasurementwithtrasnpose}}\label{proofth:efficeintmeasurementwithtrasnpose}
\begin{proof}
    According to \cref{lem:om_can_be_u_and_p} for any $\Omega$, there exists a partial transpose such that 
\begin{align}
    \Omega^{(t_u)}\in U(2^{nk})\,.
\end{align}
Using this we can build 
\begin{align}
    \Pi_{r}\coloneqq \frac{2\mathbb 1+\Omega^{(t_u)}+\Omega^{(t_u)\dagger}}{4}\quad,\quad  \Pi_{i}\coloneqq \frac{2\mathbb 1-i\Omega^{(t_u)}+i\Omega^{(t_u)\dagger}}{4}
\end{align}
which are both POVM elements. As such, we can reconstruct 
\begin{align}
    \tr(\Omega \psi^{\otimes k})&=\tr\left(\Omega^{(t_u)} \left(\psi^{\otimes k}\right)^{(t_u)}\right)\\
    &=\tr\left(2(\Pi_{r}+i\Pi_i-(2+2i)\mathbb 1 )  \left(\psi^{\otimes k}\right)^{(t_u)}\right)\\
    &=2\tr\left(\Pi_{r}\left(\psi^{\otimes k}\right)^{(t_u)}\right)+2\tr\left(\Pi_{r}\left(\psi^{\otimes k}\right)^{(t_u)}\right)-(1+i)\mathbb 1
\end{align}
As $\tr(\Omega \psi^{\otimes k})$ can be estimated with two POVM measurements applied to copies of the state and its transpose, the proof follows. 
\end{proof}

\subsection{Proof of \cref{lemma:additivegeneralizedpurities}}\label{prooflemma:additivegeneralizedpurities}
\begin{proof}
    From \cref{lem:measurableinthecommutant}, we know that every measurable magic monotone $\mathsf{P}(\psi)$ can be written as $\mathsf{P}(\psi)=\sum_{\Omega\in \mathcal{P}_{\sym}}c_n(\Omega)\tr(\Omega\psi^{\otimes k})$, where $c_n(\Omega)$ possibly depend on $n$. We now require that for any state $\psi_n,\phi_{n'}$ on $n,n'$ qubits respectively it holds that $\mathsf{P}(\psi_n\otimes \phi_{n'})=\mathsf{P}(\psi_n)\mathsf{P}(\phi_{n'})$. Since $\mathsf{P}(\cdot)$ is a magic monotone, then $\mathsf{P}(\sigma_n)=1$ (we can always rescale the maximum value to be $1$) for any $n\in\mathbb{N}$. We therefore have
    \be
\mathsf{P}(\psi_n\otimes \sigma_{n'})=\sum_{\Omega}c_{n+n'}(\Omega)\tr(\Omega\psi_{n}^{\otimes k})=\mathsf{P}(\psi_n)=\sum_{\Omega}c_{n}(\Omega)\tr(\Omega\psi_n^{\otimes k})
    \ee
Hence, we can express $\mathsf{P}(\psi_n)=\sum_{\Omega}c(\Omega)\tr(\Omega\psi^{\otimes k})$ with $c(\Omega)\coloneqq\lim_nc_n(\Omega)$. {Labeling Pauli monomials as $\Omega_i$ for $i=1,\ldots, \dim\operatorname{Com}(\mathcal{C}_n,k)$}, we can define for any $\psi$ a vector $v(\psi)$ with components  $v_i(\psi)=\tr(\Omega_i\psi^{\otimes k})$ and the vector space
\be
V\coloneqq\operatorname{span}\{v(\psi)\,:\,\ket{\psi}\in\mathcal{H}\}\subseteq \mathbb{C}^{\dim(\operatorname{Com}(\mathcal{C}_n,k))}
\ee
If $\dim V< \dim(\operatorname{Com}(\mathcal{C}_n,k))$, it means that there are linear dependencies in the components of the vector $v$. Hence, we can rewrite the linear dependent components of $v$ as a linear combination of the others as 
\be
v_{\alpha}(\psi)=\sum_{i=1}^{\dim V}A_{\alpha i}v_{i}(\psi),\quad \dim V<\alpha\le \dim\operatorname{Com}(\mathcal{C}_n,k)
\ee
In this way, we can rewrite the measurable monotone as
\be
\mathsf{P}(\psi)=\sum_{i=1}^{\dim(V)}c'_{i}v_i(\psi)
\ee
where $v_i(\psi)$ are all linearly independent. Let us now impose the multiplicativity condition
\be
\mathsf{P}(\psi\otimes \phi)=\mathsf{P}(\psi)\mathsf{P}(\phi)\implies \sum_{i=1}^{\dim V}c_i'v_{i}(\psi)v_{i}(\phi)=\sum_{i,j=1}^{\dim V}c_i'c_{j}'v_{i}(\psi)v_{j}(\phi)
\ee
Since now the vectors are linearly independent, it follows that necessarily $c_{i}'=\delta_{i\bar{i}}$ for some $\bar{i}$. Since a measurable magic monotone is necessarily real, it follows that also the stabilizer purity it corresponds to is real and hence corresponds to a expectation value of a hermitian monomial in the symmetric subspace. 
\end{proof}

\subsection{Proof of \cref{th1}}

\begin{proof}
Our objective is to show that for any $\Omega\not\in S_k$ and for a arbitrary mixed state $\rho$, we have
\begin{align}
        \left|\tr(\Omega \rho^{\otimes k})\right| \leq \frac{1}{d} \sum_P \tr(P \rho)^4.
    \end{align}
We proceed by an iterative argument. Let us define $\Omega' = \Omega$, and assume without loss of generality that all permutation components have been removed from $\Omega'$. This is justified because permutations do not alter the value of generalized purities for pure states. Therefore, we can focus on the essential non-permutation structure of $\Omega$.

    This simplification implies that the span of the vector representation $V$ of $\Omega'$ does not contain elements of Hamming weight 2. Consequently, any column in the support of $\Omega'$ must have Hamming weight at least 4.

    \begin{itemize}
        \item Case 1: $m(\Omega') = 1$\\
        In this case we have $k=2\alpha$ even. Therefore
        \begin{align}
            \left|\tr(\Omega' (\rho^{\otimes 2\alpha})^{  t_{\beta}})\right|=\left|\frac{1}{d} \sum_P \tr(P \rho)^{2\alpha} \cdot \left(\frac{\tr(P P^T)}{d}\right)\right|.
        \end{align}
        with $t_{\beta}$ being a arbitrary partial transposition. Since the expectation values $\tr(P \rho)$ are bounded in magnitude by 1, it follows that
        \begin{align}
            \left|\tr(\Omega' (\rho^{\otimes 2\alpha})^{  t_{\beta}}))\right|\leq \frac{1}{d} \sum_P \left|\tr(P \rho)^{2\alpha} \cdot \left(\frac{\tr(P P^T)}{d}\right)\right| \leq \frac{1}{d} \sum_P \tr(P \rho)^4 = P_4(\rho),
        \end{align}
        where $P_4(\rho)$ denotes the 4th moment (stabilizer purity) over the Pauli basis.

        \item Case 2: $\Omega'$ is neither unitary nor a projector\\
        We express $\Omega'$ in normal form as a product $\Omega' = \Omega_U \Omega_P$, where $\Omega_U$ is unitary and $\Omega_P$ is projective. Let $t_{\beta}$ a arbitrary partial transposition. Then,
        \begin{align}
            \left|\tr(\Omega' (\rho^{\otimes k})^{  t_{\beta}})\right|
            &= \left|\tr(\Omega_U \Omega_P (\rho^{\otimes k})^{  t_{\beta}})\right| \\
            &= \frac{1}{d^{m_P}} \left| \tr(\Omega_U \Omega_P^2 (\rho^{\otimes k})^{  t_{\beta}})) \right| \\
            &= \frac{1}{d^{m_P}} \left| \tr(\Omega_P' \Omega_U \Omega_P (\rho^{\otimes k})^{  t_{\beta}}) \right| \\
            &\leq \frac{1}{d^{m_P}} \left\| \Omega_P \sqrt{(\rho^{\otimes k})^{  t_{\beta}}} \right\|_2 \left\| \Omega_P' \sqrt{(\rho^{\otimes k})^{  t_{\beta}}} \right\|_2 \\
            &= \frac{1}{d^{m_P}} \sqrt{\tr(\Omega_P^2 (\rho^{\otimes k})^{  t_{\beta}}) \tr({\Omega_P'}^2 (\rho^{\otimes k})^{  t_{\beta}})} \\
            &= \sqrt{\tr(\Omega_P (\rho^{\otimes k})^{  t_{\beta}}) \tr(\Omega_P' (\rho^{\otimes k})^{  t_{\beta}})}.
        \end{align}
        where $m_P$ is the order of the projective part. We repeatedly use the fact that, for $\Omega_P$ being a projective monomial, it holds that $\Omega_P^2 = d^{m_P}\Omega_P$.
        Thus, if both $\tr(\Omega_P (\rho^{\otimes k})^{  t_{\beta}})$ and $\tr(\Omega_P' (\rho^{\otimes k})^{  t_{\beta}})$ are individually bounded by $P_4$, then $\Omega'$ is as well.
        We follow the argument for both $\Omega'=\Omega_P$ and $\Omega'=\Omega_P'$ seperately.
        \item Case 3: $\Omega'$ is unitary\\
        In this case, we can map $\Omega'$ to a form that includes a projective component by exploiting the invariance of the trace under partial transposition and the fact that $\rho^{\otimes k}$ is product over the tensor copies. In particular,$ \tr(\Omega' \rho^{\otimes k}) = \tr(\Omega^{\prime\, t_{\beta}} (\rho^{\otimes k})^{ t_{\beta}})$ with $t_{\beta}$ being a partial transposition. By \cref{lem:om_can_be_u_and_p}, there exists a partial transposition that transforms $\Omega'$ into a (generalized) monomial with at least one projective component.
 This brings us back to Case 2.

        \item Case 4: $\Omega'$ is a projector\\
        We use the same strategy as in Case 3: by \cref{lem:om_can_be_u_and_p}, there exists a partial transposition that transforms $\Omega'$ into a unitary monomial. This reduces Case 4 to Case 3.
    \end{itemize}

    By iteratively reducing the effective order $m(\Omega')$ of the monomial through the above cases, we ultimately reduce the problem to the case $m = 1$.

    \vspace{1ex}
\begin{remark}
    The above strategy not only provides a general bound for all generalized stabilizer purities but can also be adapted to obtain tighter bounds for specific choices of $\Omega$.
\end{remark}

\end{proof}
\subsection{Proof of \cref{prop:nolowerbound}}\label{proofprop:nolowerbound}
\begin{proof}
    We consider the generalized stabilizer purity arising from the following Pauli monomial
\be
\Omega_{4,4,4,4}=\frac{1}{d^4}\sum_{P,Q,K,L\in\mathbb{P}_n}(PL)\otimes P\otimes (PQKL)\otimes (PQK)\otimes (QL)\otimes Q\otimes (KL)\otimes K
\ee
belonging to $\operatorname{Com}(\mathcal{C}_n,8)$. We consider $n=1$ and the state
\be
G=\frac{\mathbb{1}}{2}+\frac{1}{2\sqrt{3}}(X+Y+Z)
\ee
called the \textit{Golden state}. A direct calculation shows that $P_{\Omega_{4,4,4,4}}(G)=\tr(\Omega_{4,4,4,4}G^{\otimes 8})=0$, while $P_{\alpha}(G)\neq 0$ for any value of $\alpha$. Hence there is no positive function $f>0$ for which it holds that $P_{\Omega_{4,4,4,4}}(G)\ge f(P_{\alpha}(G))$. Since generalized purities are multiplicative, we conclude that for any value of $n$ this counterexample holds true.
\end{proof}

\subsection{Proof of \cref{th4}}
\begin{proof}
    Our objective is to bound $\|\Phi_{\haar}(\psi^{\otimes k})-\Phi_{\cl}(\psi^{\otimes k})\|_1$. Using \cref{sec:cliffordweingartencalculus}, and triangle inequality we have
    \be
\|\Phi_{\haar}(\psi^{\otimes k})-\Phi_{\cl}(\psi^{\otimes k})\|_1&\le \sum_{\Omega\in\mathcal{P}\setminus S_k}\sum_{\Omega'\in S_k}|(\mathcal{W}^{-1})_{\Omega,\Omega'}|\tr(\Omega^{\dag}\psi^{\otimes k})\|\Omega'\|_1+\sum_{\Omega\in S_k}\sum_{\Omega'\in\mathcal{P}\setminus S_k}|(\mathcal{W}^{-1})_{\Omega,\Omega'}|\tr(\Omega^{\dag}\psi^{\otimes k})\|\Omega'\|_1\\&+\sum_{\Omega,\Omega'\in\mathcal{P}\setminus S_k}|(\mathcal{W}^{-1})_{\Omega,\Omega'}|\tr(\Omega^{\dag}\psi^{\otimes k})\|\Omega'\|_1\\
&\le 3d^{k}|\mathcal{P}|^2\max_{\Omega\neq\Omega'}|(\mathcal{W}^{-1})_{\Omega,\Omega'}|+\sum_{\Omega\in\mathcal{P}}|(\mathcal{W}^{-1})_{\Omega,\Omega}|\tr(\Omega\psi^{\otimes k})\|\Omega\|_1\\
&\le \frac{21|\mathcal{P}|^4}{d}+\frac{1}{d^k}\sum_{\Omega\in \mathcal{P}\setminus S_k}|\tr(\Omega \psi^{\otimes k})|\|\Omega\|_1
    \ee
where we applied \cref{lem:asymptoticweingarten}. We can further restrict the sum noticing that for projective monomials different from the identity $\Omega_P$ it holds that $\|\Omega_P\|_1\le d^{k-1}$. Thus we can restrict the sum only over unitary Pauli monomials
\be
\|\Phi_{\haar}(\psi^{\otimes k})-\Phi_{\cl}(\psi^{\otimes k})\|_1\le \frac{21|\mathcal{P}|^4}{d}+\frac{|\mathcal{P}|}{d}+\sum_{\Omega_U\in\mathcal{P}_U\setminus S_k}|\tr(\Omega \psi^{\otimes k})|\,.
\ee
We can upper bound $\frac{21|\mathcal{P}|^4}{d}+\frac{|\mathcal{P}|}{d}\le \frac{22|\mathcal{P}|^4}{d}\le 2^{2k^2-n}$, which holds for any $k\ge 4$. Noticing that $P_{\Omega}=|\tr(\Omega \psi^{\otimes k})|$, and using \cref{th1} concludes the proof for the upper bound. 

The lower bound is obtained by noticing that
\be
\mathbb{E}_{C\sim \mathcal{C}_n}P_{6}(C\ket{\psi})-\mathbb{E}_{\psi\sim\haar(n)}P_{6}(\ket{\psi})\le \|\Phi_{\cl}(\psi^{\otimes 6})-\Phi_{\haar}(\psi^{\otimes 6})\|_1\le \|\Phi_{\cl}(\psi^{\otimes k})-\Phi_{\haar}(\psi^{\otimes k})\|_1
\ee
From Eq. (338) of Ref.~\cite{bittel2025completetheorycliffordcommutant}, we can compute the Haar average of $P_{6}$ as
\be
\mathbb{E}_{\psi\sim\haar(n)}P_{6}(\ket{\psi})=\frac{\tr(\Omega_6\Pi_{\sym})}{\tr(\Pi_{\sym})}=\frac{(d+23)}{(d+3)(d+5)}\le \frac{1}{d}+\frac{16}{d^2}
\ee
Hence
\be
\|\Phi_{\cl}(\psi^{\otimes k})-\Phi_{\haar}(\psi^{\otimes k})\|_1\ge 2^{-2M_{3}(\psi)}-\frac{1}{d}-\frac{16}{d^2}
\ee
Putting all together, we thus have
\be
2^{-2M_{3}(\psi)}-2^{-n}-2^{-2(n+2)}\le q_{\text{succ}}^{(k)}\le \frac{1}{2}+\frac{1}{2}2^{k^2/2}2^{-M_{2}(\psi)}+\frac{1}{2}2^{2k^2-n}
\ee
\end{proof}

\stoptoc

\end{document}